\documentclass[12pt]{article}
\usepackage{amsmath}
\usepackage{graphicx}
\usepackage{enumerate}
\usepackage{xr-hyper}
\usepackage{natbib}
\usepackage{url} 
\usepackage{subfig}
\usepackage{comment}
\usepackage{nicematrix}

\newcommand{\blind}{1}

\makeatletter
\newcommand*{\addFileDependency}[1]{
  \typeout{(#1)}
  \@addtofilelist{#1}
  \IfFileExists{#1}{}{\typeout{No file #1.}}
}
\makeatother

\usepackage{amssymb,amsfonts,mathrsfs,amsmath,amsthm,mathtools,bm,dsfont, thmtools,bbm}\allowdisplaybreaks
\usepackage[dvipsnames]{xcolor}
\usepackage{array, graphicx, graphics,float,multirow,tabularx,tikz,pgfplots, longtable,threeparttable}
\usepackage{setspace}
\usepackage[colorlinks,citecolor=blue,urlcolor=blue, runcolor=blue, filecolor=cyan]{hyperref}
\usepackage{footnotebackref}
\usepackage{bibentry, natbib} 
\usepackage{paralist}  
\usepackage{appendix} 
\usepackage{mathrsfs}
\usepackage{enumitem}
\usepackage{authblk}
\usepackage{caption}
\usepackage{titlesec}
\usepackage{textcase,relsize}
\usepackage{booktabs}

\declaretheoremstyle[notefont=\bfseries,notebraces={}{},%
    headpunct={},postheadspace=1em]{mystyle}
\declaretheorem[style=mystyle,numbered=no,name=Assumption]{asmp-hand}
\declaretheorem[style=mystyle,numbered=no,name=Condition]{cond-hand}
\declaretheorem[style=mystyle,numbered=no,name=Example]{exmp-hand}

	\def \calA {\mathcal{A}}		
			
	\def \calC {\mathcal{C}}		
	\def \calD {\mathcal{D}}		
\def \bbE {\mathbb{E}}	\def \calE {\mathcal{E}}		
	\def \calF {\mathcal{F}}		
			
	\def \calH {\mathcal{H}}

	\def \calM {\mathcal{M}}		
\def \bbN {\mathbb{N}}	\def \calN {\mathcal{N}}		
	\def \calO {\mathcal{O}}		
	\def \calP {\mathcal{P}}		
			
\def \bbR {\mathbb{R}}	\def \calR {\mathcal{R}}		
			
	\def \calT {\mathcal{T}}

	\def \calW {\mathcal{W}}		
	\def \calX {\mathcal{X}}		
			
	\def \calZ {\mathcal{Z}}

\def \conP {\overset{\bbP}\longrightarrow}

\def \diag {\textrm{diag}}

\newcommand{\norm}[1]{\left\Vert#1\right\Vert}
\newcommand{\abs}[1]{\left\vert#1\right\vert}

\DeclareMathOperator*{\argmin}{arg\,min}

\usepackage{subfig}

\theoremstyle{definition}

\newtheorem{assumption}{Assumption}[section]
\newtheorem*{assumption*}{Assumption}

\theoremstyle{plain}
\newtheorem{theorem}{Theorem}[section]

\newtheorem{lemma}[theorem]{Lemma}
\newtheorem{corollary}[theorem]{Corollary}

\def \conP {\to_p}

\def \bbE {\mathbb{E}}	\def \calE {\mathcal{E}}		
\def \calP {\mathcal{P}}

\def \calW {\mathcal{W}}

\allowdisplaybreaks

\usepackage{algorithm}
\usepackage{algpseudocode}

\makeatletter
\newenvironment{breakablealgorithm}
{
	\begin{center}
		\refstepcounter{algorithm}
		\hrule height.8pt depth0pt \kern2pt
		\renewcommand{\caption}[2][\relax]{
			{\raggedright\textbf{\fname@algorithm~\thealgorithm} ##2\par}%
			\ifx\relax##1\relax 
			\addcontentsline{loa}{algorithm}{\protect\numberline{\thealgorithm}##2}%
			\else 
			\addcontentsline{loa}{algorithm}{\protect\numberline{\thealgorithm}##1}%
			\fi
			\kern2pt\hrule\kern2pt
		}
	}{
		\kern2pt\hrule\relax
	\end{center}
}
\makeatother

\usepackage{minitoc}

\begin{document}

\doparttoc 
\faketableofcontents 

\def\spacingset#1{\renewcommand{\baselinestretch}%
{#1}\small\normalsize} \spacingset{1}


\if1\blind
{
	\title{Robust Matrix Estimation with Side Information\thanks{This research was supported by NSF grant DMS-2052955.}
	}
	\author{Anish Agarwal,\qquad Jungjun Choi\footnote{Address for Correspondence: Computer Science and Statistics Department, Tyler Hall, 9 Greenhouse Road, Kingston, RI, 02881, Email: jungjun.choi@uri.edu},\qquad Ming Yuan\\
		Department of IEOR, Columbia University\\
            Department of CS \& Statistics, University of Rhode Island\\
        Department of Statistics, Columbia University}
	\maketitle
} \fi

\if0\blind
{
  \title{Robust Matrix Estimation with Side Information}
	\maketitle
} \fi

\smallskip

\spacingset{1.12} 

\begin{abstract}
We introduce a flexible framework for high-dimensional matrix estimation to incorporate side information for both rows and columns. Existing approaches, such as inductive matrix completion, often impose restrictive structure---for example, an exact low-rank covariate interaction term, linear covariate effects, and limited ability to exploit components explained only by one side (row or column) or by neither---and frequently omit an explicit noise component. To address these limitations, we propose to decompose the underlying matrix as the sum of four complementary components: (possibly nonlinear) interaction between row and column characteristics; row characteristic-driven component, column characteristic-driven component, and residual low-rank structure unexplained by observed characteristics. By combining sieve-based projection with nuclear-norm penalization, each component can be estimated separately and these estimated components can then be aggregated to yield a final estimate. We derive convergence rates that highlight robustness across a range of model configurations depending on the informativeness of the side information. We further extend the method to partially observed matrices under both missing-at-random and missing-not-at-random mechanisms, including block-missing patterns motivated by causal panel data. Simulations and a real-data application to tobacco sales show that leveraging side information improves imputation accuracy and can enhance treatment-effect estimation relative to standard low-rank and spectral-based alternatives.
\end{abstract}

\noindent{\it Keywords: Matrix completion, Nuclear norm penalization, Causal inference, Non-linear estimation, Covariate information}
\vfill

\part{} 


\spacingset{1.7}

\section{Introduction}

Recent technological progress has made it possible to gather and process high-volume data that are conveniently organized in matrix form, often with both dimensions scaling up rapidly. Accordingly, high-dimensional matrix estimation problems such as matrix denoising and matrix completion have attracted considerable attention, and many impressive results have been obtained from both statistical and computational perspectives. However, although side information is often available in addition to the target outcome data, traditional approaches typically use only the outcome data for matrix estimation. Incorporating side information can enrich the underlying model and improve estimation and prediction accuracy. As our ability to access auxiliary covariate data continues to grow, developing matrix estimation methods that effectively leverage side information has become an important and timely research direction.

A number of computational algorithms, along with their statistical properties, have been proposed recently. Arguably, the most popular model that incorporates additional information in matrix estimation is the Inductive Matrix Completion (IMC) model  \citep[e.g.,][]{xu2013speedup,jain2013provable,zhang2018fast}. The standard IMC model takes the form:
$$
Y = M = X L Z^\top
$$
where $Y = [y_{it}]_{i \leq N,, t \leq T}$ is the outcome matrix, $X = [x_1, \ldots, x_N]^\top$ is the $N \times d_1$ row-feature matrix, $Z = [z_1, \ldots, z_T]^\top$ is the $T \times d_2$ column-feature matrix, and $M$ is the target matrix. Here, $L$ is assumed to be a low-rank $d_1 \times d_2$ matrix. A typical estimation approach is to solve
$$
\min_{L \in \bbR^{d_1 \times d_2}} \norm{\calP_{\Omega} (XLZ^\top - Y)}_F^2 + \lambda \norm{L}_{*},
$$
where $||\cdot||_*$ denotes the nuclear norm, $\calP_{\Omega}(A) = \Omega \circ A$, and $\Omega$ is the $N \times T$ indicator matrix for observability in matrix completion.

Although IMC is a popular and useful approach to matrix estimation with side information, it has several important limitations. First, it requires that the features be present on both sides and also interact linearly. Moreover, it predicates upon the informativeness of both row and column features and can break down if features are weak or irrelevant. Several extensions of IMC have been proposed in recent years to address these shortcomings. \cite{ledent2023generalization} incorporate a noise component $E$ into the IMC model and derive bounds on the expected $\ell$-risk. \cite{zhong2019provable} allow a nonlinear relationship between $(X, Z)$ and $M$. \cite{wang2018high} consider settings in which the rank of $L$ can be large. Notably, \cite{chiang2015matrix} propose the so-called “dirty” IMC model, which augments the standard IMC formulation with an additional low-rank term $R$ and estimates $(L, R)$ by solving
$$
\min_{L ,R } \norm{\calP_{\Omega} (XLZ^\top + R - Y)}_F^2 + \lambda_1 \norm{L}_{*} + \lambda_2 \norm{R}_{*}.
$$
However, this model does not include a noise term $E$ and components explained only by one-sided characteristics. In addition, it does not allow a nonlinear relationship between $(X, Z)$ and $M$, and it still requires $L$ to be low-rank. Overall, each extension addresses only part of the limitations and still leaves other issues unresolved.

Another notable line of research on matrix completion with covariates includes \cite{mao2019matrix} and \cite{ma2025statistical}. These papers consider the model
$$
Y = X B^\top + R + E
$$
where $B$ is an unknown coefficient matrix and $R$ is a low-rank matrix. Because this approach does not incorporate column characteristics $Z$, it cannot capture interaction terms involving both $X$ and $Z$ (such as $XLZ^\top$) or components explained solely by $Z$. In addition, it does not allow for nonlinear effects. As a result, the model still has some limitations. 

Lastly, a related strand of work studies PCA or factor analysis in settings without missing data \citep[see, e.g.,][]{fan2016projected,chiang2016robust,niranjan2017provable,zhu2016personalized,xue2017side}. For example, \cite{fan2016projected} consider the model, 
$Y = (G(X) + \Gamma) F^\top + E = G(X) F^\top + \Gamma F^\top + E$, where $G(X)$ is a part of loading defined by an unknown function of $X$, and $F$ denotes unobserved factors. In contrast, \cite{zhu2016personalized} study the model, $Y = XB^\top +AZ^\top + R + E$, where $R$ is low-rank. The former framework cannot capture interaction terms involving both $X$ and $Z$ (such as $XLZ^\top$) or components explained solely by $Z$, whereas the latter does not include an interaction term between $X$ and $Z$. Moreover, this model primarily emphasizes linear relationships.

To overcome the limitations of existing approaches, we consider the following model:
\begin{gather}\label{eq:model}
Y = M + E, \qquad M = M_1 + M_2 + M_3 + M_4,\\
\nonumber M_1 = G_1(X) Q_1(Z)^\top , \quad M_2 = G_2(X) V_1^\top, \quad M_3 = W_1 Q_2(Z)^\top , \quad M_4 = W_2 V_2^\top,
\end{gather}
where $G_1, G_2, Q_1,$ and $Q_2$ are unknown matrix-valued functions, and $W_1, W_2, V_1,$ and $V_2$ are unobserved matrices. This model is more general and nests the above models. For example, the models in \cite{xu2013speedup,jain2013provable,wang2018high} correspond to the special case $M = M_1$, and the model in \cite{chiang2015matrix} corresponds to $M = M_1 + M_4$. In addition, the models in \cite{fan2016projected,mao2019matrix,ma2025statistical} can be viewed as special cases of $M = M_2 + M_4$. For instance, the model in \cite{fan2016projected} can be represented as $M = M_2 + M_4$ with $V_1 = V_2$. Hence, our estimation approach under this model is less likely to suffer from model misspecification. Moreover, if the data contain components that existing models do not account for, our estimator is expected to perform better than estimators based on those restricted models. As discussed in Section \ref{sec:asymp_obs}, the convergence rates of our estimator demonstrate the robustness of our method across models, and the simulation results in Section \ref{sec:simul} are consistent with these theoretical findings.

Our estimation is based on a sieve projection method. Using projection matrices constructed from sieve bases for $X$ and $Z$, we estimate each component of $M$ separately and then obtain an estimator of $M$ by summing these estimates. Thanks to the sieve projection, our method can accommodate potentially nonlinear effects of $X$ and $Z$ on $M$. In addition, estimating each component separately allows us to fully exploit the model structure in \eqref{eq:model}. Together, these features make our estimator more likely to outperform methods based on more restrictive models when the data contain components that those restrictive models do not account for.

Another important feature of our approach is the use of nuclear-norm penalization, which corresponds to a soft-thresholding procedure. Hence, if some of $M_2$, $M_3$, and $M_4$ are exactly zero, then our estimators for those components are also exactly zero with high probability. This property enhances the robustness of our estimator.

In contrast, if we use a spectral method to estimate each component, we must estimate the rank of each part, and existing rank estimators may produce incorrect (nonzero) estimates when the corresponding component is weak due to a low signal-to-noise ratio. As a result, spectral methods may perform poorly when some of $M_2$, $M_3$, and $M_4$ are zero or close to zero. By comparison, our nuclear-norm–penalized estimator does not require estimating the rank of each component or the signal strength of each component; therefore, small values of $M_2$, $M_3$, and $M_4$ do not pose a problem.

Another important contribution of this paper is that we also consider a setting with missing entries, where the missingness is not at random. While many papers use side information for imputation under MAR (missing at random), to the best of our knowledge, no existing matrix completion work incorporates side information under MNAR (missing not at random). Since the seminal work of \cite{athey2021matrix}, which demonstrated that matrix completion techniques can be very useful for causal panel data models, matrix completion has become a popular tool for estimating unobserved potential outcomes under the untreated (control) condition. However, the potential-outcome matrix under the untreated condition usually exhibits a missingness pattern that does not follow random missingness. Consequently, matrix completion under MNAR—and its applications to causal inference—has been actively studied recently (see, e.g., \cite{athey2021matrix,bai2021matrix,agarwal2023causal,choi2024matrix,yan2024entrywise}). We propose a novel matrix completion method that leverages side information under MNAR. As shown in our real-data experiment in Section \ref{sec:real_data}, our method outperforms standard matrix-completion approaches in imputing unobserved potential outcomes and demonstrates its usefulness for treatment-effect estimation.

The remainder of this paper is organized as follows. Section \ref{sec:model_est} introduces our model and our estimation method, which uses sieve projection with nuclear norm penalization. Section \ref{sec:asymp_obs} presents asymptotic error bounds for the estimator and discusses the robustness of our method across different models. Section \ref{sec:ext_missing} extends our estimation strategy to the case in which the outcome matrix is partially observed. Importantly, we consider the MNAR setting as well as the MAR setting. Section \ref{sec:simul} presents numerical studies using simulated and real data to demonstrate the advantages of our method. All proofs are relegated to the supplement due to space limitations.

\section{Model and Estimation}\label{sec:model_est}

In this paper, we consider the following panel model:
$$
Y = M + E, \qquad M = M_1 + M_2 + M_3 + M_4,
$$
where \(Y = (y_{it})_{i\le N,\, t\le T}\) is the outcome matrix, \(E = (\epsilon_{it})_{i\le N,\, t\le T}\) is the noise matrix, and \(M = (m_{it})_{i\le N,\, t\le T}\) is the matrix of interest. We decompose \(M\) into four parts: (i) \(M_1\), a component well explained by both \(X\) and \(Z\); (ii) \(M_2\), a component explained by \(X\) but irrelevant to \(Z\); (iii) \(M_3\), a component explained by \(Z\) but irrelevant to \(X\); and (iv) \(M_4\), a component irrelevant to both \(X\) and \(Z\), where \(X = (x_i)_{i\le N}\) and \(Z = (z_t)_{t\le T}\) are observable characteristics corresponding to the row and column indices, respectively.

More specifically, each part can be represented as
\begin{gather}\label{eq:decomposition}
M_1 = G_1(X) Q_1(Z)^\top,\quad
M_2 = G_2(X) V_1^\top,\quad
M_3 = W_1 Q_2(Z)^\top,\quad
M_4 = W_2 V_2^\top,
\end{gather}
where \(G_1(X) = (g_{1,k}(x_i))_{i\le N,\, k\le K_1}\), \(G_2(X) = (g_{2,k}(x_i))_{i\le N,\, k\le K_2}\), \(Q_1(Z) = (q_{1,k}(z_t))_{t\le T,\, k\le K_1}\), and \(Q_2(Z) = (q_{2,k}(z_t))_{t\le T,\, k\le K_3}\) for some unknown functions \(g_{1,k}(\cdot)\), \(g_{2,k}(\cdot)\), \(q_{1,k}(\cdot)\), and \(q_{2,k}(\cdot)\). Here, \(W_1 = (w_{1,ik})_{i\le N,\, k\le K_3}\) and \(W_2 = (w_{2,ik})_{i\le N,\, k\le K_4}\) capture the components not explained by \(X\), while \(V_1 = (v_{1,tk})_{t\le T,\, k\le K_2}\) and \(V_2 = (v_{2,tk})_{t\le T,\, k\le K_4}\) capture the components not explained by \(Z\). This model is general and encompasses many existing models.

\paragraph{Estimation.}

To properly accommodate and exploit the structure of our model in \eqref{eq:decomposition}, we propose estimating \(M\) using a sieve projection method. For two sets of basis functions \(\{\phi_1(x), \ldots, \phi_{J}(x)\}\) and \(\{\psi_1(z), \ldots, \psi_{J}(z)\}\) (e.g., B-splines, Fourier series, wavelets, or polynomial series), define
\begin{align*}
\boldsymbol{\phi}(x_i)
&= \big[ \phi_1(x_{i1}), \ldots, \phi_{J}(x_{i1}), \ldots, \phi_1(x_{id_1}), \ldots, \phi_{J}(x_{id_1}) \big]^\top \in \mathbb{R}^{J d_1}, \\
\boldsymbol{\psi}(z_t)
&= \big[ \psi_1(z_{t1}), \ldots, \psi_{J}(z_{t1}), \ldots, \psi_1(z_{td_2}), \ldots, \psi_{J}(z_{td_2}) \big]^\top \in \mathbb{R}^{J d_2},
\end{align*}
where \(d_1\) and \(d_2\) are the dimensions of \(x_i\) and \(z_t\), respectively.
The corresponding projection matrices are
\[
P_X = \Phi(X)\big(\Phi(X)^{\top}\Phi(X)\big)^{-1}\Phi(X)^{\top},
\qquad
P_Z = \Psi(Z)\big(\Psi(Z)^{\top}\Psi(Z)\big)^{-1}\Psi(Z)^{\top},
\]
where \(\Phi(X) = \big[ \boldsymbol{\phi}(x_1), \ldots, \boldsymbol{\phi}(x_N) \big]^\top\) and
\(\Psi(Z) = \big[ \boldsymbol{\psi}(z_1), \ldots, \boldsymbol{\psi}(z_T) \big]^\top\).
Note that as long as \(g_{1,k}(\cdot)\), \(g_{2,k}(\cdot)\), \(q_{1,k}(\cdot)\), and \(q_{2,k}(\cdot)\) are sufficiently smooth, for any \(\iota \in (1,2)\) we have
\begin{gather}\label{eq:projection}
P_X G_\iota(X) \approx G_\iota(X), \qquad P_Z Q_\iota(Z) \approx Q_\iota(Z).
\end{gather}
Moreover, \(\|P_X E P_Z\|_F\) can be much smaller than \(\|E\|_F\) due to the orthogonality between \((X,Z)\) and \(E\). Leveraging this property, we propose estimating \(M\) as follows.

\begin{breakablealgorithm}
\caption{Estimation procedure}
\label{alg:estimation_obs}
\begin{algorithmic}
\noindent \textbf{Step 1:} Compute $\widehat{M}_1 = P_X Y P_Z$. \\

\textbf{Step 2:} Compute the following nuclear-norm-penalized estimators:
\begin{gather*}
\widehat{M}_2 \coloneqq \arg\min_{A \in \bbR^{N \times T}}
\big\| P_X Y (I_T - P_Z) - A \big\|_F^2 + \nu_2 \|A\|_*, \\
\widehat{M}_3 \coloneqq \arg\min_{A \in \bbR^{N \times T}}
\big\| (I_N - P_X) Y P_Z - A \big\|_F^2 + \nu_3 \|A\|_*, \\
\widehat{M}_4 \coloneqq \arg\min_{A \in \bbR^{N \times T}}
\big\| (I_N - P_X) Y (I_T - P_Z) - A \big\|_F^2 + \nu_4 \|A\|_*,
\end{gather*}
where $\nu_2 = C_2 \sqrt{T}$, $\nu_3 = C_3 \sqrt{N}$, and $\nu_4 = C_4 \sqrt{N+T}$ for some sufficiently large constants $C_2, C_3, C_4 > 0$. \\

\textbf{Step 3:} Form the final estimator $\widehat{M} = \widehat{M}_1 + \widehat{M}_2 + \widehat{M}_3 + \widehat{M}_4$.
\end{algorithmic}
\end{breakablealgorithm}

To understand how this estimator works, note that by \eqref{eq:projection} and basic properties of nuclear-norm penalization, we have
\begin{align*}
 \widehat{M}_1 &\approx G_1 Q_1^\top + G_2 V_1^\top P_Z + P_X W_1 Q_2^\top + P_X W_2 V_2^\top P_Z, \\
 \widehat{M}_2 &\approx G_2 V_1^\top (I_T - P_Z) + P_X W_2 V_2^\top (I_T - P_Z), \\
 \widehat{M}_3 &\approx (I_N - P_X) W_1 Q_2^\top + (I_N - P_X) W_2 V_2^\top P_Z, \\
 \widehat{M}_4 &\approx (I_N - P_X) W_2 V_2^\top (I_T - P_Z),
\end{align*}
under suitable conditions on the noise and on the smoothness of \(g_{1,k}(\cdot)\), \(g_{2,k}(\cdot)\), \(q_{1,k}(\cdot)\), and \(q_{2,k}(\cdot)\). Importantly, the terms involving \(P_X W_\iota\) or \(P_Z V_\iota\) cancel out when we sum the four estimators. Hence, without imposing any orthogonality conditions between \(X\) and \(W\) (or between \(Z\) and \(V\)), our final estimator \(\widehat{M}\) can estimate \(M\) well.

In addition, because nuclear-norm penalization acts as a thresholding estimator, when some (or all) of \(M_2\), \(M_3\), and \(M_4\) are zero or sufficiently small, it helps us obtain a tighter bound.

\section{Asymptotic Results}\label{sec:asymp_obs}

In this section, we present the convergence rate of our estimator. We begin by imposing the following conditions.

\begin{assumption}[Noise]\label{asp:noise}
The random variables \((\epsilon_{it})_{i \leq N,\, t \leq T}\) are independent, mean-zero, sub-Gaussian, and satisfy
\(\mathbb{E}[\epsilon_{it}^2] \leq \sigma^2 \leq C_1\) and \(\mathbb{E}[\exp(s\epsilon_{it})] \leq \exp(C_2 s^2 \sigma^2)\) for all \(s \in \mathbb{R}\), for some constants \(C_1, C_2 > 0\). In addition, \((\epsilon_{it})_{i \leq N,\, t \leq T}\) are independent of \(X\) and \(Z\).
\end{assumption}

The independence and sub-Gaussianity assumptions are used to derive tight bounds for
\(\|P_X E P_Z\|\), \(\|P_X E\|\), and \(\|E P_Z\|\).
We can generalize this condition to weakly dependent noise with suitable moment conditions, at the cost of additional \(J\)-dependent terms in the bound in Theorem \ref{thm:convergence_rate_obs}.

\begin{assumption}[Basis functions]\label{asp:basis}
(i) There exist constants \(c, C > 0\) such that, with probability approaching one,
\begin{gather*}
c < \lambda_{\min}\!\big(N^{-1}\Phi(X)^\top \Phi(X)\big)
\le \lambda_{\max}\!\big(N^{-1}\Phi(X)^\top \Phi(X)\big) < C,\\
c < \lambda_{\min}\!\big(T^{-1}\Psi(Z)^\top \Psi(Z)\big)
\le \lambda_{\max}\!\big(T^{-1}\Psi(Z)^\top \Psi(Z)\big) < C,
\end{gather*}
where \(\lambda_{\max}(A)\) and \(\lambda_{\min}(A)\) denote the largest and smallest singular values of \(A\), respectively.
(ii) \(\max_{j \leq J,\, i \leq N,\, l \leq d_1} \mathbb{E}[\phi_{j}(x_{il})^4] < \infty\) and
\(\max_{j \leq J,\, t \leq T,\, l \leq d_2} \mathbb{E}[\psi_{j}(z_{tl})^4] < \infty\).
\end{assumption}

This condition is standard in the sieve-estimation literature (e.g., \cite{fan2016projected,chen2023semiparametric}). Because we focus on the case where \(J d_1 \ll N\) and \(J d_2 \ll T\), it follows from the law of large numbers and therefore is not overly restrictive.

\begin{assumption}[Sieve approximation]\label{asp:sieve}
(i) There exist constants \(\gamma_1^G,\gamma_2^G,\gamma_1^Q,\gamma_2^Q \geq 2\) such that, for some sieve coefficient vectors \(b_{1,k}, b_{2,k} \in \bbR^{J d_1}\) and \(a_{1,k}, a_{2,k} \in \bbR^{J d_2}\), the sieve approximations satisfy
\[
\sup_{x \in \calX} \big| g_{\iota,k}(x) - b_{\iota,k}^\top \boldsymbol{\phi}(x) \big|
= O\!\left(J^{-\gamma_\iota^G}\right), 
\qquad
\sup_{z \in \calZ} \big| q_{\iota,k}(z) - a_{\iota,k}^\top \boldsymbol{\psi}(z) \big|
= O\!\left(J^{-\gamma_\iota^Q}\right),
\]
where \(\calX\) and \(\calZ\) are the supports of \(x_i\) and \(z_t\), respectively. \\
(ii) The sieve dimension \(J\) satisfies
\begin{gather*}
\sqrt{T}\,K_1 / J^{\gamma_1^G} \rightarrow 0, \qquad 
\sqrt{N}\,K_1 / J^{\gamma_1^Q} \rightarrow 0, \qquad 
\max\{\sqrt{N},\sqrt{T}\}\, K_2 / J^{\gamma_2^G} \rightarrow 0, \\
\max\{\sqrt{N},\sqrt{T}\}\, K_3 / J^{\gamma_2^Q} \rightarrow 0 .
\end{gather*}
\end{assumption}

Condition (i) is a standard assumption in sieve estimation. For example, if \(g_{\iota,k}(\cdot)\) has an additive form $g_{\iota,k}(x_i) = \sum_{l=1}^{d_1} g_{\iota,kl}(x_{il})$
and each \(g_{\iota,kl}(\cdot)\) belongs to the H\"older class \(\calH(\rho_\iota^G, \tau_\iota^G)\), where
\[
\calH(\rho,\tau)
= \left\{ h : \big| h^{(\rho)}(s) - h^{(\rho)}(t) \big| \leq C |s-t|^\tau \right\}
\]
for some \(C>0\), then \(\gamma_\iota^G = \rho_\iota^G + \tau_\iota^G\) for typical choices of basis functions (see, e.g., \cite{chen2007large}). On the other hand, condition (ii) requires sufficient smoothness of the functions \(g_{\iota,k}(\cdot)\) and \(q_{\iota,k}(\cdot)\). Note that \(\gamma_\iota^G\) and \(\gamma_\iota^Q\) can be viewed as smoothness parameters for \(g_{\iota,k}(\cdot)\) and \(q_{\iota,k}(\cdot)\), respectively. Therefore, if \(g_{\iota,k}(\cdot)\) and \(q_{\iota,k}(\cdot)\) are sufficiently smooth, then \(\gamma_\iota^G\) and \(\gamma_\iota^Q\) will be large, and condition (ii) can be satisfied even when \(J\) increases slowly.

Lastly, we impose the following moment conditions.

\begin{assumption}[Moments]\label{asp:moment}
(i) For all \(i\) and \(t\), \(\mathbb{E}[m_{it}^4]\) is bounded. \\
(ii) There exists a constant \(C_1>0\) such that for all \(i,t,k\),
\[
\mathbb{E}\!\big[g_{1,k}^2(x_i)\big],\ \ 
\mathbb{E}\!\big[g_{2,k}^2(x_i)\big],\ \ 
\mathbb{E}\!\big[q_{1,k}^2(z_t)\big],\ \ 
\mathbb{E}\!\big[q_{2,k}^2(z_t)\big]
\le C_1.
\]
(iii) There exists a constant \(C_2>0\) such that for all \(i,t,k\),
\[
\mathbb{E}[w_{1,ik}^2],\ \ 
\mathbb{E}[w_{2,ik}^2],\ \ 
\mathbb{E}[v_{1,tk}^2],\ \ 
\mathbb{E}[v_{2,tk}^2]
\le C_2.
\]
\end{assumption}

We are now in a position to state the statistical properties of our estimators. The following theorem provides the convergence rate of the proposed estimator.

\begin{theorem}[Convergence rate]\label{thm:convergence_rate_obs}
Suppose that Assumptions \ref{asp:noise}--\ref{asp:moment} hold. Then,
\begin{align*}
\|\widehat{M} - M\|_F
&= O_p\Bigg(
J
+ \sqrt{K_2 + K_4}\,\min \Big\{ \sqrt{T},\, \|M_2\|_F + \|P_X M_4\|_F \Big\} \\
&\qquad\quad
+ \sqrt{K_3 + K_4}\,\min \Big\{ \sqrt{N},\, \|M_3\|_F + \|M_4 P_Z\|_F \Big\}
+ \sqrt{K_4}\,\min \Big\{ \sqrt{N + T},\, \|M_4\|_F \Big\} \\
&\qquad\quad
+ \sqrt{NT}\left[ \frac{K_1}{J^{\gamma_1^G}} + \frac{K_1}{J^{\gamma_1^Q}} + \frac{K_2}{J^{\gamma_2^G}} + \frac{K_3}{J^{\gamma_2^Q}} \right]
\Bigg).
\end{align*}
\end{theorem}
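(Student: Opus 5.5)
We will decompose the error through the four projected pieces of $Y$. Write $P_X^\perp = I_N - P_X$ and $P_Z^\perp = I_T - P_Z$. Then
\[
Y = P_X Y P_Z + P_X Y P_Z^\perp + P_X^\perp Y P_Z + P_X^\perp Y P_Z^\perp ,
\]
and likewise for $M$. Define the population targets $\widetilde M_1 = P_X M P_Z$, $\widetilde M_2 = P_X M P_Z^\perp$, $\widetilde M_3 = P_X^\perp M P_Z$ and $\widetilde M_4 = P_X^\perp M P_Z^\perp$. These sum exactly to $M$, so
\[
\|\widehat M - M\|_F \le \sum_{\iota=1}^4 \|\widehat M_\iota - \widetilde M_\iota\|_F .
\]
The cancellation of the $P_XW_\iota$ and $P_ZV_\iota$ terms noted after Algorithm \ref{alg:estimation_obs} is built into this choice of targets. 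It remains to bound each summand.

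\textbf{Step 1.} Here $\widehat M_1 - \widetilde M_1 = P_X E P_Z$. Write $P_X E P_Z = \Phi(\Phi^\top\Phi)^{-1}\Phi^\top E\Psi(\Psi^\top\Psi)^{-1}\Psi^\top$. By Assumption \ref{asp:basis}(i) its Frobenius norm is at most a constant times $(NT)^{-1/2}\|\Phi^\top E \Psi\|_F$. Conditional on $(X,Z)$, the entries of $\Phi^\top E\Psi$ have second moments $\sigma^2\sum_{i,t}\phi_j(x_i)^2\psi_l(z_t)^2$, which is $O_p(NT)$ by Assumption \ref{asp:basis}(ii) and independence (Assumption \ref{asp:noise}). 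There are $J^2 d_1 d_2$ such entries, so $\|P_XEP_Z\|_F = O_p(J)$, which is the leading $J$ term. The same conditioning argument gives the operator-norm bounds
\[
\|P_X E\| = O_p\bigl(\sqrt{J}+\sqrt{T}\bigr), \qquad \|E P_Z\| = O_p\bigl(\sqrt{J}+\sqrt{N}\bigr), \qquad \|E\| = O_p\bigl(\sqrt{N+T}\bigr).
\]
For the first, $P_X E = U U^\top E$ with $U$ an $N\times Jd_1$ orthonormal basis independent of $E$, so $U^\top E$ is a $Jd_1\times T$ sub-Gaussian matrix. Hence the projected noise terms $P_XEP_Z^\perp$, $P_X^\perp E P_Z$ and $P_X^\perp E P_Z^\perp$ have operator norms $O_p(\sqrt T)$, $O_p(\sqrt N)$ and $O_p(\sqrt{N+T})$. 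With $C_\iota$ large, these lie below $\nu_\iota/2$ with probability approaching one.

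\textbf{Step 2.} Each penalized problem has the closed-form solution given by singular value soft-thresholding of its input. We will use the standard oracle inequality for $\min_A\|B+\Delta-A\|_F^2+\nu\|A\|_*$ when $\|\Delta\|\le\nu/2$:
\[
\|\widehat A - B\|_F \lesssim \min_{A}\Bigl\{\|B-A\|_F + \nu\sqrt{\operatorname{rank}A}\Bigr\}.
\]
Combined with $\|\widehat A - B\|_F\lesssim \|B\|_F$ (the estimate and the target have comparable norms on this event), this gives a bound of the form $\min\{\nu\sqrt r, \sqrt r\,\|B\|_F\}$-type, up to sieve residuals. For $\widehat M_2$, split $\widetilde M_2 = G_2V_1^\top P_Z^\perp + P_XW_2V_2^\top P_Z^\perp + R_2$ with low-rank part of rank at most $K_2+K_4$. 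The remainder $R_2$ comes from $P_Z^\perp Q_\iota \ne 0$ and $P_X^\perp G_\iota\ne 0$. Choosing $A$ either as this low-rank part or as $0$ yields
\[
\sqrt{K_2+K_4}\,\min\bigl\{\sqrt T,\ \|M_2\|_F+\|P_XM_4\|_F\bigr\} + \|R_2\|_F .
\]
This uses $\|P_Z^\perp\|\le1$, with the $\sqrt{K}$ factor converting a Frobenius-norm bound into the oracle form via the rank. The blocks $\widehat M_3$ and $\widehat M_4$ are treated identically, with ranks $K_3+K_4$ and $K_4$.

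\textbf{Step 3.} We bound the sieve residuals. For example, $\|P_Z^\perp Q_1\|_F\le \|Q_1-\Psi A_1\|_F = O(\sqrt{T}K_1^{1/2}J^{-\gamma_1^Q})$ by Assumption \ref{asp:sieve}(i). Multiplying by $\|G_1\|_F=O_p(\sqrt{NK_1})$ (Assumption \ref{asp:moment}(ii)) gives contributions of order $\sqrt{NT}\,K_1J^{-\gamma}$, and similarly for $G_2,Q_2$ with $K_2,K_3$. Assumption \ref{asp:sieve}(ii) guarantees that these residuals, measured in operator norm, are $o(\nu_\iota)$. They can therefore be absorbed into the perturbation $\Delta$ rather than the bias, and the thresholding argument of Step 2 is not disturbed. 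I expect the main obstacle to be exactly this bookkeeping: the residuals must be controlled in operator norm to keep the event $\|\Delta\|\le\nu_\iota/2$, and in Frobenius norm for the final rate. One must also check that the cross terms between $P_X W$ and $P_Z^\perp$ do not inflate the ranks beyond $K_2+K_4$ and $K_3+K_4$.
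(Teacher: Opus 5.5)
Your proof is correct, but it follows a somewhat different route from the paper's.

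The paper uses the model-based targets $M_2P_Z^\perp+P_XM_4P_Z^\perp$, $P_X^\perp M_3+P_X^\perp M_4P_Z$ and $P_X^\perp M_4P_Z^\perp$, which have rank exactly at most $K_2+K_4$, $K_3+K_4$ and $K_4$. It moves all sieve residuals into the perturbation $S$ alongside the projected noise, and applies Lemma \ref{lem:nuclear}, which requires an exactly low-rank $L$ and $\lambda\ge C\|S\|$. That is why it needs Assumption \ref{asp:sieve}(ii): the smoothing errors must be $o_p(\sqrt T)$, $o_p(\sqrt N)$ and $o_p(\sqrt{N+T})$ in operator norm. It is also why a separate smoothing error attached to $\widehat M_1$ appears in its final decomposition.

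You instead use the exact projections $P_XMP_Z,\dots,P_X^\perp MP_Z^\perp$. The targets then sum to $M$ identically, and $\widehat M_1-\widetilde M_1=P_XEP_Z$ is pure noise. Your misspecified oracle inequality, $\|\widehat A-B\|_F^2\le\|B-A\|_F^2+2\nu^2\,\mathrm{rank}(A)$, does hold whenever $\|\Delta\|\le\nu/2$, by the Koltchinskii--Lounici--Tsybakov argument. It lets the residuals $R_\iota$ stay in the bias term. So the operator-norm bookkeeping you flag in Step 3 as the main obstacle is unnecessary on your route. Only the pure noise terms $P_XEP_Z^\perp$, $P_X^\perp EP_Z$ and $P_X^\perp EP_Z^\perp$ must lie below $\nu_\iota/2$, and Assumption \ref{asp:sieve}(ii) plays no role.

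Taking $A=0$ gives $\|\widehat A-B\|_F\le\|B\|_F$, which is the clean justification for your ``comparable norms'' remark. Equivalently, Weyl's inequality shows that soft-thresholding at $\nu/2$ yields $\sigma_k(\widehat A)\le\sigma_k(B)$. Your second branch of each minimum therefore carries no $\sqrt{K}$ factor, so it is slightly sharper than the theorem. In exchange, the paper's lemma is proved from scratch and gives the exact-zero property ($\widehat L=0$ when $L=0$). The robustness discussion relies on that property, but this theorem does not need it.

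One point to state explicitly: $C_\iota$ is fixed in advance, so you need the noise operator-norm bounds with a fixed constant on an event of probability tending to one, not merely as $O_p$. The sub-Gaussian net argument you sketch does provide this.
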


Some immediate remarks are in order. First, note that the dominating part of the error bound for our estimator does not depend on \(K_1\). Thus, we can allow \(K_1\) to be large as long as \(g_{1,k}(\cdot)\) and \(q_{1,k}(\cdot)\) are sufficiently smooth. The last term,
\[
\sqrt{NT}\left[ \frac{K_1}{J^{\gamma_1^G}} + \frac{K_1}{J^{\gamma_1^Q}} + \frac{K_2}{J^{\gamma_2^G}} + \frac{K_3}{J^{\gamma_2^Q}} \right],
\]
arises from the sieve approximation (smoothing) error. When the functions \(g_{\iota,k}(\cdot)\) and \(q_{\iota,k}(\cdot)\) are sufficiently smooth (i.e., \(\gamma_\iota^G\) and \(\gamma_\iota^Q\) are large), this term is small and dominated by the other terms.

Theorem \ref{thm:convergence_rate_obs} illustrates the robustness of our estimator. First, consider the most favorable case, \(M=M_1\), where \(M\) is well explained by \(X\) and \(Z\). In this case, the convergence rate of our estimator is \(O_p(J)\) provided that \(g_{\iota,k}(\cdot)\) and \(q_{\iota,k}(\cdot)\) are sufficiently smooth. This matches the rate of the ``double projection'' estimator \(P_X Y P_Z\), which is the most suitable estimator when we know \emph{a priori} that \(M=M_1\). By contrast, if we estimate \(M\) using a standard low-rank method such as nuclear-norm penalization in this setting, the convergence rate would be \(O_p(\sqrt{K_1(N+T)})\), which is much larger than ours. Moreover, even when \(M\) contains an additional component \(M_2+M_3+M_4\) beyond \(M_1\), as long as this component is small (in the sense that \(\|M_2+M_3+M_4\|_F \ll \sqrt{N+T}\)), our rate \(O_p\big(J + \sqrt{K^*}\,\|M_2+M_3+M_4\|_F\big)\) is smaller than that of the usual low-rank estimator \(O_p\big(\sqrt{K(N+T)}\big)\), where \(K^*=\max\{K_2,K_3,K_4\}\) and \(K\) is the rank of \(M\).

Next, consider the least favorable case, \(M=M_4\), where \(M\) is unrelated to \(X\) and \(Z\) and the side information is uninformative. In this case, the convergence rate of our estimator is \(O_p(\sqrt{K_4(N+T)})\), provided that \(g_{\iota,k}(\cdot)\) and \(q_{\iota,k}(\cdot)\) are sufficiently smooth and \(J \ll \sqrt{K_4(N+T)}\). Note that this rate coincides with that of standard low-rank estimators. Hence, even in the least favorable case, the error bound for our estimator is comparable to that of a typical low-rank method. In contrast, the ``double projection'' estimator \(P_X Y P_Z\) is inconsistent in this case. Moreover, if \(M\) contains an additional small component \(M_1+M_2+M_3\) beyond \(M_4\), the convergence rate of our estimator becomes \(O_p(\sqrt{K^*(N+T)})\), whereas that of a typical low-rank estimator remains \(O_p(\sqrt{K(N+T)})\). Thus, when \(K_1\) is large, our method can yield a tighter bound.

Lastly, consider the case \(M=M_2\). In this case, the convergence rate of our estimator is \(O_p(\sqrt{K_2 T})\), provided that \(g_{\iota,k}(\cdot)\) and \(q_{\iota,k}(\cdot)\) are sufficiently smooth and \(J \ll \sqrt{K_2 T}\). By comparison, the convergence rate of a typical low-rank estimator is \(O_p(\sqrt{K_2(N+T)})\). Hence, when \(N \gg T\), our method yields a tighter bound. In addition, if \(M\) contains an additional small component \(M_1+M_3+M_4\) beyond \(M_2\), the convergence rate of our estimator becomes \(O_p(\sqrt{(K_2+K_4)T})\), whereas that of a typical low-rank estimator is \(O_p(\sqrt{K(N+T)})\). Thus, when \(K_1\) is large or \(N \gg T\), our method can yield a better bound. A similar discussion applies to the case \(M=M_3\).

Table \ref{tab:comparison_obs} summarizes the convergence rates of the estimators in the cases discussed above. We can see that, in every case, the convergence rate of our estimator is at least as good as that of the other estimators, provided that \(g_{\iota,k}(\cdot)\) and \(q_{\iota,k}(\cdot)\) are sufficiently smooth.

\begin{table}[htbp]
  \centering
  \begin{tabular}{c|cccc}
    \hline\hline
          & \(M = M_1\) & \(M = M_2\) & \(M = M_3\) & \(M = M_4\) \\[3pt]
    \hline
    Our estimator & \(J\) & \(\sqrt{K_2 T}\) & \(\sqrt{K_3 N}\) & \(\sqrt{K_4 (N+T)}\) \\[3pt]
    Double projection & \(J\) & \(\sqrt{NT}\) & \(\sqrt{NT}\) & \(\sqrt{NT}\) \\[3pt]
    Low-rank estimation & \(\sqrt{K_1(N+T)}\) & \(\sqrt{K_2 (N+T)}\) & \(\sqrt{K_3 (N+T)}\) & \(\sqrt{K_4 (N+T)}\) \\[3pt]
    \hline
  \end{tabular}
  \caption{Convergence rates of matrix estimators}
  \label{tab:comparison_obs}
\end{table}

Lastly, as a corollary, we present convergence rates for the estimated singular vectors, since the factors and loadings are often of interest to researchers (see, e.g., \cite{bai2008large}). Let \(\widehat{U} \in \bbR^{N \times K}\) and \(\widehat{V} \in \bbR^{T \times K}\) be the left and right singular vectors of \(\widehat{M}\), respectively. Similarly, let \(U\) and \(V\) denote the left and right singular vectors of \(M\), respectively. For notational convenience, denote the upper bound on \(\|\widehat{M}-M\|_F\) in Theorem \ref{thm:convergence_rate_obs} by
\begin{align*}
\calR
&=  J
+ \sqrt{K_2 + K_4}\,\min \left\{ \sqrt{T} , \|M_2\|_F + \|P_X M_4\|_F \right\}
+ \sqrt{K_3 + K_4}\,\min \left\{ \sqrt{N} , \|M_3\|_F + \|M_4 P_Z\|_F \right\} \\
&\quad
+ \sqrt{K_4}\,\min \left\{ \sqrt{N + T} , \|M_4\|_F \right\}
+ \sqrt{NT} \left( \frac{K_1}{J^{\gamma_1^G}} + \frac{K_1}{J^{\gamma_1^Q}} + \frac{K_2}{J^{\gamma_2^G}} + \frac{K_3}{J^{\gamma_2^Q}} \right) .
\end{align*}
Then, we obtain the following convergence rates.

\begin{corollary}\label{cor:singular_vector}
Suppose that Assumptions \ref{asp:noise}--\ref{asp:moment} hold. In addition, assume that \(\calR/\lambda_{\min} \conP 0\), where \(\lambda_{\min}\) denotes the smallest nonzero singular value of \(M\). Then,
\[
\max \biggl\{
\min_{R \in \calO_{K \times K}} \|\widehat{U} - R U\|_F,\ \ 
\min_{R \in \calO_{K \times K}} \|\widehat{V} - R V\|_F
\biggr\}
= O_p\!\left( \frac{\calR}{\lambda_{\min}}\right).
\]
\end{corollary}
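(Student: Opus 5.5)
The plan is to combine the Frobenius-norm bound of Theorem \ref{thm:convergence_rate_obs} with a Davis--Kahan/Wedin type perturbation bound for singular subspaces. Write $M = U D V^\top$ with $D$ a $K\times K$ diagonal matrix of nonzero singular values, where $K=\mathrm{rank}(M)$. Let $\widehat U,\widehat V$ be the top-$K$ left and right singular vectors of $\widehat M$. Set $\Delta=\widehat M - M$. By Theorem \ref{thm:convergence_rate_obs}, $\|\Delta\|_F = O_p(\calR)$, and hence $\|\Delta\| \le \|\Delta\|_F = O_p(\calR)$.

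\textbf{Step 1: Wedin's $\sin\Theta$ theorem.} Wedin's theorem, in the form of Yu, Wang and Samworth's variant of Davis--Kahan for rectangular matrices, gives
\[
\max\bigl\{\|\sin\Theta(\widehat U,U)\|_F,\ \|\sin\Theta(\widehat V,V)\|_F\bigr\}
\le \frac{C\,\bigl(2\lambda_{\max}(M)+\|\Delta\|\bigr)\min\{\sqrt{K}\|\Delta\|,\|\Delta\|_F\}}{\lambda_{\min}^2}.
\]
The cruder original Wedin form is preferable here because it avoids the $\lambda_{\max}$ factor:
\[
\|\sin\Theta\|_F \le \frac{\max\{\|\Delta \widehat V\|_F,\|\Delta^\top \widehat U\|_F\}}{\delta},
\]
where $\delta$ is the gap between $\sigma_K(\widehat M)$ and $\sigma_{K+1}(M)=0$. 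Weyl's inequality gives $\sigma_K(\widehat M)\ge \lambda_{\min}-\|\Delta\|$. On the event $\{\|\Delta\|\le \lambda_{\min}/2\}$, whose probability tends to one because $\calR/\lambda_{\min}\conP 0$, we therefore have $\delta\ge\lambda_{\min}/2$. Since $\widehat V$ has orthonormal columns, $\|\Delta\widehat V\|_F\le\|\Delta\|_F$. This yields
\[
\|\sin\Theta\|_F\le 2\|\Delta\|_F/\lambda_{\min}=O_p(\calR/\lambda_{\min}).
\]

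\textbf{Step 2: from $\sin\Theta$ to orthogonal alignment.} I will use the standard fact that
\[
\min_{R\in\calO_{K\times K}}\|\widehat U - UR\|_F\le\sqrt2\,\|\sin\Theta(\widehat U,U)\|_F.
\]
The minimizer is the polar factor of $U^\top\widehat U$, and the inequality follows from $1-\cos\theta\le\sin^2\theta\le\sin\theta$. The statement is written with $RU$ and $R\in\calO_{K\times K}$, so I read this as right multiplication by a $K\times K$ rotation; the same argument applies verbatim to $V$. Combining Steps 1 and 2 gives the claim.

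\textbf{Main obstacle.} The delicate point is ensuring the gap is governed by $\lambda_{\min}$ and not by a ratio involving $\lambda_{\max}$. This requires the one-sided Wedin bound, with the residuals $\Delta\widehat V$ and $\Delta^\top\widehat U$, together with the probability-one reduction onto the event $\|\Delta\|\le\lambda_{\min}/2$. A second point needing care is that $K$ is the rank of $M$ and the top $K$ singular vectors of $\widehat M$ are used. Uniqueness of the singular subspaces on that event follows from the same Weyl argument, since $\sigma_{K+1}(\widehat M)\le\|\Delta\|<\sigma_K(\widehat M)$.
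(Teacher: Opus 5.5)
Your proposal is correct and is essentially the paper's argument. The paper's proof simply invokes the Davis--Kahan/Wedin $\sin\Theta$ theory together with the comparison between $\sin\Theta$ distance and orthogonal-alignment distance (Corollary 2.8 and Theorem 2.9 of \cite{chen2021spectral}); you fill in the same details, namely Weyl's inequality to keep the gap of order $\lambda_{\min}$ with probability tending to one, the residual bound $\|\Delta\widehat V\|_F\le\|\Delta\|_F$, and the polar-factor alignment, and you rightly set aside the $\lambda_{\max}$-dependent variant, which would only give $\kappa\calR/\lambda_{\min}$.
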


\section{Extension to Missing Case}\label{sec:ext_missing}

Next, we extend our estimation strategy to the case where the outcome matrix is only partially observed. The base model is the same as in Section \ref{sec:model_est}, and we additionally assume that researchers observe \(\Omega \circ Y\) instead of \(Y\), where \(\Omega = (\omega_{it})_{i \leq N,\, t \leq T} \in \{0,1\}^{N \times T}\).

\subsection{Missing At Random Case}

In this section, we consider the case where outcome entries are missing at random. Specifically, we assume that \((\omega_{it})_{i \leq N,\, t \leq T}\) are i.i.d.\ Bernoulli random variables with mean \(p\), as is common in the matrix completion literature.

\paragraph{Estimation.}

Similarly to the fully observed case above, we use the projection method to exploit the structure of the model in \eqref{eq:decomposition}. However, when entries are missing, a key difficulty is that we cannot directly observe \(\Omega \circ \big(P_X Y (I_T - P_Z)\big)\) and \(\Omega \circ \big((I_N - P_X) Y P_Z\big)\) when we aim to estimate \(P_X M (I_T - P_Z)\) or \((I_N - P_X) M P_Z\) via nuclear-norm penalization. On the other hand, we can still estimate \(P_X M P_Z\) accurately using the projection estimator \(p^{-1} P_X (\Omega \circ Y) P_Z\).

Hence, in the presence of missing entries, we proceed as follows.

\begin{breakablealgorithm}
		\caption{Estimation procedure for MAR case}
		\label{alg:estimation_mar}
		\begin{algorithmic}
    	\noindent \textbf{Step 1:} Derive $\widehat{M}_1 = p^{-1} P_X (\Omega \circ Y) P_Z$. \\
			\textbf{Step 2:} Apply the nuclear norm penalization to $\Omega \circ (Y - \widehat{M}_1)$:
   \begin{gather*}
\widehat{M}_{rest} \coloneqq \argmin_{A: \norm{A}_{\infty} \leq M_{\max}} \norm{\Omega \circ (Y - \widehat{M}_1 - A)}_F^2 + \nu \norm{A}_*, 
   \end{gather*}
   where $M_{\max} >0 $ is some large constant and $\nu = C p^{1/2} \sqrt{N+T}$ with a constant $C > 0$.\\
	\textbf{Step 3:} Get the final estimator, $\widehat{M} = \widehat{M}_1 + \widehat{M}_{rest}$.
		\end{algorithmic}
	\end{breakablealgorithm}

Note that, by the projection relation \eqref{eq:projection} and the usual properties of nuclear-norm penalization in matrix completion, we have
\begin{align*}
 \widehat{M}_1 &\approx G_1 Q_1^\top + G_2 V_1^\top P_Z + P_X W_1 Q_2^\top + P_X W_2 V_2^\top P_Z, \\
 \widehat{M}_{\mathrm{rest}} &\approx G_2 V_1^\top (I_T - P_Z) + (I_N - P_X) W_1 Q_2^\top + W_2 V_2^\top - P_X W_2 V_2^\top P_Z,
\end{align*}
under conditions on the noise and on the smoothness of \(g_{1,k}(\cdot)\), \(g_{2,k}(\cdot)\), \(q_{1,k}(\cdot)\), and \(q_{2,k}(\cdot)\) similar to those in the previous section. As above, the terms involving \(P_X W_\iota\) or \(P_Z V_\iota\) cancel out when we add the two estimators. Hence, our final estimator \(\widehat{M}\) can estimate \(M\) well. In particular, because we estimate the \(M_1\) component using the projection method rather than a low-rank estimator, our approach can have advantages when \(K_1\) is large or when \(M_1\) is large relative to \(M_2\), \(M_3\), and \(M_4\).

\paragraph{Asymptotic result.}

We now present the convergence rate of our estimator. We begin by introducing several additional assumptions.

\begin{assumption}[Random missing]\label{asp:mar}
The random variables \((\omega_{it})_{i \leq N,\, t \leq T}\) are i.i.d.\ Bernoulli with \(\mathbb{E}[\omega_{it}] = p\). In addition, \(\Omega\) is independent of \(E\), \(X\), \(Z\), and \(M\).
\end{assumption}

In addition to Assumption \ref{asp:mar}, we require a slightly different condition on the sieve approximation error than in the fully observed case.

\begin{assumption}[Sieve approximation]\label{asp:sieve_mar}
(i) Assumption \ref{asp:sieve} (i) holds. \\
(ii) The sieve approximation satisfies
\begin{gather*}
\frac{\min\{\sqrt{N},\sqrt{T}\}}{\sqrt{p}}
\left(
\frac{K_1}{J^{\gamma_1^G}} + \frac{K_1}{J^{\gamma_1^Q}} + \frac{K_2}{J^{\gamma_2^G}} + \frac{K_3}{J^{\gamma_2^Q}}
\right) \rightarrow 0.
\end{gather*}
\end{assumption}

The following theorem provides the convergence rate of our estimator.

\begin{theorem}[Convergence rate for the MAR case]\label{thm:convergence_rate_mar}
Suppose that Assumptions \ref{asp:noise}, \ref{asp:basis}, \ref{asp:moment}, \ref{asp:mar}, and \ref{asp:sieve_mar} hold. In addition, assume that \(J \ll p \sqrt{N+T}\). Then, if
\[
M_{\max} \geq \big\| M_{\mathrm{rest}} - M_2 P_Z - P_X M_3 - P_X M_4 P_Z \big\|_{\infty},
\]
we have
\begin{align*}
\|\widehat{M} - M\|_F
&= O_p\Bigg(
\frac{J}{\sqrt{p}}
+ \sqrt{K^*}\,
\min \left\{
\frac{\sqrt{N + T}\left(1 + M_{\max} \right)}{\sqrt{p}},
\, \|M_2\|_F + \|M_3\|_F + \|M_4\|_F
\right\} \\
&\qquad\quad
+ \sqrt{NT}
\left[
\frac{K_1}{J^{\gamma_1^G}} + \frac{K_1}{J^{\gamma_1^Q}} + \frac{K_2}{J^{\gamma_2^G}} + \frac{K_3}{J^{\gamma_2^Q}}
\right]
\Bigg),
\end{align*}
where \(K^* = \max\{K_2, K_3, K_4\}\).
\end{theorem}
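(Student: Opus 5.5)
The plan is to split the error along the two steps of Algorithm~\ref{alg:estimation_mar}. Define the target of the second step as $M_{\mathrm{rest}}^\star \coloneqq M - P_X M P_Z$. We then write
\[
\widehat{M} - M = \big(\widehat{M}_1 - P_X M P_Z\big) + \big(\widehat{M}_{\mathrm{rest}} - M_{\mathrm{rest}}^\star\big).
\]
Using the sieve relation \eqref{eq:projection}, $P_X M P_Z$ equals $M_1 + M_2P_Z + P_XM_3 + P_XM_4P_Z$ up to a sieve error. Hence $M_{\mathrm{rest}}^\star$ is, up to that error, exactly the matrix appearing in the condition on $M_{\max}$. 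Two facts support this. First, $M_{\mathrm{rest}}^\star$ has rank at most $2(K_2+K_3+K_4)\lesssim K^*$ once the $M_1$ part is removed. Second, $\|M_{\mathrm{rest}}^\star\|_F \le \|M_2\|_F+\|M_3\|_F+\|M_4\|_F$ plus a sieve error. The sieve errors $(I-P_X)G_\iota$ and $(I-P_Z)Q_\iota$ are handled as in Theorem~\ref{thm:convergence_rate_obs}. Assumption~\ref{asp:sieve}(i) gives $\|(I-P_X)G_1\|_F \le \sqrt N K_1 J^{-\gamma_1^G}$, since the projection residual is bounded by the sup-norm approximation error. Combined with $\|Q_1\|\lesssim \sqrt{T}$ from Assumption~\ref{asp:moment}, this produces the $\sqrt{NT}[\cdots]$ term.

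\textbf{Step 1 (projection part).} We write
\[
p^{-1}P_X(\Omega\circ Y)P_Z - P_XMP_Z = P_X\big((p^{-1}\Omega-\mathbf 1)\circ M\big)P_Z + p^{-1}P_X(\Omega\circ E)P_Z .
\]
Using Assumption~\ref{asp:basis}(i), we have $\|P_X A P_Z\|_F \lesssim (NT)^{-1/2}\|\Phi^\top A\Psi\|_F$. The matrix $\Phi^\top A \Psi$ is $Jd_1\times Jd_2$, and each entry is a sum of $NT$ independent, conditionally mean-zero terms given $(X,Z,M)$, by Assumptions~\ref{asp:noise} and~\ref{asp:mar}. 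A second-moment computation then gives, for each entry, a bound of order $\sqrt{NT/p}$. Here we use the fourth-moment bounds in Assumptions~\ref{asp:basis}(ii) and~\ref{asp:moment}(i), applied through Cauchy--Schwarz. Summing over the $J^2 d_1d_2$ entries yields $\|\widehat M_1 - P_XMP_Z\|_F = O_p(J/\sqrt p)$.

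\textbf{Step 2 (penalized part).} Writing $\Delta \coloneqq \widehat M_1 - P_XMP_Z$, the observed response in Step~2 is
\[
\Omega\circ\big(Y-\widehat M_1\big) = \Omega\circ\big(M_{\mathrm{rest}}^\star + E - \Delta\big).
\]
This is a standard trace-regression / matrix-completion problem under uniform sampling with the $\ell_\infty$ constraint. The true matrix $M_{\mathrm{rest}}^\star$ (up to sieve error) is feasible by the hypothesis on $M_{\max}$. We would follow the Negahban--Wainwright / Klopp argument. The basic inequality gives
\[
\|\Omega\circ(\widehat A - A^\star)\|_F^2 \le 2\langle \Omega\circ(E-\Delta),\, \widehat A-A^\star\rangle + \nu\big(\|A^\star\|_*-\|\widehat A\|_*\big).
\]
The choice $\nu = Cp^{1/2}\sqrt{N+T}$ dominates $\|\Omega\circ E\|$, which is $O_p(\sqrt{p(N+T)})$ by a Bernstein/Latała bound. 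It also dominates $\|\Omega\circ\Delta\|\le\|\Omega\circ\Delta\|_F$, which is negligible because $J\ll p\sqrt{N+T}$. Applying the decomposability / restricted strong convexity argument over the $\ell_\infty$-ball then gives
\[
\|\widehat A - A^\star\|_F^2 \lesssim \frac{K^*(N+T)(1+M_{\max})^2}{p},
\]
with the extra $M_{\max}$ coming from the usual peeling step in Klopp's RSC over the $\ell_\infty$-constrained class. For the alternative branch, in the small-signal regime we use the crude bound $\|\widehat A-A^\star\|_F\le\|\widehat A\|_F+\|A^\star\|_F$. Since soft-thresholding-type optimality gives $\nu\|\widehat A\|_*\lesssim \nu\|A^\star\|_*$, we get $\|\widehat A\|_F\lesssim\|\widehat A\|_*\lesssim\|A^\star\|_*\le\sqrt{K^*}\|A^\star\|_F$. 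Together these give the $\sqrt{K^*}\min\{\cdot,\cdot\}$ form.

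\textbf{Main obstacle.} The main difficulty is Step~2's restricted strong convexity, for two reasons. The design $\Omega$ is i.i.d., but the target $A^\star$ and the perturbation $\Delta$ both depend on $X$, $Z$ and (through $\Delta$) on $\Omega$ itself. We will condition on $(X,Z,M)$ so that $A^\star$ is deterministic. We will then handle the dependence of $\Delta$ on $\Omega$ by bounding the cross term crudely through $\|\Delta\|_F\|\widehat A-A^\star\|_*/\ldots$, or more simply by $\|\Delta\|_F\,\|\widehat A - A^\star\|_F$. This contributes only $O_p(J/\sqrt p)$, which is already in the rate.
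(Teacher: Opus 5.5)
Your proposal is correct and follows the paper's proof essentially step for step. The pieces line up as follows:
\begin{itemize}
\item the same $O_p(J/\sqrt{p})$ bound on $\Delta=\widehat M_1-P_XMP_Z$, via $\|P_XAP_Z\|_F\lesssim (NT)^{-1/2}\|\Phi^\top A\Psi\|_F$ and a conditional second-moment computation;
\item the same absorption of $\Delta$ into the penalty, through $\|\Omega\circ\Delta\|\le\|\Delta\|_F=o_p(\sqrt{p(N+T)})$;
\item the same two branches, namely RSC with peeling over the $\ell_\infty$ ball and the nuclear-norm comparison $\|\widehat A\|_*\lesssim\|A^\star\|_*$, which the paper packages as Lemmas~\ref{lem:nuclear_missing} and~\ref{lem:rsc}.
\end{itemize}

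Three repairs are needed.

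\textbf{The target and the sieve remainder.} $A^\star$ must be the exactly low-rank matrix $L=M_2(I_T-P_Z)+(I_N-P_X)M_3+M_4-P_XM_4P_Z$ covered by the $M_{\max}$ hypothesis. It cannot be $M-P_XMP_Z$, which is only approximately of rank $\lesssim K^*$. So the sieve remainder $M-P_XMP_Z-L$ has to be added to the Step~2 noise alongside $E-\Delta$. This is exactly where Assumption~\ref{asp:sieve_mar}(ii) enters: it is equivalent to $\sqrt{NT}[\cdots]=o(\sqrt{p(N+T)})$, and your sketch never uses it.

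\textbf{The ``main obstacle''.} It is not an obstacle. The bound $|\langle\Omega\circ\Delta,D\rangle|\le\|\Delta\|_F\|D\|_*$ is deterministic, so how $\Delta$ depends on $\Omega$ is irrelevant. By contrast, the cruder Cauchy--Schwarz bound you float would, after RSC, contribute $J/p$ or worse rather than the $J/\sqrt{p}$ you claim.

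\textbf{The operator-norm bound on $\Omega\circ E$.} For $\|\Omega\circ E\|\lesssim\sqrt{p(N+T)}$ with $p\to0$, Lata{\l}a's inequality loses up to a factor $p^{-1/4}$ and matrix Bernstein loses a factor $\sqrt{\log(N+T)}$. A Bandeira--van Handel-type bound gives the sharp rate. It applies here because $J\ll p\sqrt{N+T}$ forces $p(N+T)\gg\log^2(N+T)$.
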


Similar discussions to the fully observed case apply here. The error bound for our estimator does not depend on \(K_1\) as long as \(g_{1,k}(\cdot)\) and \(q_{1,k}(\cdot)\) are sufficiently smooth, because we estimate \(M_1\) using the projection method rather than a low-rank estimator. Hence, we can allow \(K_1\) to be large.

In addition, the estimator enjoys a robustness property. For simplicity, assume that \(M_{\max}\) and \(\|M\|_{\infty}\) are bounded. First, consider the most favorable case, \(M=M_1\). In this case, the convergence rate of our estimator is \(O_p(J/\sqrt{p})\) provided that \(g_{\iota,k}(\cdot)\) and \(q_{\iota,k}(\cdot)\) are sufficiently smooth. By contrast, if we estimate \(M\) using standard low-rank completion methods (e.g., nuclear-norm penalization), the convergence rate would be \(O_p(\sqrt{K_1(N+T)/p})\), which is much larger than ours. Moreover, even when \(M\) contains an additional component \(M_2+M_3+M_4\) beyond \(M_1\), as long as this component is small (i.e., \(\|M_2+M_3+M_4\|_F \ll \sqrt{(N+T)/p}\)), our rate
\[
O_p\!\Big( J + \sqrt{K^*}\,\|M_2+M_3+M_4\|_F \Big)
\]
is smaller than that of the usual low-rank completion methods \(O_p\!\big(\sqrt{K(N+T)/p}\big)\), where \(K\) is the rank of \(M\) and \(K^*=\max\{K_2,K_3,K_4\}\).

Next, consider the least favorable case, \(M=M_4\), where \(M\) is unrelated to \(X\) and \(Z\). In this case, the convergence rate of our estimator is \(O_p(\sqrt{K_4(N+T)/p})\), provided that \(g_{\iota,k}(\cdot)\) and \(q_{\iota,k}(\cdot)\) are sufficiently smooth. Note that this rate coincides with that of standard low-rank completion methods. Hence, even in the least favorable case, our estimator is comparable to typical low-rank completion methods. Moreover, if \(M\) contains an additional small component \(M_1+M_2+M_3\) beyond \(M_4\), the convergence rate of our estimator becomes \(O_p(\sqrt{K^*(N+T)/p})\), whereas that of a typical low-rank completion method is \(O_p(\sqrt{K(N+T)/p})\). Thus, when \(K_1\) is large, our method can yield a tighter bound. Similar discussions apply to the cases \(M=M_2\) and \(M=M_3\): our estimator attains the same rate as standard low-rank completion methods, and it can yield a better bound when \(M\) also contains an \(M_1\) component with large \(K_1\).

\subsection{Missing Not At Random Case}

Although the missing-at-random assumption is common in the matrix completion literature, it can be inappropriate for some important applications, such as imputing control potential outcomes in causal panel models, where treatment is assigned to a subset of units starting at a certain time (or in a staggered fashion). In such settings, it may be more appropriate to treat the missingness pattern as fixed (i.e., nonrandom).

Following the literature on matrix completion under missing not at random (MNAR), e.g., \cite{bai2021matrix,choi2024matrix,yan2024entrywise}, we assume that the missingness pattern takes the form shown in Figure \ref{fig:MNAR}. In this setting, all (or some) entries in the ``miss'' submatrix are unobserved, while all entries in the ``tall'' and ``wide'' submatrices are observed. This pattern is prevalent in causal panel data: the ``wide'' submatrix corresponds to observations for the control group over all time periods, and the ``tall'' submatrix corresponds to observations for all units in the pre-treatment period, where the outcome is the potential outcome under control.

\begin{figure}[h!]
	\centering
 \caption{Missing pattern in MNAR case}
	\includegraphics[width=0.7\textwidth]{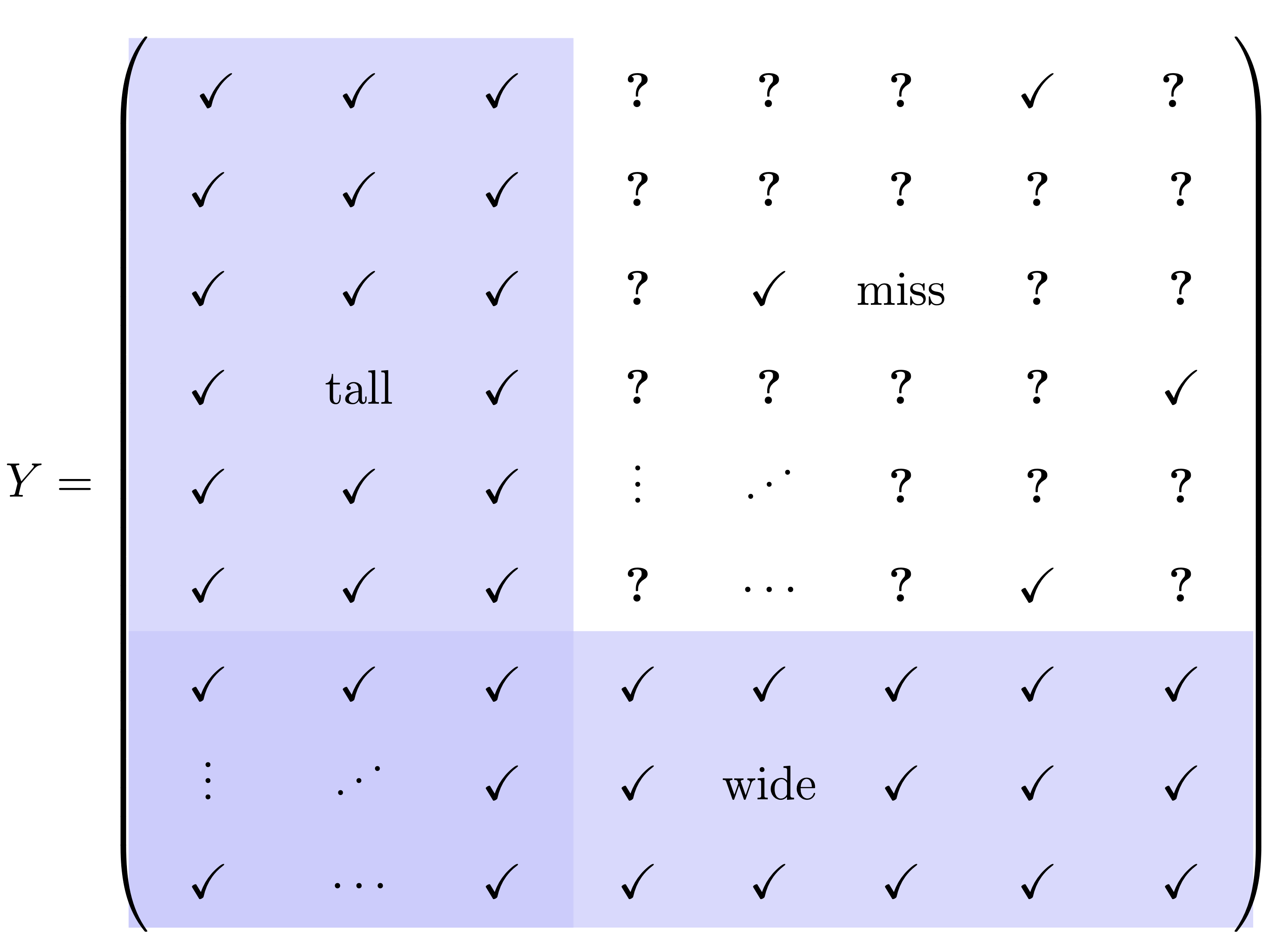}
	\label{fig:MNAR}
\end{figure}

\paragraph{Estimation.}

Note that the entries in the ``tall'' and ``wide'' submatrices are fully observed. Hence, we can apply Algorithm \ref{alg:estimation_obs} to the ``tall'' and ``wide'' submatrices to estimate
\(M_{\mathrm{tall}} = (m_{it})_{i\leq N,\, t \leq T_0}\) and
\(M_{\mathrm{wide}} = (m_{it})_{i\leq N_0,\, t \leq T}\).
Then, as noted in Corollary \ref{cor:singular_vector}, we can estimate the left and right singular vectors of \(M_{\mathrm{tall}}\) and \(M_{\mathrm{wide}}\), respectively. Importantly, the left singular vectors of \(M_{\mathrm{tall}}\) and \(M\) span the same space. Similarly, the right singular vectors of \(M_{\mathrm{wide}}\) and \(M\) span the same space. Hence, by combining the estimator of the left singular vectors of \(M_{\mathrm{tall}}\) with the estimator of the right singular vectors of \(M_{\mathrm{wide}}\), with an appropriate rotation adjustment, we can estimate \(M\).

Specifically, we estimate \(M\) as follows.

\begin{breakablealgorithm}
\caption{Estimation procedure for the MNAR case}
\label{alg:estimation_mnar}
\begin{algorithmic}
\noindent \textbf{Step 1:} From the ``tall'' submatrix \(Y_{\mathrm{tall}} = (y_{it})_{i\leq N,\, t \leq T_0}\), obtain \(\widehat{M}_{\mathrm{tall}}\) using Algorithm \ref{alg:estimation_obs} and compute its left singular vectors \(\widehat{U}_{\mathrm{tall}} \in \bbR^{N \times K}\). \\

\textbf{Step 2:} From the ``wide'' submatrix \(Y_{\mathrm{wide}} = (y_{it})_{i\leq N_0,\, t \leq T}\), obtain \(\widehat{M}_{\mathrm{wide}}\) using Algorithm \ref{alg:estimation_obs} and compute its left and right singular vectors \(\widehat{U}_{\mathrm{wide}} \in \bbR^{N_0 \times K}\) and \(\widehat{V}_{\mathrm{wide}} \in \bbR^{T \times K}\), along with the corresponding singular values \(\widehat{D}_{\mathrm{wide}} \in \bbR^{K \times K}\). \\

\textbf{Step 3:} Obtain the rotation matrix \(\widehat{H}_{\mathrm{adj}} \in \bbR^{K \times K}\) by regressing \(\widehat{U}_{\mathrm{wide}}\) on the submatrix of \(\widehat{U}_{\mathrm{tall}}\) corresponding to \(i \leq N_0\). \\

\textbf{Step 4:} Form the final estimator \(\widehat{M} = \widehat{U}_{\mathrm{tall}} \widehat{H}_{\mathrm{adj}} \widehat{D}_{\mathrm{wide}} \widehat{V}_{\mathrm{wide}}^\top\).
\end{algorithmic}
\end{breakablealgorithm}

Because this estimator is built on Algorithm \ref{alg:estimation_obs} for the fully observed case, we expect it to share similar advantages to those discussed in Section \ref{sec:asymp_obs}.

\paragraph{Asymptotic result.} 

To make this point more precise, we present the convergence rate of our estimator. We begin with an additional assumption.

\begin{assumption}[Block incoherence]\label{asp:incoherence}
Denote the \(i\)-th column of \(U^\top\) and the \(t\)-th column of \(V^\top\) by \(u_i\) and \(v_t\), respectively. Then there exist constants \(c_1,c_2>0\) such that, with probability approaching one,
\begin{gather*}
c_1 \leq \lambda_{\min} \left( \frac{N}{N_0} \sum_{i \leq N_0} u_i u_i^\top \right)
\leq  \lambda_{\max} \left( \frac{N}{N_0} \sum_{i \leq N_0} u_i u_i^\top \right) \leq c_2, \\ 
c_1 \leq \lambda_{\min} \left( \frac{T}{T_0} \sum_{t \leq T_0} v_t v_t^\top \right)
\leq  \lambda_{\max} \left( \frac{T}{T_0} \sum_{t \leq T_0} v_t v_t^\top \right) \leq c_2.
\end{gather*}
\end{assumption}

This assumption can be viewed as an incoherence condition ensuring that the left singular vectors of \(M\) are not dominated by either treated or untreated units, and that the right singular vectors are not dominated by either pre-treatment or post-treatment periods. It is common in the MNAR matrix completion literature (e.g., Assumption D in \cite{bai2021matrix} and Theorem 3.1(v) in \cite{choi2024matrix}) and allows us to relate the properties of the submatrices \(M_{\mathrm{tall}}\) and \(M_{\mathrm{wide}}\) to those of \(M\). For example, if \(\{u_i\}_{i \leq N}\) is stationary, then
\[
\frac{N}{N_0} \sum_{i \leq N_0} u_i u_i^\top \approx \frac{1}{N}\sum_{i \le N} u_i u_i^\top = I_K,
\]
and the condition is satisfied.

We are now in a position to state the convergence rate of the estimator. Let \(\calR_{\mathrm{tall}}\) and \(\calR_{\mathrm{wide}}\) denote the upper bounds on \(\|\widehat{M}_{\mathrm{tall}} - M_{\mathrm{tall}}\|_F\) and \(\|\widehat{M}_{\mathrm{wide}} - M_{\mathrm{wide}}\|_F\), respectively, as given by Theorem \ref{thm:convergence_rate_obs}:
\begin{align*}
\calR_{\mathrm{tall}}
&=  J + \sqrt{K_2 + K_4}   \min \left\{ \sqrt{T_0} , \|M_{2,\mathrm{tall}}\|_F + \|P_X M_{4,\mathrm{tall}}\|_F \right\}\\
&\quad + \sqrt{K_3 + K_4}   \min \left\{ \sqrt{N} , \|M_{3,\mathrm{tall}}\|_F + \|M_{4,\mathrm{tall}} P_{Z,\mathrm{sub}}\|_F \right\} \\
&\quad + \sqrt{K_4}   \min \left\{ \sqrt{N + T_0} , \|M_{4,\mathrm{tall}}\|_F  \right\}
+ \sqrt{NT_0} \left( \frac{K_1}{J^{\gamma_1^G}} + \frac{K_1}{J^{\gamma_1^Q}} + \frac{K_2}{J^{\gamma_2^G}} + \frac{K_3}{J^{\gamma_2^Q}} \right),\\
\calR_{\mathrm{wide}}
&=  J + \sqrt{K_2 + K_4}   \min \left\{ \sqrt{T} , \|M_{2,\mathrm{wide}}\|_F + \|P_{X,\mathrm{sub}} M_{4,\mathrm{wide}}\|_F \right\}\\
&\quad + \sqrt{K_3 + K_4}   \min \left\{ \sqrt{N_0} , \|M_{3,\mathrm{wide}}\|_F + \|M_{4,\mathrm{wide}} P_{Z}\|_F \right\} \\
&\quad + \sqrt{K_4}   \min \left\{ \sqrt{N_0 + T} , \|M_{4,\mathrm{wide}}\|_F  \right\}
+ \sqrt{N_0 T} \left( \frac{K_1}{J^{\gamma_1^G}} + \frac{K_1}{J^{\gamma_1^Q}} + \frac{K_2}{J^{\gamma_2^G}} + \frac{K_3}{J^{\gamma_2^Q}} \right),
\end{align*}
where
\[
P_{X,\mathrm{sub}} = \Phi_{\mathrm{sub}}(X)\big( \Phi_{\mathrm{sub}}(X)^{\top}\Phi_{\mathrm{sub}}(X)\big)^{-1} \Phi_{\mathrm{sub}}(X)^{\top},
\quad
P_{Z,\mathrm{sub}} = \Psi_{\mathrm{sub}}(Z)\big( \Psi_{\mathrm{sub}}(Z)^{\top}\Psi_{\mathrm{sub}}(Z)\big)^{-1} \Psi_{\mathrm{sub}}(Z)^{\top},
\]
\(\Phi_{\mathrm{sub}}(X) = \big[ \boldsymbol{\phi}(x_1), \ldots, \boldsymbol{\phi}(x_{N_0}) \big]^\top\), and
\(\Psi_{\mathrm{sub}}(Z) = \big[ \boldsymbol{\psi}(z_1), \ldots, \boldsymbol{\psi}(z_{T_0}) \big]^\top\).
In addition, let \(\delta_N = N_0/N\) and \(\delta_T = T_0/T\). The following theorem provides the convergence rate of our estimator.

\begin{theorem}[Convergence rate for the MNAR case]\label{thm:convergence_rate_mnar}
Suppose that Assumptions \ref{asp:noise}--\ref{asp:moment} hold for the submatrices \(Y_{\mathrm{tall}}\) and \(Y_{\mathrm{wide}}\), and that Assumption \ref{asp:incoherence} holds. In addition, assume that
\[
\frac{\max\{\calR_{\mathrm{wide}}, \calR_{\mathrm{tall}}\}}{\lambda_{\min}\sqrt{\delta_N \delta_T}} \conP 0,
\]
where \(\lambda_{\min}\) is the smallest nonzero singular value of \(M\). Then,
\[
\|\widehat{M} - M\|_F
= O_p\left( \frac{\kappa\,\max\{\calR_{\mathrm{wide}}, \calR_{\mathrm{tall}}\}}{\sqrt{\delta_N \delta_T}} \right),
\]
where \(\kappa = \lambda_{\max}/\lambda_{\min}\).
\end{theorem}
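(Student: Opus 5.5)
We reduce Theorem \ref{thm:convergence_rate_mnar} to Theorem \ref{thm:convergence_rate_obs} and Corollary \ref{cor:singular_vector} applied to the two fully observed blocks, and then control how the errors propagate through the rotation step. Write the SVD $M = U D V^\top$ with $U^\top U = V^\top V = I_K$. Then $M_{\mathrm{tall}} = U D V_{T_0}^\top$, where $V_{T_0}$ collects the first $T_0$ rows of $V$, and $M_{\mathrm{wide}} = U_{N_0} D V^\top$.

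By Assumption \ref{asp:incoherence}, $V_{T_0}^\top V_{T_0}$ has eigenvalues of order $\delta_T$. Hence the column space of $M_{\mathrm{tall}}$ equals that of $U$, and its smallest nonzero singular value is at least of order $\sqrt{\delta_T}\,\lambda_{\min}$. Symmetrically, $M_{\mathrm{wide}}$ has row space spanned by $V$, and its smallest nonzero singular value is at least of order $\sqrt{\delta_N}\,\lambda_{\min}$. Its largest singular value is at most $\lambda_{\max}$.

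\textbf{Step 1 (subspace bounds).} Theorem \ref{thm:convergence_rate_obs} gives $\|\widehat M_{\mathrm{tall}}-M_{\mathrm{tall}}\|_F = O_p(\calR_{\mathrm{tall}})$ and the analogous bound for the wide block. Corollary \ref{cor:singular_vector} (equivalently Wedin's theorem, using the gap bounds above) then yields rotations $R_1,R_2,R_3$ with
\[
\|\widehat U_{\mathrm{tall}} - U_{\mathrm{tall}}R_1\|_F=O_p\!\left(\frac{\calR_{\mathrm{tall}}}{\sqrt{\delta_T}\lambda_{\min}}\right),\qquad
\|\widehat U_{\mathrm{wide}} - U_{\mathrm{wide}}R_2\|_F+\|\widehat V_{\mathrm{wide}} - V_{\mathrm{wide}}R_3\|_F=O_p\!\left(\frac{\calR_{\mathrm{wide}}}{\sqrt{\delta_N}\lambda_{\min}}\right).
\]
Here $U_{\mathrm{tall}}$ is an orthonormal basis of $\mathrm{col}(U)$, i.e.\ $U O$ for some orthogonal $O$. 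Likewise, $U_{\mathrm{wide}} D_{\mathrm{wide}} V_{\mathrm{wide}}^\top$ is the exact SVD of $M_{\mathrm{wide}}$, with $V_{\mathrm{wide}} = V O'$.

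We also need $\|\widehat D_{\mathrm{wide}} - D_{\mathrm{wide}}\|_F \le \|\widehat M_{\mathrm{wide}}-M_{\mathrm{wide}}\|_F$, which holds by Mirsky's inequality. Care is needed because the pairs $(\widehat U_{\mathrm{wide}},\widehat D_{\mathrm{wide}},\widehat V_{\mathrm{wide}})$ are aligned only up to a common rotation. The cleanest route is to avoid separate singular vectors and work directly with $\widehat B := \widehat D_{\mathrm{wide}}\widehat V_{\mathrm{wide}}^\top = \widehat U_{\mathrm{wide}}^\top \widehat M_{\mathrm{wide}}$.

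\textbf{Step 2 (algebraic identity).} Let $\widehat U_{0}$ be the first $N_0$ rows of $\widehat U_{\mathrm{tall}}$, and set $\widehat H_{\mathrm{adj}}=(\widehat U_0^\top\widehat U_0)^{-1}\widehat U_0^\top \widehat U_{\mathrm{wide}}$. In the population version, $U_0 := U_{\mathrm{tall},[1:N_0]}$ satisfies $U_0 H = U_{\mathrm{wide}}$ with $H = (U_0^\top U_0)^{-1}U_0^\top U_{\mathrm{wide}}$. This gives exactly
\[
U_{\mathrm{tall}} H D_{\mathrm{wide}}V_{\mathrm{wide}}^\top = M,
\]
because both sides agree on the first $N_0$ rows and have the same column space.

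Then decompose $\widehat M - M$ telescopically into three terms:
\[
(\widehat U_{\mathrm{tall}}-U_{\mathrm{tall}}R_1)\widehat H\widehat B,\qquad
U_{\mathrm{tall}}R_1(\widehat H - R_1^\top H\,\cdot)\widehat B,\qquad
U_{\mathrm{tall}}H(\widehat B - B),
\]
with rotations absorbed into the middle factor.

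\textbf{Step 3 (bounding the terms).} We use $\|\widehat B\|\le \lambda_{\max}+o_p(\lambda_{\min})$ and $\|\widehat B-R^\top B\|_F\lesssim \calR_{\mathrm{wide}}\,\kappa/\sqrt{\delta_N}$, the latter obtained from $\widehat U_{\mathrm{wide}}^\top\widehat M_{\mathrm{wide}}$ via Step 1. The main obstacle is the rotation $\widehat H$. By Assumption \ref{asp:incoherence}, $\lambda_{\min}(U_0^\top U_0)\gtrsim \delta_N$, so $\|(U_0^\top U_0)^{-1}\|\lesssim \delta_N^{-1}$ and $\|H\|\lesssim \delta_N^{-1/2}$. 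A perturbation bound for the least-squares coefficient then gives
\[
\|\widehat H - R_1^\top H R_2\|_F \lesssim \delta_N^{-1/2}\bigl(\|\widehat U_{\mathrm{tall}}-U_{\mathrm{tall}}R_1\|_F+\|\widehat U_{\mathrm{wide}}-U_{\mathrm{wide}}R_2\|_F\bigr).
\]
This requires the hypothesis $\max\{\calR_{\mathrm{wide}},\calR_{\mathrm{tall}}\}/(\lambda_{\min}\sqrt{\delta_N\delta_T})\to_p 0$, which keeps $\widehat U_0^\top\widehat U_0$ invertible with the same order.

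Multiplying each factor out gives
\[
\|\widehat M-M\|_F = O_p\!\left(\frac{\kappa\max\{\calR_{\mathrm{wide}},\calR_{\mathrm{tall}}\}}{\sqrt{\delta_N\delta_T}}\right).
\]
The delicate bookkeeping is tracking the powers of $\delta_N$ and $\delta_T$ so that they combine to exactly $(\delta_N\delta_T)^{-1/2}$. This works because $\lambda_{\max}\|H\|$, combined with the $U$-error scaling of $\delta_T^{-1/2}$ and the wide-block error scaling of $\delta_N^{-1/2}$, yields precisely the stated factor after using $\lambda_{\max}/\lambda_{\min}=\kappa$.
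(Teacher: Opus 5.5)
Your architecture is the paper's: Davis--Kahan on each block, the least-squares rotation, replacing $(\widehat D_{\mathrm{wide}},\widehat V_{\mathrm{wide}})$ by $\widehat B=\widehat U_{\mathrm{wide}}^\top\widehat M_{\mathrm{wide}}$, and a three-term telescoping of $\widehat M-M$. The gap is in the rate bookkeeping you leave to the last sentence, which does not go through with the bounds you state.

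First, your least-squares perturbation bound drops a factor. Let $E_0=\widehat U_0-U_0R_1$, $E_w=\widehat U_{\mathrm{wide}}-U_{\mathrm{wide}}R_2$ and $H'=R_1^\top HR_2$. Then $\widehat U_{\mathrm{wide}}=(\widehat U_0-E_0)H'+E_w$, so $\widehat H-H'=(\widehat U_0^\top\widehat U_0)^{-1}\widehat U_0^\top(E_w-E_0H')$. Since $\|H'\|\asymp\delta_N^{-1/2}$, this only gives $\|\widehat H-H'\|_F\lesssim\delta_N^{-1/2}\|E_w\|_F+\delta_N^{-1}\|E_0\|_F$, which is the paper's $\frac{N}{N_0}\frac{\calR_{\calT}}{\lambda_{\min,\calT}}$ term.

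Second, and more importantly, you only use $\|M_{\mathrm{wide}}\|\le\lambda_{\max}$. With $B=D_{\mathrm{wide}}V_{\mathrm{wide}}^\top$, your third term is then bounded by $\|H\|\,\|\widehat B-R_2^\top B\|_F\lesssim\delta_N^{-1/2}\cdot\kappa\calR_{\mathrm{wide}}/\sqrt{\delta_N}=\kappa\calR_{\mathrm{wide}}/\delta_N$. The second term likewise contains $\kappa\calR_{\mathrm{wide}}/\delta_N$, and after the correction above also $\kappa\calR_{\mathrm{tall}}/(\delta_N\sqrt{\delta_T})$. Neither is $O(\kappa\max\{\calR_{\mathrm{wide}},\calR_{\mathrm{tall}}\}/\sqrt{\delta_N\delta_T})$ in general (e.g.\ $\delta_N\to0$ with $\delta_T$ fixed). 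So the claim that the powers of $\delta_N,\delta_T$ combine ``precisely'' is not supported.

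The missing ingredient is the \emph{upper} half of Assumption~\ref{asp:incoherence}. From $\lambda_{\max}(U_{(N_0)}^\top U_{(N_0)})\le c_2\delta_N$ you get $\|M_{\mathrm{wide}}\|\le\|U_{(N_0)}\|\lambda_{\max}\lesssim\sqrt{\delta_N}\,\lambda_{\max}$. Hence $\kappa_{\mathrm{wide}}\lesssim\kappa$ and $\|\widehat B\|\lesssim\sqrt{\delta_N}\,\lambda_{\max}$. You also get $\|\widehat B-R_2^\top B\|_F\le\calR_{\mathrm{wide}}+\|\widehat U_{\mathrm{wide}}-U_{\mathrm{wide}}R_2\|_F\|M_{\mathrm{wide}}\|\lesssim\kappa\calR_{\mathrm{wide}}$.

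This $\sqrt{\delta_N}$ is exactly what cancels $\|H\|\lesssim\delta_N^{-1/2}$. Indeed $HB=O^\top DV^\top$ has norm $\lambda_{\max}$, so $H$ and $B$ must effectively be paired rather than bounded separately by $\delta_N^{-1/2}$ and $\lambda_{\max}$. The hypothesis $\max\{\calR_{\mathrm{wide}},\calR_{\mathrm{tall}}\}/(\lambda_{\min}\sqrt{\delta_N\delta_T})\to_p0$, together with the corrected perturbation bound, still gives $\|\widehat H\|\lesssim\delta_N^{-1/2}$. Your three terms then become $O_p$ of $\kappa\calR_{\mathrm{tall}}/\sqrt{\delta_T}$, $\kappa(\calR_{\mathrm{wide}}/\sqrt{\delta_N}+\calR_{\mathrm{tall}}/\sqrt{\delta_N\delta_T})$ and $\kappa\calR_{\mathrm{wide}}/\sqrt{\delta_N}$, which yields the theorem. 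This is precisely how the paper closes the argument: it groups $\widehat H_{\mathrm{adj}}$ with $Q_{\calW}$, uses $\|Q_\calW\|\lesssim\sqrt{N_0/N}$, and invokes $\kappa_\calW\lesssim\kappa$.
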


Theorem \ref{thm:convergence_rate_mnar} highlights the advantage of our estimator. A discussion similar to that in Section \ref{sec:asymp_obs} applies. For simplicity, consider a typical case in which \(\kappa\) is bounded and \(N_0 \geq cN\) and \(T_0 \geq cT\) for some \(c>0\). First, consider the most favorable case, \(M=M_1\). In this case, the convergence rate of our estimator is \(O_p(J)\) provided that \(g_{\iota,k}(\cdot)\) and \(q_{\iota,k}(\cdot)\) are sufficiently smooth. By contrast, if we estimate the submatrices \(M_{\mathrm{tall}}\) and \(M_{\mathrm{wide}}\) using an MNAR low-rank method (e.g., \cite{bai2021matrix}), the convergence rate would be \(O_p(\sqrt{K_1(N+T)})\), which is much larger than ours.

Next, consider the least favorable case, \(M=M_4\). In this case, the convergence rate of our estimator is \(O_p(\sqrt{K_4(N+T)})\), provided that \(g_{\iota,k}(\cdot)\) and \(q_{\iota,k}(\cdot)\) are sufficiently smooth and \(J \ll \sqrt{K_4(N+T)}\). This rate coincides with that obtained by applying an MNAR low-rank estimator to the submatrices \(M_{\mathrm{tall}}\) and \(M_{\mathrm{wide}}\).

In addition, if \(M=M_2\), the convergence rate of our estimator is \(O_p(\sqrt{K_2 T})\) provided that \(g_{\iota,k}(\cdot)\) and \(q_{\iota,k}(\cdot)\) are sufficiently smooth and \(J \ll \sqrt{K_2 T}\), whereas a low-rank approach yields the rate \(O_p(\sqrt{K_2(N+T)})\). Similarly, if \(M=M_3\), the convergence rate of our estimator is \(O_p(\sqrt{K_3 N})\), whereas a low-rank approach yields \(O_p(\sqrt{K_3(N+T)})\). Therefore, when \(M\) is at least partially explained by observable characteristics \(X\) and \(Z\), incorporating this information can substantially improve estimation accuracy.

\section{Simulated Experiments}\label{sec:simul}

To demonstrate the practical merits and finite-sample performance of our methodology, we conducted several sets of simulation experiments.

\subsection{Change in the relative size of each part}\label{sec:simul_alpha}

First, to study how the relative advantage of our estimator over existing methods varies with the contribution of each component, we change the component weights and compare the estimation performance across estimators. Specifically, we consider the model
\[
M = \alpha_1 M_1 + \alpha_2 M_2 + \alpha_3 M_3 + \alpha_4 M_4,
\]
where \(\sum_{r=1}^4 \alpha_r = 1\) and \(\|M_r\|_F = 2\sqrt{NT}\) for all \(1 \le r \le 4\). We vary the values of \(\alpha_r\) and evaluate the mean squared error of the estimators.

\paragraph{Data generating process.} 

We consider eight characteristics, with \(x_i = [x_{1,i}, x_{2,i}, x_{3,i}, x_{4,i}]^\top\) and
\(z_t = [z_{1,t}, z_{2,t}, z_{3,t}, z_{4,t}]^\top\).
We draw \(x_{1,i} \overset{\mathrm{i.i.d.}}{\sim} \mathrm{Unif}[-1,1]\),
\(x_{2,i} \overset{\mathrm{i.i.d.}}{\sim} \mathrm{Unif}[-0.5,0.5]\),
\(x_{3,i} \overset{\mathrm{i.i.d.}}{\sim} \calN(0,0.2^2)\), and
\(x_{4,i} \overset{\mathrm{i.i.d.}}{\sim} \calN(0,0.3^2)\).
We generate \(z_{1,t}, z_{2,t}, z_{3,t}, z_{4,t}\) in the same way.

For the matrix \(M_1\), we set
\begin{gather*}
g_{1,k}(x_i)
= b^{(1,k)}_{0}
+ \sum_{d=1}^4 \Big( b^{(1,k)}_{d,1} x_{d,i} + b^{(1,k)}_{d,2} x_{d,i}^2 + b^{(1,k)}_{d,3} x_{d,i}^3 + b^{(1,k)}_{d,4} x_{d,i}^4 \Big),
\qquad k \leq K_1 = 17,
\end{gather*}
and draw the coefficients \(b^{(1,k)}_{0}\) and \(b^{(1,k)}_{d,j}\) from the standard normal distribution.
Similarly, we set
\begin{gather*}
q_{1,k}(z_t)
= a^{(1,k)}_{0}
+ \sum_{d=1}^4 \Big( a^{(1,k)}_{d,1} z_{d,t} + a^{(1,k)}_{d,2} z_{d,t}^2 + a^{(1,k)}_{d,3} z_{d,t}^3 + a^{(1,k)}_{d,4} z_{d,t}^4 \Big),
\qquad k \leq K_1 = 17,
\end{gather*}
and draw the coefficients \(a^{(1,k)}_{0}\) and \(a^{(1,k)}_{d,j}\) from the standard normal distribution.

For the matrix \(M_2\), we set \(K_2=3\) and generate \(G_2(X)\) using the same specification as above. In addition, we generate \(v_{1,t} \in \bbR^{3}\) i.i.d.\ from \(\calN\!\big(0,\diag(0.5,1,1.5)\big)\), and stack them into \(V_1 = [v_{1,1}, \ldots, v_{1,T}]^\top\).

For the matrix \(M_3\), we set \(K_3=3\) and generate \(Q_2(Z)\) using the same specification as above. We generate \(w_{1,i} \in \bbR^{3}\) i.i.d.\ from \(\calN\!\big(0,\diag(0.5,1,1.5)\big)\), and stack them into \(W_1 = [w_{1,1}, \ldots, w_{1,N}]^\top\).

For the matrix \(M_4\), we draw \(w_{2,i} \in \bbR^{3}\) i.i.d.\ from \(\calN\!\big(0,\diag(0.5,1,1.5)\big)\) and \(v_{2,t} \in \bbR^{3}\) i.i.d.\ from \(\calN\!\big(0,1.5^2 I_3\big)\).
Lastly, we normalize all four matrices so that \(\|M_r\|_F = 2\sqrt{NT}\) for \(r=1,2,3,4\). We generate the noise entries i.i.d.\ from \(\calN(0,0.5^2)\).

For the fully observed case, we set \(N=T=200\). For the MAR (missing at random) case, we set \(N=T=400\) and the observation probability \(p=0.6\). For the MNAR (missing not at random) case, we set \(N=T=400\) and \(N_0=T_0=200\).

\paragraph{Results.} 

Here, we use a polynomial sieve with \(J=5\), and we set the number of iterations to \(100\). We vary \(\alpha_r\) under the restrictions \(\sum_{r=1}^4 \alpha_r = 1\) and \(\alpha_2 = \alpha_3\). In addition, to keep the rank of \(M\) constant, we restrict attention to cases in which \(\alpha_r \geq 0.01\) for all \(r=1,2,3,4\).

\begin{figure}[h!]
	\centering
    \caption{AMSE under different values of \(\alpha_r\)}
	\includegraphics[width=\textwidth]{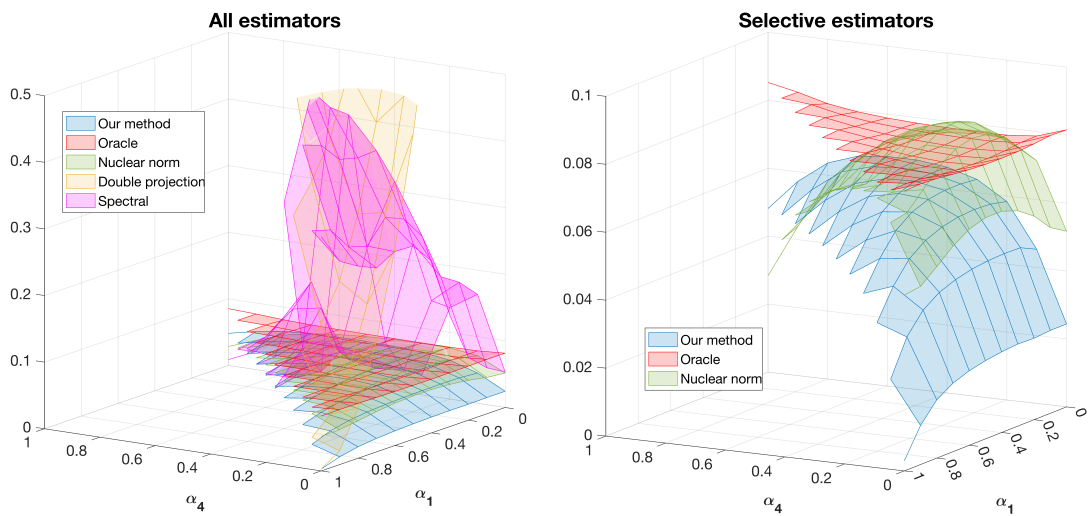}
    \\ \noindent {\small Footnote: We vary \(\alpha_r\) under the restrictions \(\sum_{r=1}^4 \alpha_r = 1\) and \(\alpha_2 = \alpha_3\).}
    \raggedright
	\label{fig:Mesh_obs_alpha}
\end{figure}

We first study the fully observed case. Figure \ref{fig:Mesh_obs_alpha} reports the AMSE (average mean squared error) of the estimators. Here, ``Oracle'' denotes the spectral estimator with known \(K\); ``Nuclear norm'' denotes the plain nuclear-norm-penalized estimator; ``Double projection'' denotes \(P_X Y P_Z\); and ``Spectral'' denotes the spectral estimator with an estimated \(K\). For rank estimation, we use the eigenvalue-ratio method of \cite{ahn2013eigenvalue}.

From the left panel, we see that, in general, the double projection estimator and the spectral estimator with an estimated rank perform poorly relative to the other estimators. From the right panel, we find that our method performs better than the spectral estimator with known \(K\). The AMSE of the oracle estimator is quite stable and is not sensitive to changes in \(\alpha_r\). In contrast, the AMSEs of our method and the nuclear-norm-penalized estimator are strongly affected by \(\alpha_r\). Overall, our method outperforms nuclear-norm penalization except when \(\alpha_1\) is very small and \(\alpha_4\) is large. When \(\alpha_1\) is large and \(\alpha_4\) is small, our estimator performs particularly well.

\begin{figure}[h!]
	\centering
    \caption{\((AMSE_{\mathrm{other}} - AMSE_{\mathrm{our}})/AMSE_{\mathrm{our}}\) under different values of \(\alpha_r\)}
	\includegraphics[width=\textwidth]{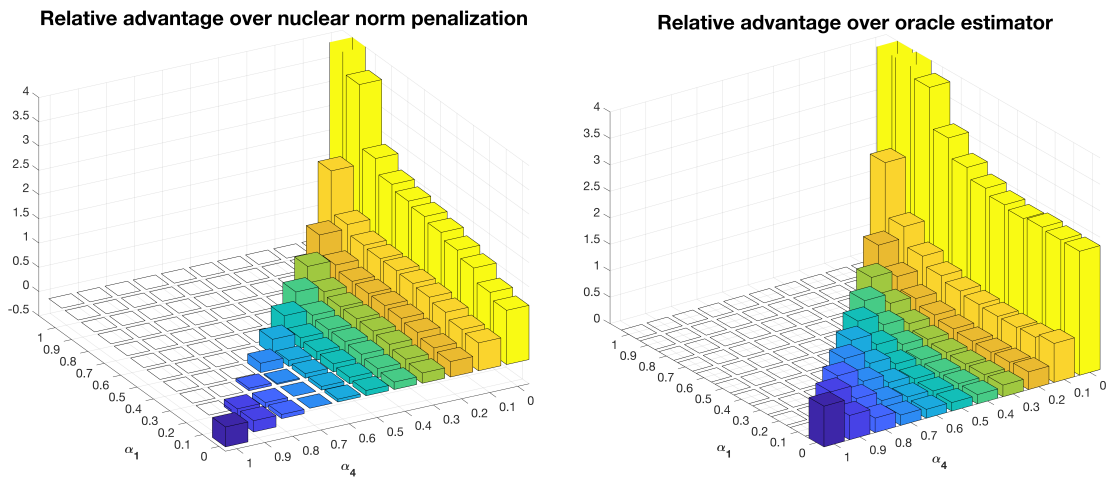}
    \\ \noindent {\small Footnote: In the left panel, the value at \(\alpha_1 = 1\) and \(\alpha_4 = 0.01\) is \(16.66\). In the right panel, the value at \(\alpha_1 = 1\) and \(\alpha_4 = 0.01\) is \(27.12\).}
    \raggedright
	\label{fig:Comp_obs_alpha}
\end{figure}

To assess the relative advantage of our estimator over others, Figure \ref{fig:Comp_obs_alpha} plots \((AMSE_{\mathrm{other}} - AMSE_{\mathrm{our}})/AMSE_{\mathrm{our}}\). Relative to nuclear-norm penalization, the advantage of our estimator increases as \(\alpha_1\) increases. Roughly speaking, the advantage also becomes larger as \(\alpha_4\) decreases. In particular, when \(\alpha_1\) is close to zero (e.g., \(\alpha_1 = 0.01\)), the relative performance improves as \(\alpha_2=\alpha_3\) increases and \(\alpha_4\) decreases. In the right panel, which compares our estimator with the oracle estimator, the dependence on \(\alpha_4\) is less clear; nevertheless, we still observe that the relative advantage increases with \(\alpha_1\).

\begin{figure}[h!]
	\centering
    \caption{Performance comparison in the MAR case}
	\includegraphics[width=\textwidth]{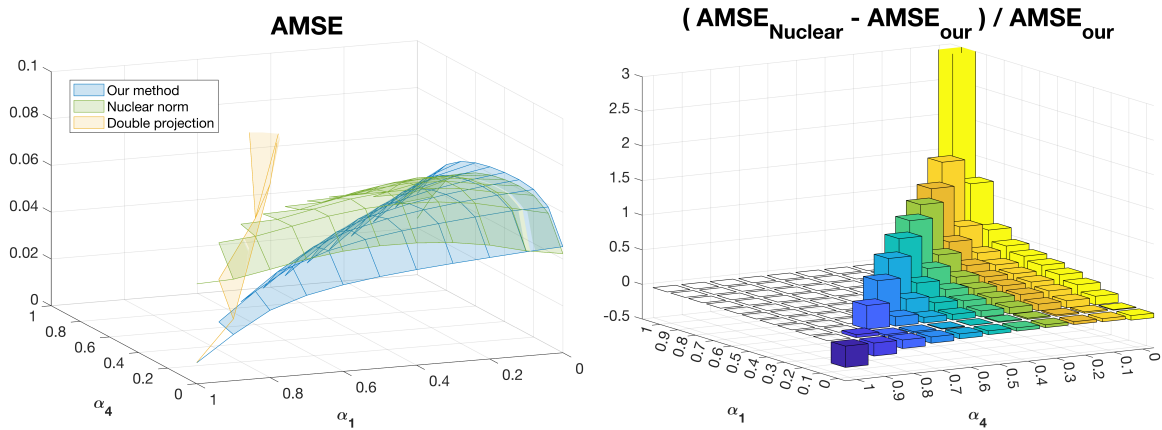}
	\label{fig:graph_miss_alpha}
    \\ \noindent {\small Footnote: In the right panel, the value at \(\alpha_1 = 1\) and \(\alpha_4 = 0.01\) is \(3.77\).}
    \raggedright
\end{figure}

Next, we study the MAR (missing at random) case. Figure \ref{fig:graph_miss_alpha} reports the AMSE (average mean squared error) of the estimators as well as their relative performance. Here, we include the double projection method \(\big(p^{-1} P_X(\Omega \circ Y) P_Z\big)\) and the nuclear-norm-penalized estimator, which is a standard approach in the MAR setting. We find that the double projection method performs very poorly except when \(\alpha_1 = 1\). Relative to nuclear-norm penalization, the advantage of our method increases as \(\alpha_1\) increases. However, the pattern with respect to \(\alpha_4\) is less clear than in the fully observed case. This may be because, in the MAR setting, we cannot separately estimate \(M_2\), \(M_3\), and \(M_4\). When \(\alpha_1\) is very small and \(\alpha_4\) is relatively large, nuclear-norm penalization performs better than our method.

\begin{figure}[h!]
	\centering
    \caption{Performance comparison in the MNAR case}
	\includegraphics[width=\textwidth]{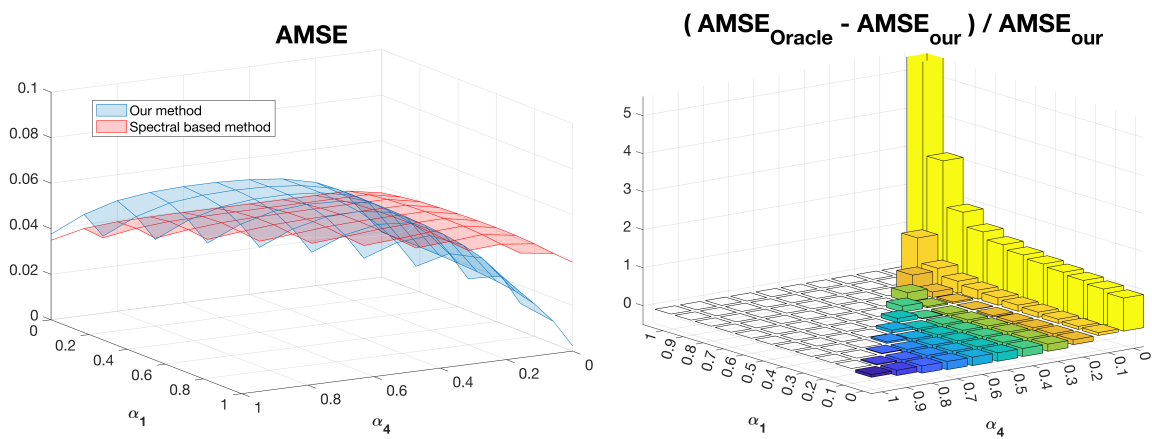}
	\label{fig:graph_mnar_alpha}
    \\ \noindent {\small Footnote: In the right panel, the value at \(\alpha_1 = 1\) and \(\alpha_4 = 0.01\) is \(15.58\).}
    \raggedright
\end{figure}

Lastly, we study the MNAR (missing not at random) case. Figure \ref{fig:graph_mnar_alpha} reports the AMSE of the estimators and the relative advantage of our estimator. Here, we assume the rank is known and compare our method with the standard spectral-based estimator for the MNAR setting in \cite{bai2021matrix,yan2024entrywise}. We find that when \(\alpha_1\) is large and/or \(\alpha_4\) is small, our method generally outperforms the spectral-based estimator. Conversely, when \(\alpha_1\) is small and/or \(\alpha_4\) is large, the spectral-based estimator typically performs better. However, the relative advantage when \(\alpha_1\) is large (or \(\alpha_4\) is small) is substantially greater than the relative disadvantage when \(\alpha_1\) is small (or \(\alpha_4\) is large). Moreover, even when \(\alpha_1\) is close to zero, our method can still perform better when \(\alpha_2\) and \(\alpha_3\) are large.

In summary, across all settings, our method performs substantially better than the competing estimators when \(\alpha_1\) is large and/or \(\alpha_4\) is small. When \(\alpha_1\) is small and/or \(\alpha_4\) is large, the disadvantage of our method is relatively mild compared to the gains achieved when \(\alpha_1\) is large and/or \(\alpha_4\) is small. 

Additionally, to examine how our estimator’s relative advantage over existing methods varies with the rank of each component, we vary the ranks and compare estimation performance. Overall, our method performs markedly better than competing estimators when $K_1$ is large. In contrast, when $K_1$ is small and $K_4$ is large, the advantage is more modest and performance is comparable to that of other estimators. For details, please refer to Section \ref{sec:add_sim} in the Appendix.

\subsection{Simulated tobacco sales experiment}\label{sec:real_data}

In this section, we conduct a real-data experiment using the tobacco sales data in \cite{abadie2010synthetic}, which is widely used in the literature. In 1988, California introduced the first major anti-tobacco legislation in the United States (Proposition 99). To study the effect of this legislation on tobacco sales, \cite{abadie2010synthetic} used per-capita cigarette sales data collected from 1970 to 2000 across 38 U.S.\ states with no anti-tobacco legislation prior to 2000 (\(N=38, T=31\)). We encode these data into a \(38 \times 31\) matrix \(Y\), where the entry \(y_{it}\) represents the ``potential'' outcome of per-capita cigarette sales (in packs) for state \(i\) in year \(t\) under ``control,'' i.e., in the absence of any intervention. To generate missing entries, we artificially introduce interventions (i.e., missingness) for a subset of states: in each iteration, we randomly select 8 states to adopt an intervention (e.g., a tobacco control program) starting from period \(T_0+1\). After rearranging the matrix, this yields the block-missing pattern shown in Figure \ref{fig:MNAR}, with an \(8 \times (T-T_0)\) missing submatrix.

For state-level characteristics, we use the time-averaged retail price of cigarettes, log per-capita state personal income, the percentage of the population aged 18--24, the percentage of adults completing four years of college or more, and per-capita beer consumption. Most of these variables are averaged over the 1970--2000 period. In addition, as a proxy for a state's general preference for tobacco, we use per-capita cigarette sales in 2001. For year-level characteristics, we use log per-capita real GDP and the state-average retail price of cigarettes in each year. We also use the average per-capita cigarette sales of Florida and Michigan as a proxy for general tobacco preference in each year. Although Florida and Michigan are not included among the above 38 states because they had interventions before 2000, the effects of those interventions were relatively mild compared to other treated states. Appendix Section \ref{ref:data_desc} provides additional details on the construction of these characteristics.

We compare the performance of our estimator with that of the spectral-based estimator in \cite{bai2021matrix,yan2024entrywise}, which is a standard method for block-missing patterns. This approach estimates the ``tall'' and ``wide'' submatrices using a spectral estimator. For rank estimation, we use the eigenvalue-ratio method of \cite{ahn2013eigenvalue}. For the projection step in our method, we use a second-order polynomial sieve (\(J=2\)). We set the number of iterations to 100.

We first compare the AMSE (average mean squared error) over the missing entries. Specifically, in each iteration, we sum the squared estimation errors over all \(8(T-T_0)\) missing entries and divide by \(8(T-T_0)\), and then average this quantity across iterations. We compute the estimation error as the difference between the estimated value and the observed per-capita cigarette sales for each missing entry. The first two rows of Table \ref{tab:tobacco_1} report the results for different adoption times \(T_0\). We find that our method outperforms the spectral-based estimator in all cases. In particular, when the number of observed periods is relatively small (i.e., \(T_0\) is small), the performance gap is larger.
\begin{table}[htb]
  \centering
  \caption{Average mean squared errors}
  \begin{tabular}{c|c|cccc}
    \hline\hline
    Target parameter & Method & \(T_0 = 10\) & \(15\) & \(20\) & \(25\) \\[0.5ex]
    \hline
    \multirow{2}[1]{*}{Each missing element} 
      & Ours     & 239.28 & 219.69 & 217.05 & 181.76 \\[0.5ex]
      & Spectral & 268.28 & 238.57 & 233.12 & 194.19 \\[0.5ex]
    \hline
    \multirow{2}[1]{*}{Average of missing elements in each year} 
      & Ours     & 43.56 & 43.67 & 32.19 & 27.01 \\[0.5ex]
      & Spectral & 50.45 & 48.19 & 34.91 & 27.91 \\[0.5ex]
    \hline
    \multirow{2}[2]{*}{Average of all missing elements} 
      & Ours     & 32.89 & 35.77 & 25.28 & 23.75 \\[0.5ex]
      & Spectral & 41.35 & 40.41 & 28.69 & 25.47 \\[0.5ex]
    \hline
  \end{tabular}
  \label{tab:tobacco_1}
\end{table}

Moreover, we consider the AMSE of (i) the average of the missing entries in each year and (ii) the average of all missing entries. For the AMSE of the year-by-year averages, in each iteration we compute the average of the missing entries in each post-intervention year, compute the squared estimation error for each such average, sum these squared errors, and divide by \(T-T_0\). We then average this quantity across iterations. For the AMSE of the overall average, in each iteration we compute the average of all missing entries, compute its squared estimation error, and then average it across iterations. These average-type targets have the advantage that the noise in outcomes is averaged out; therefore, averages of \(y_{it}\) are close to averages of \(m_{it}\). The last four rows of Table \ref{tab:tobacco_1} report the results for different adoption times \(T_0\). We find that our method outperforms the spectral-based estimator, and the performance gap increases when the number of observed periods is small (i.e., when \(T_0\) is small).

As an alternative, we also consider different proxies for tobacco preference. For the proxy of each state's general preference for tobacco, we use the time average of per-capita cigarette sales over the full sample period when estimating the ``wide'' submatrix, and we use the average over the pre-intervention period \(1,\ldots,T_0\) when estimating the ``tall'' submatrix. Similarly, for the proxy of each year's general preference for tobacco, we use the average per-capita cigarette sales across the 30 control states in each year when estimating the ``wide'' submatrix, and we use the average across all 38 states in each year when estimating the ``tall'' submatrix.

\begin{table}[htb]
  \centering
  \caption{Average mean squared errors}
  \begin{tabular}{c|c|cccc}
    \hline\hline
    Target parameter & Method & \(T_0 = 10\) & \(15\) & \(20\) & \(25\) \\[0.5ex]
    \hline
    \multirow{2}[1]{*}{Each missing element} 
      & Ours     & 218.32 & 211.43 & 212.12 & 175.56 \\[0.5ex]
      & Spectral & 268.28 & 238.57 & 233.12 & 194.19 \\[0.5ex]
    \hline
    \multirow{2}[1]{*}{Average of missing elements in each year} 
      & Ours     & 40.26 & 42.67 & 32.02 & 25.35 \\[0.5ex]
      & Spectral & 50.45 & 48.19 & 34.91 & 27.91 \\[0.5ex]
    \hline
    \multirow{2}[2]{*}{Average of all missing elements} 
      & Ours     & 30.78 & 34.83 & 25.82 & 22.84 \\[0.5ex]
      & Spectral & 41.35 & 40.41 & 28.69 & 25.47 \\[0.5ex]
    \hline
  \end{tabular}
  \label{tab:tobacco_2}
\end{table}

Table \ref{tab:tobacco_2} reports the results when we use these alternative proxies as characteristics. We find that the performance of our method improves in most cases, and its relative advantage becomes larger. This type of proxy is not fully consistent with the theory because it may violate the exogeneity condition; however, because the proxy averages out outcome noise, the resulting endogeneity may be negligible in practice. In our experiment, the results using these proxies are indeed favorable.

In summary, the empirical results suggest that incorporating side information can improve estimation of the control potential outcomes in causal panel settings, relative to typical low-rank methods.

\section{Concluding Remarks}

This paper proposes a flexible framework for high-dimensional matrix estimation that systematically incorporates rich side information on both rows and columns. By decomposing the signal into components explained jointly by $(X, Z)$, by $X$ alone, by $Z$ alone, and by neither, and by estimating these components using sieve projection combined with nuclear-norm penalization, our approach accommodates nonlinear covariate effects, avoids explicit rank selection for each component, and automatically thresholds weak or negligible signals. We establish convergence rates that demonstrate robustness across diverse model configurations, matching specialized procedures in favorable settings while remaining competitive when side information is uninformative. We further extend the method to partially observed matrices under both MAR and MNAR mechanisms, including block-missing patterns motivated by causal panel data, and show through simulations and a tobacco-sales application that leveraging side information can substantially improve imputation accuracy and enhance treatment-effect estimation.

\newpage

\bibliographystyle{apalike}
\bibliography{side_info}


\newpage

\appendix

{\LARGE 
\begin{center}
    APPENDIX
\end{center}
}


\section{Data Descriptions}\label{ref:data_desc}

In this section, we describe the data used in our experiment and provide sources.

\begin{itemize}

\item per capita cigarette sales (in packs). Source: The Tax Burden on Tobacco by Orzechowski and Walker from Centers for Disease Control and Prevention (CDC).

\item time average retail price of cigarettes (in dollars): For each state, we derive the average of (annual) retail price of cigarettes over the 1970-2000 period. Here, the retail price includes the average cost and sales tax in the data of `The Tax Burden on Tobacco' (Orzechowski and Walker). We additionally converted it to 2000 dollars using the Consumer Price Index. Source: The Tax Burden on Tobacco by Orzechowski and Walker from Centers for Disease Control and Prevention (CDC).
  
\item per capita state personal income (logged): For each state, we derive the average of (annual) logged per capita state personal income over the 1970-2000 period. We converted the data of U.S. Bureau of Economic Analysis to 2000 dollars using the Consumer Price Index and changed it to the logged value. Source: U.S. Bureau of Economic Analysis (BEA).
  
\item percentage of the population age 18-24: For each state, we derive the average of the percentage of the population age 18-24 in 1970, 1980, 1990, 2000 U.S. Census. Source: Integrated Public Use Microdata Series (IPUMS USA).

\item percentage of adults completing four years of college or higher: For each state, we derive the average of the percentage of adults completing four years of college or higher in 1970, 1980, 1990, 2000 U.S. Census. Source: USDA, Economic Research Service.

\item per capita beer consumption (in gallons): For each state, we derive the average of (annual) per capita beer consumption over the 1977-2000 period because the data start from 1977. Source: Surveillance report \#121: `Apparent per capita alcohol consumption: national, state, and regional trends, 1977-2022' by National Institute on Alcohol Abuse and Alcoholism in NIH.

\item per capita real GDP (logged): We converted the data of World Bank Open Data to the logged value. Source: World Bank Open Data.

\item state average retail price of cigarettes (in dollars): For each year, we derive the average of retail price of cigarettes over 38 states. Here, the retail price includes the average cost and sales tax in the data of `The Tax Burden on Tobacco' (Orzechowski and Walker). We additionally converted it to 2000 dollars using the Consumer Price Index. Source: The Tax Burden on Tobacco by Orzechowski and Walker from Centers for Disease Control and Prevention (CDC).

\end{itemize}

\section{Additional Simulation: Change in the size of rank of each part}\label{sec:add_sim}

To study how the relative advantage of our estimator over existing methods varies with the rank of each component, we vary the ranks and compare the estimation performance. Specifically, we vary \(K_r\) subject to the constraints \(\sum_{r=1}^4 K_r = 15\), \(K_2 = K_3\), and \(K_r \ge 1\) for all \(r\) (If \(15 - (K_1 + K_4)\) is odd, we set \(K_2 = K_3 + 1\)). In addition, we fix \(\alpha_r\) such that \(\alpha_1 = \cdots = \alpha_4 = 0.25\).

\paragraph{Data generating process.} 

We generate the eight characteristics in the same way as in Section \ref{sec:simul_alpha}. For the matrix \(M_1\), we set
\begin{gather*}
g_{1,k}(x_i)
= b^{(1,k)}_{0}
+ \sum_{d=1}^4 \Big( b^{(1,k)}_{d,1} x_{d,i}
+ b^{(1,k)}_{d,2} x_{d,i}^2
+ b^{(1,k)}_{d,3} x_{d,i}^3 \Big),
\qquad \text{for } k \leq K_1,
\end{gather*}
and draw the coefficients \(b^{(1,k)}_{0}\) and \(b^{(1,k)}_{d,j}\) from the standard normal distribution. Similarly, we set
\begin{gather*}
q_{1,k}(z_t)
= a^{(1,k)}_{0}
+ \sum_{d=1}^4 \Big( a^{(1,k)}_{d,1} z_{d,t}
+ a^{(1,k)}_{d,2} z_{d,t}^2
+ a^{(1,k)}_{d,3} z_{d,t}^3 \Big),
\qquad \text{for } k \leq K_1,
\end{gather*}
and draw the coefficients \(a^{(1,k)}_{0}\) and \(a^{(1,k)}_{d,j}\) from the standard normal distribution.

For the matrix \(M_2\), we generate \(G_2(X)\) using the same specification as above. For \(V_1 = [v_{1,1}, \ldots, v_{1,T}]^\top\), we draw \(v_{1,t} \in \bbR^{K_2}\) i.i.d.\ from the first \(K_2\) coordinates of
\begin{gather}\label{eq:v_1}
\calN\!\big(0,\diag(1, 0.75^2, 1.25^2, 0.5^2, 1.5^2, 0.25^2, 1.75^2)\big).
\end{gather}
Similarly, for the matrix \(M_3\), we generate \(Q_2(Z)\) using the same specification as above and draw \(w_{1,i} \in \bbR^{K_3}\) i.i.d.\ from the first \(K_3\) coordinates of \eqref{eq:v_1}.

For the matrix \(M_4\), we draw \(w_{2,i} \in \bbR^{K_4}\) i.i.d.\ from the first \(K_4\) coordinates of
\[
\calN\!\Big(0,\diag(1, 0.75^2, 1.25^2, 0.75^2, 1.25^2, 0.5^2, 1.5^2, 0.5^2, 1.5^2, 0.25^2, 1.75^2, 0.25^2)\Big),
\]
and draw \(v_{2,t} \in \bbR^{K_4}\) i.i.d.\ from \(\calN(0,1.5^2 I_{K_4})\). Lastly, we normalize all matrices so that \(\|M_r\|_F = 2\sqrt{NT}\) for \(r=1,2,3,4\). We generate the noise entries i.i.d.\ from \(\calN(0,1.5^2)\).

\paragraph{Results.} 

We use a polynomial sieve with \(J=4\). The number of iterations and the sample size are the same as in Section \ref{sec:simul_alpha}. We first study the fully observed case. Figure \ref{fig:Mesh_obs_rank} reports the AMSE (average mean squared error) of the estimators. We consider the same set of estimators as in Section \ref{sec:simul_alpha}. The oracle estimator is the spectral estimator with known \(K\).

\begin{figure}[h!]
	\centering
    \caption{AMSE under different values of \(K_r\)}
	\includegraphics[width=\textwidth]{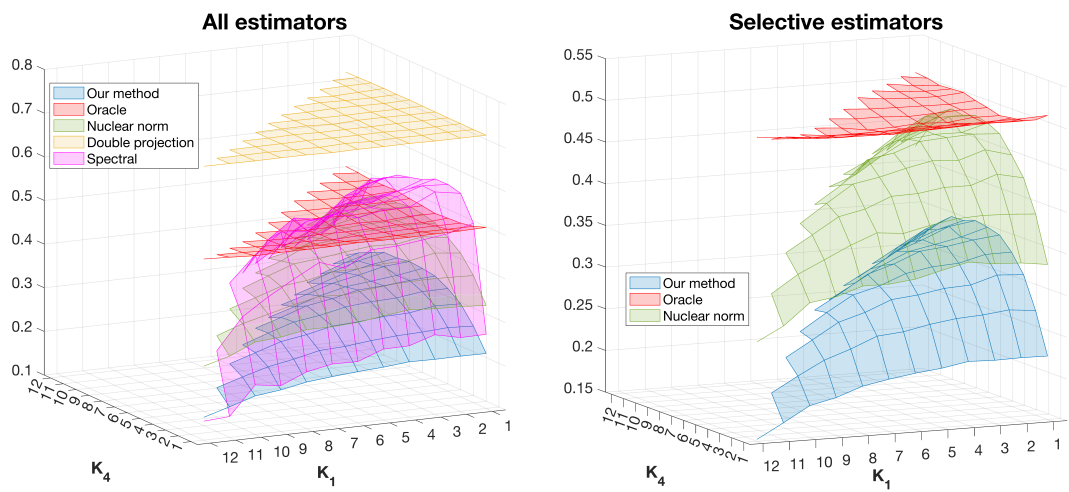}
    \\ \noindent {\small Footnote: We vary \(K_r\) subject to \(\sum_{r=1}^4 K_r = 15\) and \(K_2 = K_3\).}
    \raggedright
	\label{fig:Mesh_obs_rank}
\end{figure}

\begin{figure}[h!]
	\centering
    \caption{\((AMSE_{\mathrm{other}} - AMSE_{\mathrm{our}})/AMSE_{\mathrm{our}}\) under different values of \(K_r\)}
	\includegraphics[width=\textwidth]{Comp_obs_rank.png}
    \\ \noindent {\small Footnote: We vary \(K_r\) subject to \(\sum_{r=1}^4 K_r = 15\) and \(K_2 = K_3\).}
    \raggedright
    \label{fig:Comp_obs_rank}
\end{figure}

From the left panel, we see that the double projection estimator performs poorly and that the spectral estimator with an estimated rank behaves unstably. From the right panel, we find that our method outperforms the other estimators. In addition, the AMSE of the oracle estimator is quite stable and is not sensitive to changes in \(K_r\), whereas the AMSEs of our method and nuclear-norm penalization vary with \(K_r\). Overall, as \(K_1\) increases, our method tends to perform better.

To further assess the relative advantage of our estimator, Figure \ref{fig:Comp_obs_rank} plots \((AMSE_{\mathrm{other}} - AMSE_{\mathrm{our}})/AMSE_{\mathrm{our}}\). Relative to nuclear-norm penalization, the advantage of our estimator becomes larger as \(K_1\) increases. Roughly speaking, the advantage also increases as \(K_4\) decreases. In the right panel, which compares our estimator with the oracle estimator, the dependence on \(K_4\) is less clear; nevertheless, we still observe that the relative advantage increases with \(K_1\).

\begin{figure}[h!]
	\centering
    \caption{Performance comparison in the MAR case}
	\includegraphics[width=\textwidth]{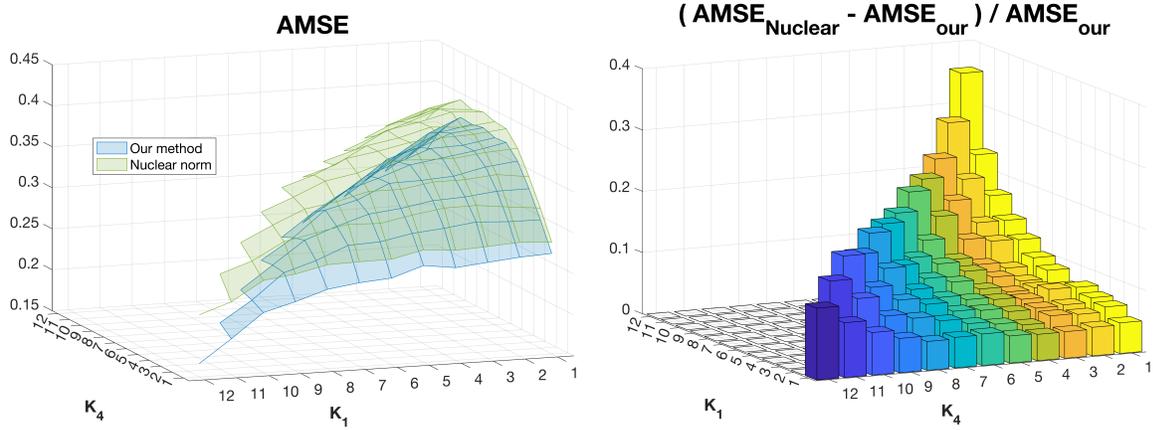}
	\label{fig:graph_miss_rank}
\end{figure}

Next, we study the MAR (missing at random) case. Figure \ref{fig:graph_miss_rank} reports the AMSE (average mean squared error) and the ratio between the AMSEs of our method and nuclear-norm penalization, which is a standard estimator in the MAR setting. Relative to nuclear-norm penalization, our method performs better, and its advantage increases as \(K_1\) increases. However, the pattern with respect to \(K_4\) is less clear.

\begin{figure}[h!]
	\centering
    \caption{Performance comparison in the MNAR case}
	\includegraphics[width=\textwidth]{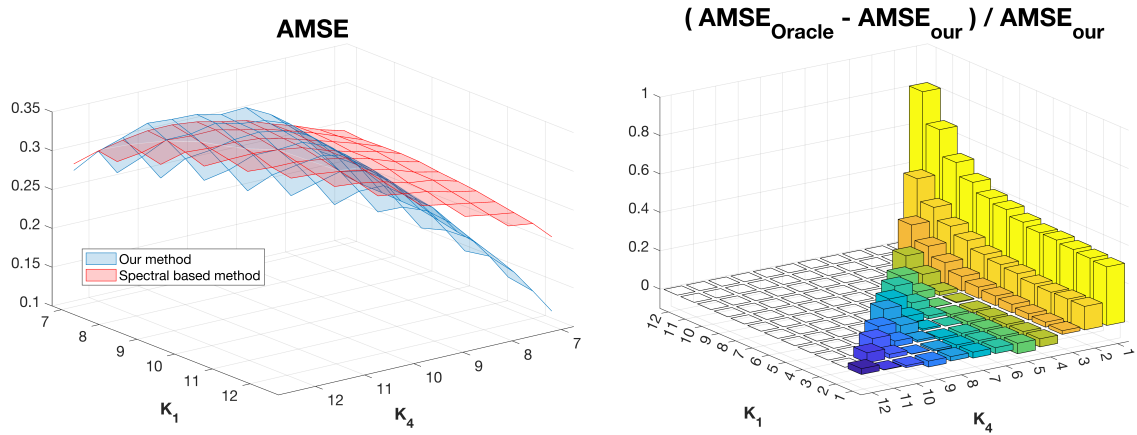}
	\label{fig:graph_mnar_rank}
    \\ \noindent {\small Footnote: In the right panel, the value at \(\alpha_1 = 1\) and \(\alpha_4 = 0.01\) is \(15.58\).}
    \raggedright
\end{figure}

Lastly, we study the MNAR (missing not at random) case. Figure \ref{fig:graph_mnar_rank} reports the AMSE and the relative advantage of our estimator. Here, we consider the same estimators as in Section \ref{sec:simul_alpha}. The spectral-based method refers to the approach that uses the oracle estimator to estimate \(M_{\mathrm{tall}}\) and \(M_{\mathrm{wide}}\).

When \(K_1\) is large and/or \(K_4\) is small, our method generally outperforms the spectral-based method. Conversely, when \(K_1\) is small and/or \(K_4\) is large, the spectral-based method performs better. However, the relative advantage in the former case is substantially larger than the relative disadvantage in the latter case. In summary, across all observation patterns, our method performs markedly better than the competing estimators when \(K_1\) is large. When \(K_1\) is small, the disadvantage of our method is relatively mild compared to the gains observed when \(K_1\) is large.

\section{Proof}

\subsection{Proofs of main results}

\subsubsection{Proof of Theorem \ref{thm:convergence_rate_obs}}

For $\iota \in \{ 1, 2\}$, define the sieve approximation error matrices as $R_{G_\iota} = G_\iota(X) - \Phi(X) B_\iota$ and $R_{Q_\iota} = Q_\iota(Z) - \Psi(Z) A_\iota$ where $A_\iota$ and $B_\iota$ are the sieve coefficient matrices consisting of $a_{\iota,k}$ and $b_{\iota,k}$ in Assumption \ref{asp:sieve}, respectively. First,
we note that
\begin{align}\label{eq:part_1}
\widehat{M}_1 &= P_X G_1 Q_1^\top P_Z + P_X G_2 V_1^\top P_Z + P_X W_1 Q_2^\top P_Z + P_X W_2 V_2^\top P_Z + P_X E P_Z \\
\nonumber&= G_1 Q_1^\top + G_2 V_1^\top P_Z + P_X W_1 Q_2^\top +  P_X W_2 V_2^\top P_Z  + P_X E P_Z + \textit{smoothing error\_1},
\end{align}
where
\begin{align*}
\textit{smoothing error\_1} &= (P_X - I_N) R_{G_1} Q_1^\top + G_1 R_{Q_1}^\top (P_Z - I_T) + (P_X - I_N) R_{G_1}R_{Q_1}^\top (P_Z - I_T)\\
&\ \ + (P_X - I_N) R_{G_2} V^\top P_Z + P_X W_1 R_{Q_2}^\top (P_Z - I_T).
\end{align*}
Then, because, by Assumption \ref{asp:sieve}, we have $\norm{R_{G_1}}_F = O_p \left( \frac{\sqrt{N K_1}}{J^{\gamma_1^G}}  \right)$, $\norm{R_{G_2}}_F = O_p \left( \frac{\sqrt{N K_2}}{J^{\gamma_1^G}}  \right)$, $\norm{R_{Q_1}}_F = O_p \left( \frac{\sqrt{T K_1}}{J^{\gamma_1^Q}}  \right)$, and $\norm{R_{Q_2}}_F = O_p \left( \frac{\sqrt{T K_3}}{J^{\gamma_2^Q}}  \right)$, we know
\begin{gather}\label{eq:smoothing_1}
\norm{\textit{smoothing error\_1}}_F = O_p\left( \frac{\sqrt{NT} K_1}{J^{\gamma_1^G}} + \frac{\sqrt{NT} K_1}{J^{\gamma_1^Q}} + \frac{\sqrt{NT} K_2}{J^{\gamma_2^G}} + \frac{\sqrt{NT} K_3}{J^{\gamma_2^Q}} \right) .    
\end{gather}
Next, for $\widehat{M}_2$, note that
\begin{align*}
P_X Y (I_T - P_Z) &= P_X G_1 Q_1^\top (I_T - P_Z) + P_X G_2 V_1^\top (I_T - P_Z) + P_X W_1 Q_2^\top (I_T - P_Z)\\
&\ \ + P_X W_2 V_2^\top (I_T - P_Z) + P_X E (I_T - P_Z) \\
& =  G_2 V_1^\top  (I_T - P_Z) +  P_X W_2 V_2^\top (I_T - P_Z)  + P_X E (I_T - P_Z) + \textit{smoothing error\_2},
\end{align*}
where
$$
\textit{smoothing error\_2} = P_X G_1 R_{Q_1}^\top (I_T - P_Z) + (P_X - I_N) R_{G_2} V_1^\top (I_T - P_Z) + P_X W_1 R_{Q_2}^\top (I_T - P_Z).
$$
By Assumption \ref{asp:sieve}, we have
$$
\norm{\textit{smoothing error\_2}} = O_p\left( \frac{\sqrt{NT} K_1}{J^{\gamma_1^Q}}  + \frac{\sqrt{NT} K_2}{J^{\gamma_2^G}} + \frac{\sqrt{NT} K_3}{J^{\gamma_2^Q}} \right) = o_p\left(\sqrt{T} \right).
$$
Then, because $\norm{P_X E (I_T - P_X)} \lesssim \sqrt{T}$ with high probability by Lemma \ref{lem:error_oper_norm}, we have $\norm{P_X E (I_T - P_X) + \textit{smoothing error\_2}} \leq \nu_2 = C_2 \sqrt{T}$ for some large $C_2>0$ with high probability. Hence, by setting $S = P_X E (I_T - P_X) + \textit{smoothing error\_2}$ and $L = M_2  (I_T - P_Z) +  P_X M_4 (I_T - P_Z) $, we can get by Lemma \ref{lem:nuclear} that
\begin{align}\label{eq:part_2}
\norm{\widehat{M}_2 - M_2  (I_T - P_Z) -  P_X M_4 (I_T - P_Z)}_F = O_p\left( \sqrt{K_2 + K_4} \min\{ \sqrt{T}, \norm{M_2}_F + \norm{P_X M_4}_F \} \right) .
\end{align}
For $\widehat{M}_3$, note that
\begin{align*}
(I_N - P_X) Y P_Z &=  (I_N - P_X) W_1 Q_2^\top  +  (I_N - P_X) W_2 V_2^\top P_Z  + (I_N - P_X) E P_Z + \textit{smoothing error\_3},
\end{align*}
where
$$
\textit{smoothing error\_3} = (I_N - P_X) R_{G_1} Q_1^\top P_Z + (I_N - P_X) R_{G_2} V_1^\top P_Z + (I_N - P_X) W_1 R_{Q_2}^\top (P_Z - I_T).
$$
By Assumption \ref{asp:sieve}, we have
$$
\norm{\textit{smoothing error\_3}} = O_p\left(  \frac{\sqrt{NT} K_1}{J^{\gamma_1^Q}} + \frac{\sqrt{NT} K_2}{J^{\gamma_2^G}} + \frac{\sqrt{NT} K_3}{J^{\gamma_2^Q}} \right) = o_p\left(\sqrt{N} \right).
$$
Then, because $\norm{(I_N - P_X) E P_X} \lesssim \sqrt{N}$ with high probability by Lemma \ref{lem:error_oper_norm}, we have $\norm{(I_N - P_X) E P_X + \textit{smoothing error\_3}} \leq \nu_3 = C_3 \sqrt{N}$ for some large $C_3>0$ with high probability. Hence, by setting $S = (I_N - P_X) E P_X + \textit{smoothing error\_3}$ and $L = (I_N - P_X) W_1 Q_2^\top  +  (I_N - P_X) W_2 V_2^\top P_Z $, we can derive by Lemma \ref{lem:nuclear} that
\begin{align}\label{eq:part_3}
\norm{\widehat{M}_3 - (I_N - P_X) M_3  -  (I_N - P_X) M_4 P_Z}_F = O_p\left( \sqrt{K_3 + K_4} \min\{ \sqrt{N}, \norm{M_3}_F + \norm{ M_4 P_Z}_F \} \right) .
\end{align}
Lastly, for $\widehat{M}_4$, note that
\begin{align*}
(I_N - P_X) Y (I_T - P_Z) &=  (I_N - P_X) W_2 V_2^\top (I_T - P_Z)  + (I_N - P_X) E (I_T - P_Z) + \textit{smoothing error\_4},
\end{align*}
where
\begin{align*}
 \textit{smoothing error\_4} &= (I_N - P_X) R_{G_1} R_{Q_1}^\top (I_T - P_Z) + (I_N - P_X) R_{G_2} V_1^\top (I_T - P_Z)\\
 & \ \ + (I_N - P_X) W_1 R_{Q_2}^\top (I_T - P_Z).   
\end{align*}
Then, by Assumption \ref{asp:sieve}, we have
$$
\norm{\textit{smoothing error\_4}} = O_p\left(  \frac{\sqrt{NT} K_1}{J^{(\gamma_1^G + \gamma_1^Q)}} + \frac{\sqrt{NT} K_2}{J^{\gamma_2^G}} + \frac{\sqrt{NT} K_3}{J^{\gamma_2^Q}} \right) = o_p\left(\sqrt{N} + \sqrt{T} \right).
$$
Then, because $\norm{(I_N - P_X) E (I_T - P_Z)} \lesssim \sqrt{N} + \sqrt{T}$ with high probability by Lemma \ref{lem:error_oper_norm}, we have $\norm{(I_N - P_X) E (I_T - P_Z) + \textit{smoothing error\_4}} \leq \nu_4 = C_4 \sqrt{N+T}$ for some large $C_4>0$ with high probability. Hence, by setting $S = (I_N - P_X) E (I_T - P_Z) + \textit{smoothing error\_4}$ and $L = (I_N - P_X) W_2 V_2^\top (I_T - P_Z)$, we can derive by Lemma \ref{lem:nuclear} that
\begin{align}\label{eq:part_4}
\norm{\widehat{M}_4 - (I_N - P_X) M_4 (I_T - P_Z)}_F = O_p\left( \sqrt{K_4} \min\{ \sqrt{N+T}, \norm{ M_4 }_F \} \right) .
\end{align}
From the relation \eqref{eq:part_1}, we have
\begin{align*}
\widehat{M} - M &= \widehat{M}_1 + \widehat{M}_2 + \widehat{M}_3 + \widehat{M}_4 - \left( M_1 + M_2 + M_3 + M_4 \right) \\
&= \widehat{M}_2 - M_2  (I_T - P_Z) -  P_X M_4 (I_T - P_Z) 
+ \widehat{M}_3 - (I_N - P_X) M_3  -  (I_N - P_X) M_4 P_Z \\
&\ \ + \widehat{M}_4 - (I_N - P_X) M_4 (I_T - P_Z) 
+ P_X E P_Z + \textit{smoothing error\_1}.
\end{align*}
Hence, we have from the bounds \eqref{eq:smoothing_1}, \eqref{eq:part_2}, \eqref{eq:part_3}, and \eqref{eq:part_4} with Lemma \eqref{lem:error_oper_norm} that
\begin{align*}
\norm{\widehat{M} - M}_F  
&= O_p\left( J + \sqrt{K_2 + K_4}   \min \left\{ \sqrt{T} , \norm{M_2}_F + \norm{P_X M_4}_F \right\} \right. \\
& \qquad  \quad   + \sqrt{K_3 + K_4}   \min \left\{ \sqrt{N} , \norm{M_3}_F + \norm{M_4 P_Z}_F \right\}  
+ \sqrt{K_4}   \min \left\{ \sqrt{N + T} , \norm{M_4}_F  \right\} \\
& \left. \qquad  \quad + \sqrt{NT} \left[ \frac{K_1}{J^{\gamma_1^G}} + \frac{K_1}{J^{\gamma_1^Q}} + \frac{K_2}{J^{\gamma_2^G}} + \frac{K_3}{J^{\gamma_2^Q}} \right]  \right) . \ \ \square
\end{align*}

\subsubsection{Proof of Corollary \ref{cor:singular_vector}}

It is easily derived from the Davis-Kahan theorem (see, e.g., Corollary 2.8 and Theorem 2.9 of \cite{chen2021spectral}). $\square$

\subsubsection{Proof of Theorem \ref{thm:convergence_rate_mar}}

First, note that
\begin{align}\label{eq:M_1_est}
\widehat{M}_1 &= \frac{1}{p} P_X (\Omega \circ M) P_Z + \frac{1}{p} P_X (\Omega \circ E) P_Z\\
\nonumber& = P_X  M P_Z + \frac{1}{p} P_X (( \Omega - p \boldsymbol{1}\boldsymbol{1}^\top )\circ M) P_Z + \frac{1}{p} P_X (\Omega \circ E) P_Z \\
\nonumber& = G_1 Q_1^\top + G_2 V_1^\top P_Z + P_X W_1 Q_2^\top +  P_X W_2 V_2^\top P_Z + \textit{smoothing error\_1}\\
\nonumber&\ \ + \frac{1}{p} P_X (( \Omega - p \boldsymbol{1}\boldsymbol{1}^\top )\circ M) P_Z + \frac{1}{p} P_X (\Omega \circ E) P_Z  ,
\end{align}
where $\textit{smoothing error\_1}$ is defined in \eqref{eq:part_1}. Note that
$$
\norm{ \frac{1}{p} P_X (( \Omega - p \boldsymbol{1}\boldsymbol{1}^\top )\circ M) P_Z }_F \leq  \frac{1}{p} \norm{\Phi (\Phi^\top \Phi)^{-1}} \norm{\Phi^\top (( \Omega - p \boldsymbol{1}\boldsymbol{1}^\top )\circ M) \Psi}_F \norm{\Psi (\Psi^\top \Psi)^{-1}}.
$$
By Assumption \ref{asp:basis}, we know $\norm{\Phi (\Phi^\top \Phi)^{-1}} \lesssim \frac{1}{\sqrt{N}}$ and $\norm{\Psi (\Psi^\top \Psi)^{-1}} \lesssim \frac{1}{\sqrt{T}}$. In addition, because
\begin{align*}
\bbE\left[ \left. \norm{\Phi^\top (( \Omega - p \boldsymbol{1}\boldsymbol{1}^\top )\circ M) \Psi}_F^2 \right|  M, X, Z \right] &= 
 \sum_{j_1,j_2} \bbE\left[ \left. \left(\sum_{it}(\omega_{it} - p) m_{it} \phi_{i,j_1} \psi_{t,j_2}\right)^2  \right|  M, X, Z \right] \\
&= \sum_{j_1,j_2} \sum_{it}\bbE[(\omega_{it} - p)^2 ] m^2_{it} \phi^2_{i,j_1} \psi^2_{t,j_2} \\
& = p \sum_{j_1,j_2} \sum_{it} m^2_{it} \phi^2_{i,j_1} \psi^2_{t,j_2}
 = p \sum_{it} m^2_{it} \norm{\phi_i}^2 \norm{\psi_t}^2 \\
& = O_p(p N T J^2)
\end{align*}
by the assumption that $\bbE[m_{it}^4]$, $\bbE[\phi_{ij}^4]$, and $\bbE[\psi_{tj}^4]$ are bounded, we have $ \norm{\Phi^\top (( \Omega - p \boldsymbol{1}\boldsymbol{1}^\top )\circ M) \Psi}_F = O_p( \sqrt{p N T} J)$. Hence, we have 
$$
\norm{ \frac{1}{p} P_X (( \Omega - p \boldsymbol{1}\boldsymbol{1}^\top )\circ M) P_Z }_F = O_p\left( \frac{J}{\sqrt{p}} \right) .
$$
By the similar token, we have $ \norm{\Phi^\top (\Omega \circ E) \Psi}_F = O_p( \sqrt{p N T} J)$ and
$$
\norm{ \frac{1}{p} P_X (\Omega \circ E)  P_Z }_F = O_p\left( \frac{J}{\sqrt{p}} \right) .
$$
In addition, from \eqref{eq:M_1_est}, we have
\begin{align*}
Y - \widehat{M}_1 &= G_2 V_1^\top (I_T - P_Z) + (I_N - P_X) W_1 V_2^\top + W_2 V_2^\top - P_X W_2 V_2^\top P_Z\\
& \ \ + E - \frac{1}{p} P_X (( \Omega - p \boldsymbol{1}\boldsymbol{1}^\top )\circ M) P_Z - \frac{1}{p} P_X (\Omega \circ E) P_Z  - \textit{smoothing error\_1}.
\end{align*}
Note that, under our assumptions, $\norm{ \frac{1}{p} P_X (( \Omega - p\boldsymbol{1}\boldsymbol{1}^\top )\circ M) P_Z }_F$, $\norm{ \frac{1}{p} P_X (\Omega \circ E)  P_Z }_F$, and $\norm{\textit{smoothing error\_1}}_F$ are $o_p(p^{1/2} \sqrt{N+T})$. Hence, we have
$$
\norm{\Omega \circ S'}  \leq \norm{\Omega \circ S'}_F \leq \norm{S'}_F = o_p\left(p^{1/2} \sqrt{N+T}\right)
$$
where $S' = \frac{1}{p} P_X (( \Omega - p \boldsymbol{1}\boldsymbol{1}^\top )\circ M) P_Z + \frac{1}{p} P_X (\Omega \circ E) P_Z  + \textit{smoothing error\_1}$. In addition, we have $\norm{\Omega \circ E} \lesssim p^{1/2}  \sqrt{N+T}$ with high probability by Lemma \ref{lem:orthogonal_oper_norm} since $\norm{\omega_{it} \epsilon_{it}}_{\psi_2} \leq p^{1/2} \sigma$. Hence, $\norm{\Omega \circ (E - S')} \lesssim p^{1/2}  \sqrt{N+T}$ with high probability and $\norm{\Omega \circ (E - S')} \leq \nu = C p^{1/2}  \sqrt{N+T}$ for some large $C > 0$ with high probability. Then, by setting $S = E - S'$ and $L = M_2 (I_T - P_Z) + (I_N - P_X) M_3 + M_4 - P_X M_4 P_Z$, we can get from Lemma \ref{lem:nuclear_missing} that
\begin{align*}
&\norm{\widehat{M}_{rest} - (M_2 (I_T - P_Z) + (I_N - P_X) M_3 + M_4 - P_X M_4 P_Z)}_F  \\
&= O_p\left( \sqrt{K^*}   \min \left\{ \frac{\sqrt{N + T}\left(1 + M_{\max} \right)}{\sqrt{p}} , \norm{M_2}_F + \norm{M_3}_F + \norm{M_4}_F \right\} \right)    .
\end{align*}
Then, because
\begin{align*}
\norm{\widehat{M}_{1} + \widehat{M}_{rest} - M}_F &\leq
\norm{\widehat{M}_{rest} - (M_2 (I_T - P_Z) + (I_N - P_X) M_3 + M_4 - P_X M_4 P_Z)}_F \\
&\ \ + \norm{ \frac{1}{p} P_X (( \Omega - p \boldsymbol{1}\boldsymbol{1}^\top )\circ M) P_Z + \frac{1}{p} P_X (\Omega \circ E) P_Z  }_F + \norm{\textit{smoothing error\_1}}_F,
\end{align*}
and
\begin{align*}
&\norm{ \frac{1}{p} P_X (( \Omega - p \boldsymbol{1}\boldsymbol{1}^\top )\circ M) P_Z + \frac{1}{p} P_X (\Omega \circ E) P_Z  }_F = O_p\left( \frac{J}{\sqrt{p}} \right),\\
&\norm{\textit{smoothing error\_1}}_F = O_p\left( \frac{\sqrt{NT} K_1}{J^{\gamma_1^G}} + \frac{\sqrt{NT} K_1}{J^{\gamma_1^Q}} + \frac{\sqrt{NT} K_2}{J^{\gamma_2^G}} + \frac{\sqrt{NT} K_3}{J^{\gamma_2^Q}} \right),
\end{align*}
as noted in \eqref{eq:smoothing_1}, we have the desired result. $\square$

\subsubsection{Proof of Theorem \ref{thm:convergence_rate_mnar}}

For notational simplicity, denote the subscripts `$tall$' and `$wide$' by $\calT$ and $\calW$. Here, $(U_\calT,D_\calT,V_\calT)$, $(U_\calW,D_\calW,V_\calW)$, and $(U,D,V)$ mean the SVD of `tall', `wide', and `full' matrices, respectively. First, by applying Corollary \ref{cor:singular_vector} to the tall and wide matrices, respectively, we have
$$
\norm{\widehat{U}_{\calW} O_{\calW} - U_{\calW}} = O_p\left( \frac{\calR_{\calW}}{\lambda_{\min,\calW}}\right), \quad \norm{\widehat{U}_{\calT} O_{\calT} - U_{\calT}} = O_p\left( \frac{\calR_{\calT}}{\lambda_{\min,\calT}}\right) .
$$
In addition, by Lemma \ref{lem:tech_MNAR}, we have $ U_{\calW} = U_{(N_0)} H_{\calW}$ where $U_{(N_0)} = [u_1, \cdots, u_{N_0}]^\top$ and $H_{\calW} = (U_{(N_0)}^\top U_{(N_0)})^{-1/2} G_{\calW}$ for some $K \times K$ orthogonal matrix $G_{\calW}$. Then, we have by Lemma \ref{lem:tech_MNAR} that
\begin{gather*}
\norm{\widehat{U}_{\calW} O_{\calW} - U_{\calW}}_F = \norm{\widehat{U}_{\calW} O_{\calW} - U_{(N_0)} H_{\calW}}_F = \norm{\widehat{U}_{\calW}  - U_{(N_0)} Q_{\calW}^{-1} }_F 
= O_p\left( \frac{\calR_{\calW}}{\lambda_{\min,\calW}}\right),\\
\norm{\widehat{U}_{\calW}Q_{\calW}  - U_{(N_0)}}_F = \norm{\widehat{U}_{\calW}  - U_{(N_0)} Q_{\calW}^{-1} }_F \norm{Q_{\calW}} = O_p\left( \frac{\sqrt{N_0}}{\sqrt{N}} \frac{\calR_{\calW}}{\lambda_{\min,\calW}}\right) ,
\end{gather*}
where $Q_{\calW}^{-1} = H_{\calW}O_{\calW}^\top$. Similarly, by Lemma \ref{lem:tech_MNAR}, we have $ U_{\calT} = U H_{\calT}$ where $H_{\calT} = (U^\top U)^{-1/2} G_{\calT}$ for some $K \times K$ orthogonal matrix $G_{\calT}$. Then we have
\begin{gather*}
\norm{\widehat{U}_{\calT} O_{\calT} - U_{\calT}}_F = \norm{\widehat{U}_{\calT} O_{\calT} - U H_{\calT}}_F = \norm{\widehat{U}_{\calT}  - U Q_{\calT}^{-1} }_F 
= O_p\left( \frac{\calR_{\calT}}{\lambda_{\min,\calT}}\right),\\
\norm{\widehat{U}_{\calT} Q_{\calT}  - U }_F = \norm{\widehat{U}_{\calT}  - U Q_{\calT}^{-1} }_F \norm{Q_{\calT}} = O_p\left( \frac{\calR_{\calT}}{\lambda_{\min,\calT}}\right),
\end{gather*}
where $Q_{\calT}^{-1} = H_{\calT} O_{\calT}^\top$. Define $R_1 = \widehat{U}_{\calW} Q_{\calW} - U_{(N_0)}$ and $R_2 = \widehat{U}_{\calT,(N_0)} Q_{\calT} - U_{(N_0)}$ where $U_{\calT,(N_0)} = [u_{\calT,1}, \cdots, u_{\calT,N_0}]^\top$ and $\widehat{U}_{\calT,(N_0)} = [\widehat{u}_{\calT,1}, \cdots, \widehat{u}_{\calT,N_0}]^\top$. Then, we have 
$$
\widehat{U}_{\calW}Q_{\calW} - R_1 = \widehat{U}_{\calT,(N_0)} Q_{\calT} - R_2 \quad \Longrightarrow \quad 
\widehat{U}_{\calW} = \widehat{U}_{\calT,(N_0)} H_{adj} + R_1 Q_{\calW}^{-1} - R_2 Q_{\calW}^{-1},
$$
where $H_{adj} = Q_{\calT} Q_{\calW}^{-1}$. Hence, we have
\begin{align*}
\widehat{H}_{adj} - H_{adj} &= \left( \widehat{U}_{\calT,(N_0)}^\top \widehat{U}_{\calT,(N_0)} \right)^{-1} \widehat{U}_{\calT,(N_0)}^\top \widehat{U}_{\calW}  - H_{adj}\\
& = \left( \widehat{U}_{\calT,(N_0)}^\top \widehat{U}_{\calT,(N_0)} \right)^{-1} \widehat{U}_{\calT,(N_0)}^\top R_1 Q_{\calW}^{-1} -  \left( \widehat{U}_{\calT,(N_0)}^\top \widehat{U}_{\calT,(N_0)} \right)^{-1} \widehat{U}_{\calT,(N_0)}^\top  R_2 Q_{\calW}^{-1}.
\end{align*}
Then, since 
$$
\norm{\left( \widehat{U}_{\calT,(N_0)}^\top \widehat{U}_{\calT,(N_0)} \right)^{-1} \widehat{U}_{\calT,(N_0)}^\top} = \norm{\left(\widehat{U}_{\calT,(N_0)}^\top \widehat{U}_{\calT,(N_0)}\right)^{-1}}^{1/2} = O_p\left(\frac{\sqrt{N}}{\sqrt{N_0}}\right)
$$
by Lemma \ref{lem:tech_MNAR} and $\norm{R_1} = O_p\left( \frac{\sqrt{N_0}}{\sqrt{N}} \frac{\calR_{\calW}}{\lambda_{\min,\calW}}\right)$, $\norm{R_2} = O_p\left( \frac{\calR_{\calT}}{\lambda_{\min,\calT}}\right)$ by the above bounds, we have by Lemma \ref{lem:tech_MNAR} that
\begin{align*}
\norm{\widehat{H}_{adj} - H_{adj}} = O_p\left( \frac{\sqrt{N}}{\sqrt{N_0}} \frac{\calR_{\calW}}{\lambda_{\min,\calW}} +  \frac{N}{N_0}  \frac{\calR_{\calT}}{\lambda_{\min,\calT}} \right) .
\end{align*}
In addition, because $\norm{H_{adj}} = O_p\left( \frac{\sqrt{N}}{\sqrt{N_0}} \right)$ by Lemma \ref{lem:tech_MNAR}, we have
$$
\norm{\widehat{H}_{adj}} \leq \norm{H_{adj}} + \norm{\widehat{H}_{adj} - H_{adj}} = O_p\left( \frac{\sqrt{N}}{\sqrt{N_0}} \right).
$$
Moreover, note that
\begin{align*}
\norm{\widehat{V}_{\calW}\widehat{D}_{\calW} - V_{\calW} D_{\calW} O_{\calW}^\top }_F 
&= \norm{\widehat{D}_{\calW} \widehat{V}_{\calW}^\top - O_{\calW} D_{\calW} V_{\calW}^\top }_F 
= \norm{\widehat{U}_{\calW}^\top \widehat{M}_{\calW} - O_{\calW} 
 U_{\calW}^\top M_{\calW} }_F \\
& = \norm{\widehat{U}_{\calW}^\top \left( \widehat{M}_{\calW} - M_{\calW} \right) + \left( \widehat{U}_{\calW}^\top - O_{\calW} U_{\calW}^\top \right) M_{\calW} }_F \\
&\leq  \norm{\widehat{M}_{\calW} - M_{\calW}}_F + \norm{ \widehat{U}_{\calW}^\top - O_{\calW} U_{\calW}^\top}_F \norm{M_{\calW}} \\
& = O_p\left( \calR_\calW + \frac{\lambda_{\max,\calW}}{\lambda_{\min,\calW}}\calR_\calW \right) .
\end{align*}
Then, because 
$$
D_\calW V_\calW^\top = U_\calW^\top M_\calW =  U_\calW^\top (U_{(N_0)} H_\calW ) H_\calW^{-1} D V^\top =  U_\calW^\top U_\calW H_\calW^{-1} D V^\top = H_\calW^{-1} D V^\top,
$$
we have
\begin{align*}
\norm{\widehat{V}_{\calW}\widehat{D}_{\calW} - V D Q_{\calW}^\top }_F = \norm{\widehat{V}_{\calW}\widehat{D}_{\calW} - V_{\calW} D_{\calW} O_{\calW}^\top }_F =  O_p\left( \calR_\calW + \frac{\lambda_{\max,\calW}}{\lambda_{\min,\calW}}\calR_\calW \right).
\end{align*}
Lastly, we have the following decomposition:
\begin{align*}
\norm{\widehat{M} - M}_F
&= \norm{\widehat{U}_{\calT} \widehat{H}_{adj} \widehat{D}_{\calW} \widehat{V}_{\calW}^\top - UDV^\top}_F \\
&\lesssim \norm{U Q_{\calT}^{-1} \widehat{H}_{adj} \left( \widehat{V}_{\calW}\widehat{D}_{\calW} - V D Q_{\calW}^\top\right)^\top }_F 
+ \norm{\left(\widehat{U}_\calT - U Q_\calT^{-1} \right) \widehat{H}_{adj} Q_{\calW} D V^\top }_F \\
& \ \ + \norm{ U \left( Q_\calT^{-1} \widehat{H}_{adj} Q_\calW - I_K  \right) D V^\top  }_F.
\end{align*}
The first term can be bounded like
\begin{align*}
\norm{U Q_{\calT}^{-1} \widehat{H}_{adj} \left( \widehat{V}_{\calW}\widehat{D}_{\calW} - V D Q_{\calW}^\top\right)^\top }_F 
\leq \norm{Q_{\calT}^{-1}} \norm{\widehat{H}_{adj}} \norm{\widehat{V}_{\calW}\widehat{D}_{\calW} - V D Q_{\calW}^\top}_F 
 = O_p \left( \frac{\sqrt{N}}{\sqrt{N_0}} \kappa_\calW \calR_\calW \right),
\end{align*}
where $\kappa_\calW = \frac{\lambda_{\max,\calW}}{\lambda_{\min,\calW}}$. The second term can be bounded like
\begin{align*}
\norm{\left(\widehat{U}_\calT - U Q_\calT^{-1} \right) \widehat{H}_{adj} Q_{\calW} D V^\top }_F
\leq \norm{\widehat{U}_\calT - U Q_\calT^{-1} } \norm{\widehat{H}_{adj}} \norm{Q_{\calW}} \norm{D} = O_p\left(  \frac{\lambda_{\min}}{\lambda_{\min,\calT}}\kappa \calR_\calT \right).
\end{align*}
In addition, the last term can be bounded like
\begin{align*}
\norm{ U \left( Q_\calT^{-1} \widehat{H}_{adj} Q_\calW - I_K  \right) D V^\top  }_F 
& = \norm{ U Q_\calT^{-1} \left( \widehat{H}_{adj}  - H_{adj} \right) Q_\calW D V^\top  }_F \\
& \leq \norm{Q_\calT^{-1}} \norm{\widehat{H}_{adj}  - H_{adj}} \norm{Q_\calW} \norm{D}_F \\
& = O_p\left( \kappa \frac{\lambda_{\min}}{\lambda_{\min,\calW}} \calR_{\calW} +  \kappa \frac{\sqrt{N}}{\sqrt{N_0}}   \frac{\lambda_{\min}}{\lambda_{\min,\calT}}\calR_{\calT} \right) .
\end{align*}
Moreover, by Lemma F.1 of \cite{choi2024matrix}, we have $\kappa_\calW \lesssim \kappa$, $\frac{\lambda_{\min}}{\lambda_{\min,\calT}} \asymp \frac{\sqrt{T}}{\sqrt{T_0}}$, and $\frac{\lambda_{\min}}{\lambda_{\min,\calW}} \asymp \frac{\sqrt{N}}{\sqrt{N_0}}$. Therefore, to sum up, we have the desired result. $\square$

\subsection{Auxiliary lemmas}

Consider the following generic model, $Z = L + S$, where $rank(L) = K_L$. Denote the nuclear norm penalized estimator by
$$
\widehat{L} \coloneqq \argmin_A \norm{Z - A}_F^2 + \lambda \norm{A}_*.
$$
Then, we have the following bound for the estimator.
\begin{lemma}\label{lem:nuclear}
Let $\lambda \geq C \norm{S}$ for some large constant $C >0$. Then, we have
$$
\norm{\widehat{L} - L}_F \lesssim \min\left\{ \sqrt{K_L}\lambda , \sqrt{K_L} \norm{L}_F \right\}.
$$
In addition, if $L = 0$, then $\widehat{L} = 0$.
\end{lemma}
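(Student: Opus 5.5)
The estimator $\widehat{L}$ has a closed form: singular value soft-thresholding of $Z$ at level $\lambda/2$. If $Z=\sum_j \sigma_j u_j v_j^\top$, then $\widehat{L}=\sum_j(\sigma_j-\lambda/2)_+u_jv_j^\top$. Both bounds in Lemma \ref{lem:nuclear} will be derived from this representation combined with Weyl's inequality and the first-order optimality condition. Throughout, I take $C\ge 2$, so that $\norm{S}\le\lambda/2$.

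\emph{The case $L=0$.} Here $Z=S$, so every singular value satisfies $\sigma_j(Z)\le\norm{S}\le\lambda/C\le\lambda/2$. Every thresholded singular value is therefore zero, and $\widehat{L}=0$.

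\emph{The bound $\sqrt{K_L}\lambda$.} I will use the standard argument for the nuclear-norm-penalized estimator. Since $\widehat{L}$ minimizes the objective, comparing its value with that at $A=L$ gives
\[
\norm{\widehat{L}-L}_F^2 \le 2\langle S,\widehat{L}-L\rangle+\lambda\left(\norm{L}_*-\norm{\widehat{L}}_*\right).
\]
Write $\Delta=\widehat{L}-L$, and let $\calP_L$ denote the projection onto the tangent space of $L$, namely matrices of the form $U_LA^\top+BV_L^\top$, where $U_L,V_L$ are the singular vectors of $L$; $\calP_L^\perp$ is its complement. By decomposability, $\norm{L}_*-\norm{L+\Delta}_*\le\norm{\calP_L\Delta}_*-\norm{\calP_L^\perp\Delta}_*$. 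Duality gives $2\langle S,\Delta\rangle\le 2\norm{S}\norm{\Delta}_*\le(\lambda/2)\norm{\Delta}_*$ when $C\ge4$. Combining these,
\[
\norm{\Delta}_F^2\le \tfrac{3\lambda}{2}\norm{\calP_L\Delta}_*-\tfrac{\lambda}{2}\norm{\calP_L^\perp\Delta}_*\le \tfrac{3\lambda}{2}\sqrt{2K_L}\norm{\Delta}_F,
\]
because $\calP_L\Delta$ has rank at most $2K_L$. Dividing through yields $\norm{\Delta}_F\lesssim\sqrt{K_L}\lambda$.

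\emph{The bound $\sqrt{K_L}\norm{L}_F$.} This is the step I expect to be the main obstacle, since it must hold even when $\norm{L}_F\ll\lambda$, where the estimator essentially kills the signal. By Weyl's inequality, $\sigma_j(Z)\le\sigma_j(L)+\norm{S}$. For $j>K_L$ this gives $\sigma_j(Z)\le\norm{S}\le\lambda/2$, so $\widehat{L}$ has rank at most $K_L$. Moreover each retained singular value satisfies $(\sigma_j(Z)-\lambda/2)_+\le(\sigma_j(L)+\norm{S}-\lambda/2)_+\le\sigma_j(L)$. Hence $\norm{\widehat{L}}_F\le\norm{L}_F$, and the triangle inequality gives
\[
\norm{\widehat{L}-L}_F\le 2\norm{L}_F\le 2\sqrt{K_L}\norm{L}_F.
\]
Taking the minimum of the two bounds proves the lemma. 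The only care needed is to fix $C$ large enough, say $C\ge4$, so that both the threshold comparison and the duality step go through.
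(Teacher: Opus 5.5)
Your proof is correct. Your tangent-space and duality argument for the $\sqrt{K_L}\lambda$ bound is the same as the paper's; the paper's $\calP(A)=U_cU_c^\top A V_cV_c^\top$ is your $\calP_L^\perp$.

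You differ from the paper in the $L=0$ case and in the $\sqrt{K_L}\norm{L}_F$ bound. There you use the explicit singular-value soft-thresholding form of $\widehat{L}$ together with Weyl's inequality. The paper never uses the closed form and stays with the basic inequality. It combines $\norm{\widehat{L}}_*\ge\norm{\Delta}_*-\norm{L}_*$ with the duality bound to get $\norm{\Delta}_F^2+c\lambda\norm{\Delta}_*\le 2\lambda\norm{L}_*$. This gives $\norm{\Delta}_F\le\norm{\Delta}_*\le (2/c)\norm{L}_*\le(2/c)\sqrt{K_L}\norm{L}_F$. For $L=0$, the same inequality with right-hand side $0$ forces $\Delta=0$.

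Your route yields more structural information. It shows $\mathrm{rank}(\widehat{L})\le K_L$ and $\sigma_j(\widehat{L})\le\sigma_j(L)$ for every $j$. It also gives the sharper bound $\norm{\widehat{L}-L}_F\le 2\norm{L}_F$, with no $\sqrt{K_L}$ factor; that factor in the paper is an artifact of passing through the nuclear norm.

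The paper's purely variational route buys portability, because it never needs a closed form. It therefore carries over verbatim to Lemma \ref{lem:nuclear_missing}. There the loss is $\norm{\Omega\circ(Z-A)}_F^2$ and the constraint $\norm{A}_\infty\le L_{\max}$ is imposed, so no soft-thresholding representation of the estimator is available. Part (i) of that lemma's proof repeats exactly this argument.
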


\paragraph{Proof.}

Let $\Delta = \widehat{L} - L$. Then, we have
$$
\norm{Z - \widehat{L}}_F^2 = \norm{S - \Delta}_F^2 = \norm{S}_F^2 + \norm{\Delta}_F^2 - 2 tr(\Delta^\top S).
$$
In addition, for some constant $0 < c < 1$, we have
$$
\abs{2 tr(\Delta^\top S)} \leq 2 \norm{\Delta}_* \norm{S} \leq (1-c) \lambda \norm{\Delta}_* 
$$
since $\lambda \geq \frac{2}{1-c} \norm{S} $. Then, we have
\begin{align}\label{eq:key}
\nonumber &\norm{Z - \widehat{L}}_F^2 + \lambda \norm{\widehat{L}}_* \leq \norm{Z - L}_F^2 + \lambda \norm{L}_* , \\
\nonumber & \norm{\Delta}_F^2 - 2 tr(\Delta^\top S) + \lambda \norm{\widehat{L}}_* \leq \lambda \norm{L}_* , \\
& \norm{\Delta}_F^2 - (1-c) \lambda \norm{\Delta}_*  + \lambda \norm{\widehat{L}}_* \leq \lambda \norm{L}_* .
\end{align}
(1) When $L=0$.\\
Since $\widehat{L} = \Delta $ and $\norm{L}_* = 0$, we have by \eqref{eq:key} that
$$
\norm{\Delta}_F^2 + 2 c \lambda \norm{\Delta}_*  \leq 0.
$$
Since $c > 0$, we have $\norm{\Delta}_F = 0$.\\
(2) When $L \neq 0$.\\
Note that 
$$
\norm{\widehat{L}}_* = \norm{\Delta + L}_* \geq \norm{\Delta}_* - \norm{L}_* .
$$
Hence, we have by \eqref{eq:key} that
\begin{align*}
 &\norm{\Delta}_F^2 - (1-c) \lambda \norm{\Delta}_* +\lambda \norm{\Delta}_* - \lambda \norm{L}_*  \leq \norm{\Delta}_F^2 - (1-c) \lambda \norm{\Delta}_*  + \lambda \norm{\widehat{L}}_* \leq \lambda \norm{L}_*,\\
 & \norm{\Delta}_F^2 + c \lambda \norm{\Delta}_*  - \lambda \norm{L}_*  \leq \lambda  \norm{L}_*,\\
 & \norm{\Delta}_F^2 + c \lambda \norm{\Delta}_* \leq 2 \lambda \norm{L}_* .
\end{align*}
So, we have $\norm{\Delta}_* \leq \frac{2}{c} \norm{L}_*$. Then, we have $\norm{\Delta}_F \leq \norm{\Delta}_* \leq \frac{2}{c} \norm{L}_* \leq \frac{2}{c} \sqrt{K_L} \norm{L}_F$.\\
Next, we derive the bound of $\sqrt{K_L}\lambda$. Denote the singular value decomposition of $L$ by $L = UDV^\top$ where $U = (U_o, U_c)$ and $V = (V_o, V_c)$. Here, $(U_c,V_c)$ are the columns of $U$, $V$ that correspond to the zero singular values, while $(U_o,V_o)$ denote the columns of $U$, $V$ associated with the nonzero singular values. In addition, let
$$
\calP(A) =  U_c U_c^\top A V_c V_c^\top , \qquad  \calM(A) =  A - \calP(A).
$$
Note that
\begin{align}\label{eq:key_both}
\norm{\widehat{L}}_* &= \norm{L + \Delta}_* = \norm{L + \calP(\Delta) + \calM(\Delta) }_* \\
\nonumber & \geq \norm{L + \calP(\Delta)  }_* - \norm{\calM(\Delta)}_* 
= \norm{L}_* + \norm{\calP(\Delta)  }_* - \norm{\calM(\Delta)}_* .
\end{align} 
So, using this relation with \eqref{eq:key} and the fact that $(1-c) \lambda \norm{\Delta}_* \leq (1-c) \lambda \norm{\calP(\Delta)}_* + (1-c) \lambda \norm{\calM(\Delta)}_*$, we have $ \norm{\Delta}_F^2 + c \lambda \norm{\calP(\Delta)}_* \leq (2-c) \lambda \norm{\calM(\Delta)}_*$.
Therefore, we have
\begin{align*}
\norm{\Delta}_F^2 &\leq (2-c) \lambda \norm{\calM(\Delta)}_* 
\leq \lambda \norm{\calM(\Delta)}_F \sqrt{2K_L}
\leq \lambda \norm{\Delta}_F \sqrt{2K_L}. \ \ \square
\end{align*}
\smallskip

On the other hand, if we can only observe $\Omega \circ Z$ instead of $Z$ where $\Omega = (\omega_{it})_{i \leq N, t \leq T}$ and $\omega_{it} = 1\{z_{it} \text{ is observed} \}$, then the nuclear norm penalized estimator becomes
$$
\widehat{L} \coloneqq \argmin_{A \in \calA} \norm{\Omega \circ (Z - A)}_F^2 + \lambda \norm{A}_*,
$$
where $\calA = \{A: \norm{A}_{\infty} \leq L_{\max}\}$. Then, we have the following bound for the estimator.
\begin{lemma}\label{lem:nuclear_missing}
Let $\lambda \geq C \norm{\Omega \circ S}$ for some large constant $C >0$. Then, if $\norm{L}_{\infty} \leq L_{\max}$, with probability converging to $1$, we have
$$
\norm{\widehat{L} - L}_F \lesssim \min\left\{ \frac{\sqrt{K_L}\lambda}{p} + \frac{\sqrt{K_L(N+T)}L_{\max}}{\sqrt{p}} , \sqrt{K_L} \norm{L}_F \right\},
$$
where $p = \bbE[\omega_{it}]$.
\end{lemma}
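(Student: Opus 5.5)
We prove the two branches of the minimum separately. Both start from the basic inequality, exactly as in the proof of Lemma \ref{lem:nuclear}. Let $\Delta = \widehat{L} - L$. Since $\norm{L}_\infty \leq L_{\max}$, $L$ lies in the constraint set $\calA$, so optimality of $\widehat{L}$ gives
\[
\norm{\Omega \circ \Delta}_F^2 - 2\,tr\big(\Delta^\top (\Omega\circ S)\big) + \lambda\norm{\widehat{L}}_* \leq \lambda \norm{L}_* .
\]
Here we used $\Omega\circ\Omega = \Omega$, so that $\norm{\Omega\circ(Z-A)}_F^2 = \norm{\Omega\circ S - \Omega\circ(A-L)}_F^2$. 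Duality gives $\abs{tr(\Delta^\top(\Omega\circ S))} \leq \norm{\Delta}_*\norm{\Omega\circ S}$, and $\lambda \geq C\norm{\Omega\circ S}$ with $C$ large. Hence we obtain the analogue of \eqref{eq:key} with $\norm{\Delta}_F^2$ replaced by $\norm{\Omega\circ\Delta}_F^2$.

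\textbf{Second branch.} Drop the nonnegative term $\norm{\Omega\circ\Delta}_F^2$ and repeat the case (2) argument of Lemma \ref{lem:nuclear} verbatim. This yields $\norm{\Delta}_* \leq \frac{2}{c}\norm{L}_*$, and therefore $\norm{\Delta}_F \leq \norm{\Delta}_* \lesssim \sqrt{K_L}\norm{L}_F$. This branch is deterministic.

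\textbf{First branch: cone and upper bound.} Use the decomposition $\calP,\calM$ from the proof of Lemma \ref{lem:nuclear} together with \eqref{eq:key_both}. This gives two facts:
\begin{enumerate}[(a)]
\item the cone condition $\norm{\calP(\Delta)}_* \leq \frac{2-c}{c}\norm{\calM(\Delta)}_*$, hence $\norm{\Delta}_* \lesssim \sqrt{K_L}\norm{\Delta}_F$, because $\calM(\Delta)$ has rank at most $2K_L$;
\item the upper bound $\norm{\Omega\circ\Delta}_F^2 \lesssim \lambda\sqrt{K_L}\norm{\Delta}_F$.
\end{enumerate}
In addition, both $\widehat{L}$ and $L$ lie in $\calA$, so $\norm{\Delta}_\infty \leq 2L_{\max}$.

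\textbf{First branch: restricted strong convexity.} It remains to prove a lower bound for $\norm{\Omega\circ\Delta}_F^2$ in terms of $p\norm{\Delta}_F^2$, uniformly over the set
\[
\calC = \{\Delta : \norm{\Delta}_\infty \leq 2L_{\max},\ \norm{\Delta}_* \lesssim \sqrt{K_L}\norm{\Delta}_F\}.
\]
The target statement is: with probability tending to one, for all $\Delta\in\calC$,
\[
\norm{\Omega\circ\Delta}_F^2 \geq \frac{p}{2}\norm{\Delta}_F^2 - C' L_{\max}\sqrt{p K_L (N+T)}\,\norm{\Delta}_F - C'' K_L (N+T) L_{\max}^2 .
\]
I would prove this in the style of Klopp (2014) and Negahban--Wainwright, in three steps.
\begin{enumerate}
\item Fix a radius $r$ and consider $\sup_{\Delta\in\calC,\,\norm{\Delta}_F\leq r}\abs{\norm{\Omega\circ\Delta}_F^2 - p\norm{\Delta}_F^2}$. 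Bound its expectation by symmetrization and the contraction principle; the entries are bounded by $2L_{\max}$, so the map $x\mapsto x^2$ is Lipschitz on the relevant range. This reduces the problem to $L_{\max}\,\bbE\norm{\mathcal{E}\circ\Omega}\,\sup\norm{\Delta}_*$, where $\mathcal{E}$ is a Rademacher matrix.
\item Use $\bbE\norm{\mathcal{E}\circ\Omega} \lesssim \sqrt{p(N+T)}$, a standard random-matrix bound in the spirit of Lemma \ref{lem:orthogonal_oper_norm}. Combined with $\norm{\Delta}_* \lesssim \sqrt{K_L}\,r$ on $\calC$, this gives an expected deviation of order $L_{\max}\sqrt{pK_L(N+T)}\,r$.
\item Upgrade from expectation to high probability with Bousquet's (or Talagrand's) concentration inequality for bounded empirical processes. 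Then make the bound uniform in $r$ by a peeling argument over geometric shells $\norm{\Delta}_F \in [2^{k}\rho, 2^{k+1}\rho)$. The smallest shell contributes the additive constant term.
\end{enumerate}

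\textbf{Combining.} Put the lower bound together with the upper bound (b):
\[
\frac{p}{2}\norm{\Delta}_F^2 \lesssim \big(\lambda\sqrt{K_L} + L_{\max}\sqrt{pK_L(N+T)}\big)\norm{\Delta}_F + K_L(N+T)L_{\max}^2 .
\]
Solving this quadratic inequality in $\norm{\Delta}_F$ gives
\[
\norm{\Delta}_F \lesssim \frac{\sqrt{K_L}\lambda}{p} + \frac{\sqrt{K_L(N+T)}L_{\max}}{\sqrt p},
\]
which is the first branch. Taking the minimum with the second branch finishes the proof.

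The main obstacle is the restricted strong convexity step. Getting the uniform concentration with the right dependence on $p$, $L_{\max}$ and $K_L$ requires care in the peeling argument. It also requires checking that the Rademacher operator-norm bound holds at the scale $\sqrt{p(N+T)}$ when $p$ is small, which needs roughly $p(N+T)\gtrsim\log(N+T)$.
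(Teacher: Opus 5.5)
Your proposal is correct and follows essentially the same route as the paper. Both arguments use the basic inequality, then the bound $\norm{\Delta}_* \le \frac{2}{c}\norm{L}_*$ for the second branch, then the cone condition together with $\norm{\Delta}_\infty \le 2L_{\max}$. The first branch then rests on a restricted strong convexity bound (Lemma \ref{lem:rsc}), proved by symmetrization, contraction, $\bbE\norm{\Omega_u} \lesssim \sqrt{p(N+T)}$, concentration and peeling, after which a quadratic inequality is solved.

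The differences are cosmetic:
\begin{itemize}
\item The paper absorbs the linear term by AM--GM, where you keep it in the bound.
\item The paper uses Massart's inequality rather than Bousquet's.
\item The paper treats small $\norm{\Delta}_F$ as an explicit threshold case, where you use the smallest peeling shell.
\end{itemize}
Your caveat $p(N+T) \gtrsim \log(N+T)$ is exactly the condition the paper assumes implicitly when it asserts that Rademacher operator-norm bound.
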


\noindent \textbf{Proof.} (i) Let $\Delta = \widehat{L} - L$. Then, we have
$$
\norm{\Omega \circ (Z - \widehat{L})}_F^2  = \norm{\Omega \circ (S - \Delta)}_F^2 = \norm{\Omega \circ S}_F^2 + \norm{\Omega \circ\Delta}_F^2 - 2 tr((\Omega \circ\Delta)^\top (\Omega \circ S)).
$$
In addition, for some constant $0 < c < 1$, we have
$$
\abs{ 2 tr((\Omega \circ\Delta)^\top (\Omega \circ S))} = \abs{ 2 tr( \Delta^\top (\Omega \circ S))}   \leq 2 \norm{\Delta}_* \norm{\Omega \circ S} \leq (1-c) \lambda \norm{\Delta}_* 
$$
since $\lambda \geq \frac{2}{1-c} \norm{\Omega \circ S} $. Then, we have
\begin{align}\label{eq:key_missing}
\nonumber &\norm{\Omega \circ (Z - \widehat{L})}_F^2 + \lambda \norm{\widehat{L}}_* \leq \norm{\Omega \circ (Z - L) }_F^2 + \lambda \norm{L}_* , \\
\nonumber & \norm{\Omega \circ \Delta}_F^2 - 2 tr((\Omega \circ \Delta)^\top(\Omega \circ  S)) + \lambda \norm{\widehat{L}}_* \leq \lambda \norm{L}_* , \\
& \norm{\Omega \circ \Delta}_F^2 - (1-c) \lambda \norm{\Delta}_*  + \lambda \norm{\widehat{L}}_* \leq \lambda \norm{L}_* .
\end{align}
Note that 
$$
\norm{\widehat{L}}_* = \norm{\Delta + L}_* \geq \norm{\Delta}_* - \norm{L}_* .
$$
Hence, we have by \eqref{eq:key_missing} that
\begin{align*}
 &\norm{\Omega \circ \Delta}_F^2 - (1-c) \lambda \norm{\Delta}_* +\lambda \norm{\Delta}_* - \lambda \norm{L}_*  \leq \norm{\Omega \circ \Delta}_F^2 - (1-c) \lambda \norm{\Delta}_*  + \lambda \norm{\widehat{L}}_* \leq \lambda \norm{L}_*,\\
 & \norm{\Omega \circ \Delta}_F^2 + c \lambda \norm{\Delta}_* \leq 2 \lambda \norm{L}_* .
\end{align*}
So, we have $\norm{\Delta}_* \leq \frac{2}{c} \norm{L}_*$. Then, we have $\norm{\Delta}_F \leq \norm{\Delta}_* \leq \frac{2}{c} \norm{L}_* \leq \frac{2}{c} \sqrt{K_L} \norm{L}_F$.\\
(2) Using the relations \eqref{eq:key_both} and \eqref{eq:key_missing} with the fact that $(1-c) \lambda \norm{\Delta}_* \leq (1-c) \lambda \norm{\calP(\Delta)}_* + (1-c) \lambda \norm{\calM(\Delta)}_*$, we have 
$$ 
c \lambda \norm{\calP(\Delta)}_*  \leq \norm{\Omega \circ \Delta}_F^2 + c \lambda \norm{\calP(\Delta)}_* \leq (2-c) \lambda \norm{\calM(\Delta)}_* .
$$
Hence, we have $\norm{\calP(\Delta)}_* \leq \frac{2-c}{c} \norm{\calM(\Delta)}_*$. In addition, we know $\norm{\Delta}_\infty \leq 2 L_{\max}$. Set $\bar{L} = 2 L_{\max}$ and $C_1 = \frac{2-c}{c}$. If $\norm{\Delta}_F^2 > 2 B \frac{\bar{L}^2}{p}\sqrt{NT}$ for some sufficiently large $B >0$, then $\Delta \in \calC \left(C_1 , 2 B \frac{\bar{L}^2}{p} \right)$ and we have with high probability that
$$
p \norm{\Delta}_F^2 < 2 \norm{\Omega \circ \Delta}_F^2 + 8 B K_L (N+T) L_{\max}^2
$$
by Lemma \ref{lem:rsc}. Then, since $\norm{\Omega \circ \Delta}_F^2 \leq (2-c) \lambda \norm{\calM(\Delta)}_* $, we have
\begin{align*}
p \norm{\Delta}_F^2 &< 2(2-c)  \lambda \norm{\calM(\Delta)}_* + 8 B K_L (N+T) L_{\max}^2   \\
& < 2(2-c)  \lambda \sqrt{2K_L} \norm{\Delta}_F + 8 B K_L (N+T) L_{\max}^2 .
\end{align*}
If the first term dominates the second term, we have
$$
p \norm{\Delta}_F^2 < 4(2-c)  \lambda \sqrt{2K_L} \norm{\Delta}_F \quad \Longrightarrow \quad
\norm{\Delta}_F < 4(2-c)  \lambda \sqrt{2K_L} /p.
$$
If the second term dominates the first term, we have
$$
p \norm{\Delta}_F^2 < 16 B K_L (N+T) L_{\max}^2 \quad \Longrightarrow \quad
\norm{\Delta}_F < 4 B^{1/2} \sqrt{K_L(N+T)} L_{\max} / \sqrt{p}.
$$
In addition, if $\norm{\Delta}_F^2 \leq 2 B \frac{\bar{L}^2}{p}\sqrt{NT}$, we have
$$
\norm{\Delta}_F \leq 2 \sqrt{2} B^{1/2} \frac{L_{\max}\sqrt{N+T}}{\sqrt{p}} .
$$
Hence, with probability converging to 1, we have 
$$
\norm{\widehat{L} - L}_F \lesssim  \frac{\sqrt{K_L}\lambda}{p} + \frac{\sqrt{K_L(N+T)}L_{\max}}{\sqrt{p}}  . \ \ \square
$$
\smallskip

\begin{lemma}\label{lem:orthogonal_oper_norm}
Let $E$ be a $N \times T$ matrix of independent sub-Gaussian entries such that $\norm{\epsilon_{it}}_{\psi_2} \leq \sigma $. In addition, let $L$ and $R$ be $N \times J_1$ and $T \times J_2$ orthonormal matrices, respectively. Then, we have
$$
\norm{L^\top E R} \lesssim \sigma \sqrt{J_1 + J_2 + \log(\max\{N,T\})} ,
$$
with probability at least $1 - O(\max\{N,T\}^{-5})$.
\end{lemma}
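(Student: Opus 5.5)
The plan is an $\varepsilon$-net argument for the operator norm of the $J_1\times J_2$ matrix $L^\top E R$, combined with sub-Gaussian concentration for each fixed bilinear form. For unit vectors $a\in\bbR^{J_1}$ and $b\in\bbR^{J_2}$, set $u = La \in \bbR^N$ and $v = Rb\in\bbR^T$. Since $L$ and $R$ have orthonormal columns, $\|u\|=\|v\|=1$, and
\[
a^\top L^\top E R b = \sum_{i,t} u_i v_t \epsilon_{it}
\]
is a weighted sum of independent mean-zero sub-Gaussian variables. Its weights satisfy $\sum_{i,t} u_i^2 v_t^2 = 1$. By Hoeffding's inequality for sub-Gaussian sums, the sum is therefore sub-Gaussian with $\psi_2$-norm $\lesssim \sigma$, so
\[
\Pr\Big(\big|a^\top L^\top E R b\big| > s\Big) \le 2\exp(-c s^2/\sigma^2).
\]
The independence of the entries is what is used here. $L$ and $R$ are treated as fixed, or as independent of $E$ and conditioned on, which matches the application to $P_X = LL^\top$ and $P_Z=RR^\top$ under Assumption \ref{asp:noise}.

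Next, take $1/4$-nets $\mathcal{N}_1$ of the unit sphere $S^{J_1-1}$ and $\mathcal{N}_2$ of $S^{J_2-1}$, with cardinalities at most $9^{J_1}$ and $9^{J_2}$. The standard net lemma gives
\[
\|L^\top E R\| \le 2\max_{a\in\mathcal{N}_1,\, b\in\mathcal{N}_2} \big|a^\top L^\top E R b\big|,
\]
which follows from writing $\|A\|\le \max_{\text{net}}|a^\top A b| + (1/4+1/4+1/16)\|A\|$. A union bound over at most $9^{J_1+J_2}$ pairs then yields
\[
\Pr\big(\|L^\top E R\| > 2s\big) \le 2\cdot 9^{J_1+J_2}\exp(-c s^2/\sigma^2).
\]

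Finally, choose $s = C\sigma\sqrt{J_1+J_2+\log(\max\{N,T\})}$ with $C$ large enough that $cC^2 \ge \log 9 + 5$. The exponent then becomes at most $-(J_1+J_2)\log 9 - 5\log\max\{N,T\}$, so the failure probability is $O(\max\{N,T\}^{-5})$, which is the claimed bound.

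The only delicate point is checking the net-approximation constant and the constant in the choice of $s$; the rest is routine. If needed, the same argument also covers the case where $L$ or $R$ is an $N\times N$ complement projection basis, with $J_1=N$ there. This is how one obtains the $\sqrt{N}+\sqrt{T}$ bounds used for $(I_N-P_X)E(I_T-P_Z)$.
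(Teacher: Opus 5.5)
Your proof is correct and follows the paper's argument essentially step for step. Both use $1/4$-nets of cardinality $9^{J_1}$ and $9^{J_2}$, the factor-$2$ net reduction, Hoeffding's inequality for the fixed bilinear form with unit-norm weights $u_i v_t$, and a union bound with $s \asymp \sigma\sqrt{J_1+J_2+\log\max\{N,T\}}$. (One cosmetic point: your three-term expansion with $1/4+1/4+1/16$ actually yields the constant $16/7$. The factor $2$ comes from the two-term split $(a-a_0)^\top A b + a_0^\top A (b-b_0)$, but either constant suffices for the stated bound.)
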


\noindent \textbf{Proof.}
First, by Corollary 4.2.13 of \cite{vershynin2018high}, we can find an $1/4$-net $\calN$ of the unit sphere $S^{J_1 -1}$ and $1/4$-net $\calM$ of the unit sphere $S^{J_2 -1}$ with cardinalities
$$
\abs{\calN} \leq 9^{J_1}, \quad \abs{\calM} \leq 9^{J_2}.
$$
Note that 
$$
\norm{L^\top E R} \leq 2 \max_{a \in \calN, b \in \calM} \sum_{i = 1}^N \sum_{t = 1}^T \epsilon_{it} (a^\top L_i) (b^\top R_t)
$$
where $L_i^\top$ is the $i$-th row of $L$ and $R_t^\top$ is the $t$-th row of $R$ (see, Section 4.4.1 of \cite{vershynin2018high}). Fix $a_o \in \calN$ and $b_o \in \calM$. Then, by Hoeffding's inequality with the independent sub-Gaussian assumption, we have with probability at least $1 - u$,
$$
\sum_{i = 1}^N \sum_{t = 1}^T \epsilon_{it} (a_o^\top L_i) (b_o^\top R_t) \lesssim \sigma \norm{L a_o}_2 \norm{R b_o}_2 \sqrt{\log(u^{-1})} = \sigma \sqrt{\log(u^{-1})},
$$
and by setting $u = \max\{N,T\}^{-5} 9^{-J_1 - J_2}$, we have
with probability at least $1 - \max\{N,T\}^{-5} 9^{-J_1 - J_2}$,
$$
\sum_{i = 1}^N \sum_{t = 1}^T \epsilon_{it} (a_o^\top L_i) (b_o^\top R_t) \lesssim \sigma \sqrt{J_1 + J_2 + \log(\max\{N,T\})},
$$
Then, because
\begin{align*}
P\left(  \max_{a \in \calN, b \in \calM} \sum_{i = 1}^N \sum_{t = 1}^T \epsilon_{it} (a^\top L_i) (b^\top R_t) > c \right) &\leq \sum_{a \in \calN , b \in \calM} P\left(  \sum_{i = 1}^N \sum_{t = 1}^T \epsilon_{it} (a^\top L_i) (b^\top R_t) > c \right)  \\
& \leq 9^{J_1 + J_2} P\left(  \sum_{i = 1}^N \sum_{t = 1}^T \epsilon_{it} (a^\top L_i) (b^\top R_t) > c \right),
\end{align*}
by setting $c = \sigma \sqrt{J_1 + J_2 + \log(\max\{N,T\})}$, we have
\begin{align*}
P\left(  \max_{a \in \calN, b \in \calM} \sum_{i = 1}^N \sum_{t = 1}^T \epsilon_{it} (a^\top L_i) (b^\top R_t) >  \sigma \sqrt{J_1 + J_2 + \log(\max\{N,T\})} \right) \lesssim \max\{N,T\}^{-5}. \ \ \square
\end{align*}
\smallskip

\begin{lemma}\label{lem:error_oper_norm}
With probability converging to $1$, we have\\
(i) $\norm{P_X E P_Z} \lesssim \sigma \sqrt{J}$; (ii) $\norm{P_X E} \lesssim \sigma \sqrt{T}$; (iii) $\norm{E P_Z} \lesssim \sigma \sqrt{N}$; (iv) $\norm{E } \lesssim \sigma \sqrt{N + T}$.
\end{lemma}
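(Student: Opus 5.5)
The plan is to reduce each of the four bounds to either Lemma \ref{lem:orthogonal_oper_norm} or a standard operator-norm bound for sub-Gaussian matrices. Throughout we condition on $(X,Z)$, which is legitimate because $E$ is independent of $(X,Z)$ by Assumption \ref{asp:noise}. Conditionally, $P_X$ and $P_Z$ are fixed orthogonal projections, and on the event of Assumption \ref{asp:basis} (which has probability approaching one) their ranks are $Jd_1$ and $Jd_2$. We write $P_X = L_X L_X^\top$ and $P_Z = R_Z R_Z^\top$, where $L_X \in \bbR^{N\times Jd_1}$ and $R_Z\in\bbR^{T\times Jd_2}$ have orthonormal columns; for example, $L_X = \Phi(\Phi^\top\Phi)^{-1/2}$. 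Since $d_1,d_2$ are fixed, $Jd_1 \asymp Jd_2 \asymp J$.

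\textbf{Part (i).} Orthonormality gives $\|P_X E P_Z\| = \|L_X^\top E R_Z\|$. Lemma \ref{lem:orthogonal_oper_norm} then yields $\|L_X^\top E R_Z\|\lesssim \sigma\sqrt{J + \log\max\{N,T\}}$ with conditional probability $1-O(\max\{N,T\}^{-5})$. Integrating over $(X,Z)$ and intersecting with the event of Assumption \ref{asp:basis}, the statement holds with probability tending to one. This gives $\sigma\sqrt{J}$ provided $\log\max\{N,T\}\lesssim J$. If that fails, I would either absorb the logarithm, since the theorem only uses (i) through the $O_p(J)$ term, or note the $\sqrt{\log}$ factor. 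I flag this as the one point needing care in matching the stated rate.

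\textbf{Parts (ii) and (iii).} For (ii), $\|P_X E\| = \|L_X^\top E\|$, and $L_X^\top E$ is a $Jd_1\times T$ matrix. I would apply Lemma \ref{lem:orthogonal_oper_norm} with $R = I_T$, so $J_2=T$ and $R$ is orthonormal. This gives $\sigma\sqrt{Jd_1 + T + \log\max\{N,T\}}\lesssim\sigma\sqrt{T}$, using $J\ll T$ (Assumption \ref{asp:basis}) and $\log N \lesssim T$. Part (iii) is symmetric, taking $L=I_N$ and $R=R_Z$.

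\textbf{Part (iv).} Apply Lemma \ref{lem:orthogonal_oper_norm} with $L=I_N$ and $R=I_T$, which gives $\|E\|\lesssim\sigma\sqrt{N+T}$; equivalently, cite the standard net argument (e.g.\ Theorem 4.4.5 of \cite{vershynin2018high}).

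\textbf{Consequences for the main proof.} The same reasoning, combined with $\|I-P\|\le 1$, bounds the mixed terms used in the proof of Theorem \ref{thm:convergence_rate_obs}: $\|P_XE(I_T-P_Z)\|\le\|P_XE\|$, $\|(I_N-P_X)EP_Z\|\le\|EP_Z\|$, and $\|(I_N-P_X)E(I_T-P_Z)\|\le\|E\|$. Apart from the logarithmic issue in (i), the argument is routine.
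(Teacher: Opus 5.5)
Your proposal is correct and is essentially the paper's proof: the paper also writes $P_X$ and $P_Z$ through the orthonormal matrices $\Phi(\Phi^\top\Phi)^{-1/2}$ and $\Psi(\Psi^\top\Psi)^{-1/2}$ and applies Lemma~\ref{lem:orthogonal_oper_norm} with $(L,R)$ taken to be these two matrices for (i), $(\Phi(\Phi^\top\Phi)^{-1/2}, I_T)$ for (ii), the mirror choice for (iii), and $(I_N,I_T)$ for (iv). The rate conditions you flag ($\log\max\{N,T\}\lesssim J$ and $J\lesssim T$, resp.\ $J\lesssim N$) are exactly the ones the paper invokes without further justification, and your explicit conditioning on $(X,Z)$ via Assumption~\ref{asp:noise} makes rigorous a step the paper leaves implicit.
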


\noindent \textbf{Proof.} (i) By Assumption \ref{asp:basis}, with probability converging to $1$, $\norm{\Phi (\Phi^\top \Phi)^{-1/2}}$ and $\norm{\Psi (\Psi^\top \Psi)^{-1/2}}$ are bounded. Hence, by Lemma \ref{lem:orthogonal_oper_norm}, with probability converging to $1$, we have
$$
\norm{P_X E P_Z}  \leq \norm{\Phi (\Phi^\top \Phi)^{-1/2}} \norm{((\Phi^\top \Phi)^{-1/2} \Phi^\top) E (\Psi (\Psi^\top \Psi)^{-1/2})} \norm{(\Psi^\top \Psi)^{-1/2} \Psi^\top} 
\lesssim \sigma \sqrt{J}   
$$
because $\Phi (\Phi^\top \Phi)^{-1/2}$ and $\Psi (\Psi^\top \Psi)^{-1/2}$ are $N \times J$ and $T \times J$ orthogonal matrices, respectively, and $\log(\max\{N,T\}) \lesssim J$.\\
(ii) By Lemma \ref{lem:orthogonal_oper_norm}, with probability converging to $1$, we have
$$
\norm{P_X E}  \leq \norm{\Phi (\Phi^\top \Phi)^{-1/2}} \norm{((\Phi^\top \Phi)^{-1/2} \Phi^\top) E I_T} \lesssim \sigma \sqrt{T}   
$$
because $\Phi (\Phi^\top \Phi)^{-1/2}$ and $I_T$ are $N \times J$ and $T \times T$ orthogonal matrices, respectively, and  $J \lesssim T$.\\
(iii) The proof is symmetric to that of (ii).\\
(iv) It follows from Lemma \ref{lem:orthogonal_oper_norm} with $L = I_N$ and $R = I_T$. $\square$
\bigskip

\begin{lemma}\label{lem:rsc}
Define the restricted set of directions as
$$
\calC(c_1,c_2) = \left\{ A \in \calA^* : \norm{\calP(A)}_* \leq c_1 \norm{\calM(A)}_*, \norm{A}_F^2 > c_2 \sqrt{NT} \right\},
$$
where $\calA^* = \{A \in \bbR^{N\times T} : \norm{A}_{\max} \leq \bar{L} \}$. Then, for any $C_1 > 0$ and sufficiently large $B >0$, we have, with probability converging to $1$, that uniformly for $A \in \calC \left(C_1, 2B\frac{\bar{L}^2}{p} \right)$, 
$$
\norm{\Omega \circ A}_F^2   > 0.5 p \norm{A}_F^2 - B K_L (N + T) \bar{L}^2  .
$$

\end{lemma}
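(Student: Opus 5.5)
This is a standard restricted strong convexity statement for matrix completion with bounded entries, in the style of Negahban--Wainwright and Klopp. I would prove it by a peeling (slicing) argument combined with a symmetrization/contraction bound on the supremum of the empirical process
\[
Z_r = \sup_{A \in \calC_r} \Bigl| \norm{\Omega \circ A}_F^2 - p\norm{A}_F^2 \Bigr|,
\]
where $\calC_r$ is the set of $A\in\calC(C_1, 2B\bar L^2/p)$ with $\norm{A}_F^2\le r$. Note that $\norm{\Omega\circ A}_F^2=\sum_{it}\omega_{it}a_{it}^2$ has mean $p\norm{A}_F^2$. So it suffices to show that, uniformly over the cone, the deviation is at most $\frac12 p\norm{A}_F^2 + BK_L(N+T)\bar L^2$.

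\textbf{Step 1 (single slice).} Fix $r$. First, $Z_r$ concentrates around its mean by Bousquet/Talagrand's inequality: each summand $\omega_{it}a_{it}^2$ is bounded by $\bar L^2$, and the variance proxy is at most $p\bar L^2 r$. To bound $\bbE Z_r$, I would symmetrize with Rademacher signs $\xi_{it}$ and apply the contraction principle to $x\mapsto x^2$ on $[-\bar L,\bar L]$ (Lipschitz constant $2\bar L$). This gives
\[
\bbE Z_r\lesssim \bar L\,\bbE\sup_{A\in\calC_r}\langle \Sigma_R, A\rangle, \qquad \Sigma_R=\sum_{it}\xi_{it}\omega_{it}e_ie_t^\top .
\]
By duality this is at most $\bar L\,\bbE\norm{\Sigma_R}\sup_{A\in\calC_r}\norm{A}_*$. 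On the cone,
\[
\norm{A}_*\le (1+C_1)\norm{\calM(A)}_*\le (1+C_1)\sqrt{2K_L}\,\norm{\calM(A)}_F\lesssim \sqrt{K_L}\sqrt r,
\]
using that $\calM(A)$ has rank at most $2K_L$. A standard random-matrix bound (e.g.\ Bandeira--van Handel, or the argument behind Lemma \ref{lem:orthogonal_oper_norm} with entries of variance $p$) gives $\bbE\norm{\Sigma_R}\lesssim\sqrt{p(N+T)}+\sqrt{\log(N+T)}$. Hence
\[
\bbE Z_r\lesssim \bar L\sqrt{K_L r\, p(N+T)} \le \tfrac18 p r + C K_L(N+T)\bar L^2
\]
by the AM--GM inequality. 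The deviation term from Talagrand is handled the same way, again splitting $\sqrt{p\bar L^2 r\cdot t}$ by AM--GM.

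\textbf{Step 2 (peeling).} Partition the cone into shells $\alpha^{l-1}\nu\le\norm{A}_F^2<\alpha^l\nu$, with $\nu=2B\bar L^2\sqrt{NT}/p$ and $\alpha>1$ a fixed constant. In each shell, apply Step 1 with $r=\alpha^l\nu$ and a deviation level proportional to $r$. The failure probability of shell $l$ is then of order $\exp(-c\,p\,\alpha^l\nu/\bar L^2)=\exp(-c'B\alpha^l\sqrt{NT})$. Summing over $l$ gives a total failure probability that is $o(1)$ once $B$ is large. The lower threshold $\norm{A}_F^2>c_2\sqrt{NT}$ in the definition of $\calC$ is exactly what makes these tail probabilities summable. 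Within a shell, $r\le\alpha\norm{A}_F^2$, so $\frac18 pr\le\frac{\alpha}{8}p\norm{A}_F^2\le\frac12 p\norm{A}_F^2$ for $\alpha$ close enough to $1$, and the claim follows.

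\textbf{Main obstacle.} The hardest step is the operator-norm bound on $\Sigma_R$ together with the bookkeeping that keeps the constant in front of $p\norm{A}_F^2$ below $\tfrac12$ while absorbing everything else into $BK_L(N+T)\bar L^2$. In particular, the additive $\sqrt{\log(N+T)}$ term in $\bbE\norm{\Sigma_R}$ requires $p(N+T)\gtrsim\log(N+T)$. This condition is implicit in the assumptions (e.g.\ $J\ll p\sqrt{N+T}$ in Theorem \ref{thm:convergence_rate_mar}), and I would state it explicitly. One further subtlety is that $\calP$ and $\calM$ depend on the fixed $L$; since they are deterministic given $L$, this causes no measurability issues.
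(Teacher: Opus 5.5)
Your proposal is correct and follows the paper's proof essentially step for step: peeling over geometric shells above the threshold $2B\bar{L}^2\sqrt{NT}/p$, symmetrization and contraction, the duality bound $\bar{L}\,\bbE\norm{\Sigma_R}\sup\norm{A}_*$ with $\norm{A}_*\lesssim\sqrt{K_L}\norm{A}_F$ on the cone, AM--GM, and a concentration inequality summed over the shells. The differences are refinements in your favor. Your variance-sensitive Talagrand/Bousquet bound gives shell tails $\exp(-cB\alpha^l\sqrt{NT})$ that vanish as $NT\to\infty$, whereas the paper's Massart bounded-difference step gives $\exp(-c\,4^{l+1}B^2)$, which is only small for large $B$. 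You also correctly flag the $\sqrt{\log(N+T)}$ term in $\bbE\norm{\Sigma_R}$ that the paper drops; justify that bound with Bandeira--van Handel only, since the $\psi_2$-net argument of Lemma \ref{lem:orthogonal_oper_norm} does not give the $\sqrt{p}$ scaling when $p\to 0$.
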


\noindent \textbf{Proof.} It is an extension of Lemma A.2 of \cite{chernozhukov2023inference}. First, let $\Omega(A) = \norm{\Omega \circ A}_F^2 = \sum_{it} \omega_{it}^2 A_{it}^2$. Then, we have $\bbE \Omega(A) = p \sum_{it} A_{it}^2 = p \norm{A}_F^2$. In addition, define
$$
\calE(A) = \left\{ \abs{\Omega(A) - \bbE \Omega(A) } > 0.5 \cdot \bbE \Omega(A) + B K_L (N + T) \bar{L}^2 \right\}.
$$ 
Then, we want to show that $P\left( \exists A \in \calC \left(C_1, 2B\frac{\bar{L}^2}{p} \right) : \calE(A) \text{ holds} \right) \rightarrow 0$. To use the standard peeling argument, define
$$
\Gamma_l = \left\{ A \in \calC \left(C_1, 2B \frac{\bar{L}^2}{p} \right) : 2^l v_n \leq \bbE \Omega(A) \leq 2^{l+1} v_n \right\}
$$
where $v_n = B \bar{L}^2 \sqrt{NT}$ and $l \in \bbN$.

\paragraph{Part 1.} We want to show $\calC \left(C_1, 2B\frac{\bar{L}^2}{p} \right) \subset \cup_{l=1}^\infty \Gamma_l$. If $A \in \calC \left(C_1, 2B\frac{\bar{L}^2}{p} \right)$, we have
$$
\bbE\Omega(A) = p \norm{A}_F^2 \geq  2 B \bar{L}^2 \sqrt{NT}  = 2 v_n .
$$
Hence, there is $l \in \bbN$ such that $A \in \Gamma_l$ for any $A \in \calC \left(C_1, 2B\frac{\bar{L}^2}{p} \right)$.

\paragraph{Part 2.} Let
\begin{align*}
&\calD(x) = \left\{ A \in \calC \left(C_1, 2B \frac{\bar{L}^2}{p} \right) : \norm{A}_F^2 \leq x \right\},\\
&\calF(A) = \left\{ \abs{\Omega(A) - \bbE \Omega(A) } - B K_L (N + T) \bar{L}^2 > 0.25 \cdot 2^{l+1} v_n  \right\}.
\end{align*}
We want to show that if $A \in \Gamma_l$ and $\calE(A)$ holds, than $A \in \calD(x_l)$ where $x_l = p^{-1} 2^{l+1} v_n$ and $\calF(A)$ holds. This is because
$$
\abs{\Omega(A) - \bbE \Omega(A) } - B K_L (N + T) \bar{L}^2  > 0.5 \cdot \bbE \Omega(A) \geq 0.25 \cdot 2^{l+1} v_n,
$$
and $\norm{A}_F^2 = p^{-1} \bbE\Omega(A) \leq  p^{-1} 2^{l+1} v_n$.

\paragraph{Part 3.} Let
$$
Q(x) = \sup_{A \in \calD(x)} \abs{ \frac{1}{NT} \sum_{it} \omega_{it}^2 A_{it}^2 - p A_{it}^2 }.
$$
We bound $\bbE Q(x)$. First, note that for any $A \in \calD(x) \subset \calC \left(C_1, 2B \frac{\bar{L}^2}{p} \right) $, we have 
\begin{align*}
\norm{A}_*  &=\norm{\calP(A) + \calM(A)}_* \leq (1 + C_1)\norm{\calM(A)}_* \leq (1 + C_1) \sqrt{K_L} \norm{\calM(A)}_F \\
&\leq (1 + C_1) \sqrt{K_L} \norm{A}_F \leq (1 + C_1) \sqrt{K_L x}.
\end{align*}
Let $u_{it}$ be an i.i.d. Rademacher random variable. Then, $\bbE\norm{\Omega_u} \lesssim p^{1/2} \sqrt{N + T}$ where $\Omega_u = (\omega_{it} u_{it})_{N \times T}$. Hence, by using the symmetrization argument with the concentration inequality (e.g., (2.3) of \cite{koltchinskii2011oracle}), we have 
\begin{align*}
\bbE Q(x) &\leq 2 \bbE\sup_{A \in \calD(x)} \abs{ \frac{1}{NT} \sum_{it} \omega_{it}^2 A_{it}^2 u_{it} } \leq c_{3}
\bar{L} \bbE  \sup_{A \in \calD(x)} \abs{ \frac{1}{NT} \sum_{it} \omega_{it} A_{it} u_{it} } \\
& = c_{3} \bar{L} \bbE  \sup_{A \in \calD(x)} \abs{ \frac{1}{NT} tr(\Omega_u A^\top) } 
\leq c_{3} \bar{L} \bbE  \sup_{A \in \calD(x)}  \frac{1}{NT} \norm{\Omega_u} \norm{A}_*  \\
& \leq c_{4} p^{1/2} \frac{\sqrt{N+T}}{NT} \bar{L} \sup_{A \in \calD(x)} \norm{A}_* 
\leq c_{5}  p^{1/2} \frac{\sqrt{N+T}}{NT} \bar{L} \sqrt{K_L x} \\
& =  2 c_{5} \sqrt{8} \bar{L} \frac{\sqrt{K_L(N+T)}}{\sqrt{NT}} \times \sqrt{\frac{p}{32 NT}x} 
\leq  \frac{p}{32NT}x + 32 c_{5}^2 \frac{\bar{L}^2 K_L(N+T)}{NT} \\
& \leq \frac{p}{32NT}x + B \frac{\bar{L}^2 K_L(N+T)}{NT},
\end{align*}
for sufficiently large $B > 0$. Here, we use the fact that $\omega_{it}$ is bounded by $1$ and $A_{it}$ is bounded by $\bar{L}$.

\paragraph{Part 4.} Next, we bound the tail probability of $Q(x) - \bbE Q(x)$. Since $A_{it}^2/\bar{L}^2$ is bounded, we can use the Massart inequality (e.g., Theorem 14.2 of \cite{buhlmann2011statistics}) to have
$$
P\left(\frac{1}{\bar{L}^2}Q(x) > \frac{1}{\bar{L}^2} \bbE Q(x) + t \right) \leq \exp (- c_{6} NT t^2).
$$
Set $t = \frac{7xp}{32 \bar{L}^2 NT}$. Then, because 
$$
\frac{\bbE Q(x)}{\bar{L}^2} \leq \frac{p}{32\bar{L}^2 NT}x + B \frac{K_L(N+T)}{NT},
$$ 
we have
\begin{align*}
P\left(Q(x) > B \frac{K_L(N+T)}{NT}\bar{L}^2 + 0.25 \cdot \frac{xp}{ NT} \right) &= P\left(\frac{1}{\bar{L}^2}Q(x) > B \frac{K_L(N+T)}{NT} + 0.25 \cdot \frac{xp}{\bar{L}^2 NT} \right) \\
&\leq \exp \left(- c_{7} \frac{p^2 x^2}{\bar{L}^4 N T} \right) .
\end{align*}

\paragraph{Part 5.} Finally, we use the pealing argument. Note that
\begin{align*}
&P\left( \exists A \in \calC \left(C_1, 2B\frac{\bar{L}^2}{p} \right) : \calE(A) \text{ holds} \right) \leq \sum_{l=1}^\infty 
P\left( \exists A \in \Gamma_l : \calE(A) \text{ holds} \right) \\
& \leq  \sum_{l=1}^\infty  P\left( \exists A \in \calD(x_l) : \calF(A) \text{ holds} \right)  \leq \sum_{l=1}^\infty P\left( \sup_{A \in \calD(x_l)} \abs{\Omega(A) - \bbE \Omega(A) } > B K_L (N + T) \bar{L}^2 + 0.25 p x_l \right) \\
& = \sum_{l=1}^\infty P\left( Q(x_l) > B \frac{K_L (N + T)}{NT}\bar{L}^2  + 0.25 \frac{p x_l}{NT} \right)  \leq  \sum_{l=1}^\infty  \exp \left(- c_{7} \frac{p^2 x_l^2}{\bar{L}^4 N T} \right) \\
& = \sum_{l=1}^\infty  \exp \left(- c_{7} 4^{l+1} B^2 \right)
\leq \frac{\exp(- 16 c_{7} B^2)}{1 - \exp(- 16 c_{7} B^2)} < \varepsilon
\end{align*}
for any $\varepsilon > 0$ and sufficiently large $B$. Here, we use the relations $x_l^2 = p^{-2} 4^{l+1} v_n^2$ and $v_n^2 = B^2 \bar{L}^4 NT$. Therefore, we have, with probability converging to 1, that uniformly for all $A \in \calC \left(C_1, 2B\frac{\bar{L}^2}{p} \right)$, 
$$
\abs{\Omega(A) - \bbE \Omega(A) } \leq 0.5 \cdot \bbE \Omega(A) + B K_L (N + T) \bar{L}^2 ,
$$
which means that
$$
\Omega(A) \geq 0.5 \cdot \bbE \Omega(A) - B K_L (N + T) \bar{L}^2 .
$$
Then, the desired result follows from the definition of $\Omega(A)$ and $\bbE \Omega(A)$. $\square$
\bigskip

\begin{lemma}\label{lem:tech_MNAR}
(i) We have $ U_{\calW} = U_{(N_0)} H_{\calW}$ where $U_{(N_0)} = [u_1, \cdots, u_{N_0}]^\top$ and $H_{\calW} = (U_{(N_0)}^\top U_{(N_0)})^{-1/2} G_{\calW}$ for some $K \times K$ orthogonal matrix $G_{\calW}$; (ii) We have $ U_{\calT} = U H_{\calT}$ where $H_{\calT} = (U^\top U)^{-1/2} G_{\calT}$ for some $K \times K$ orthogonal matrix $G_{\calT}$; (iii) $\norm{H_\calW} \lesssim \frac{\sqrt{N}}{\sqrt{N_0}} $, $\norm{H_\calW^{-1}} \lesssim \frac{\sqrt{N_0}}{\sqrt{N}} $, $\norm{Q_{\calW}} \lesssim \frac{\sqrt{N_0}}{\sqrt{N}}$, $\norm{Q_{\calW}^{-1}} \lesssim \frac{\sqrt{N}}{\sqrt{N_0}}$ with probability converging to 1. In addition, $\norm{H_\calT}$, $\norm{H_\calT^{-1}}$,  $\norm{Q_{\calT}}$, $\norm{Q_{\calT}^{-1}}$ are bounded; (iv) $||\left(\widehat{U}_{\calT,(N_0)}^\top \widehat{U}_{\calT,(N_0)}\right)^{-1}|| = O_p\left(\frac{N}{N_0}\right)$.
\end{lemma}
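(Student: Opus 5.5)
The four parts follow from three facts. Each submatrix inherits the singular structure of $M$ through the blocks of $U$ and $V$. Assumption \ref{asp:incoherence} controls the singular values of those blocks. The perturbation bounds come from Corollary \ref{cor:singular_vector} and Weyl's inequality. Throughout, $U\in\bbR^{N\times K}$ and $V\in\bbR^{T\times K}$ are orthonormal, so $\sum_{i\le N}u_iu_i^\top=I_K$. Assumption \ref{asp:incoherence} then says that all eigenvalues of $U_{(N_0)}^\top U_{(N_0)}=\sum_{i\le N_0}u_iu_i^\top$ lie in $[c_1N_0/N,\,c_2N_0/N]$. Likewise, all eigenvalues of $V_{(T_0)}^\top V_{(T_0)}$ lie in $[c_1T_0/T,\,c_2T_0/T]$, with probability approaching one.

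For (i), write $M_{\calW}=U_{(N_0)}DV^\top$. Since $D$ is invertible, $V$ has full column rank, and $U_{(N_0)}$ has full column rank $K$ (its Gram matrix is nonsingular by the above), the column space of $M_{\calW}$ equals $\mathrm{col}(U_{(N_0)})$. The matrix $U_{(N_0)}(U_{(N_0)}^\top U_{(N_0)})^{-1/2}$ is an orthonormal basis of this $K$-dimensional space. The left singular matrix $U_{\calW}$ is another orthonormal basis of the same space, so the two differ by right multiplication by some orthogonal $G_{\calW}$. This gives $U_{\calW}=U_{(N_0)}H_{\calW}$. Part (ii) is identical, using $M_{\calT}=UDV_{(T_0)}^\top$ with $V_{(T_0)}$ of full rank; since $U^\top U=I_K$, $H_{\calT}=G_{\calT}$ is itself orthogonal.

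For (iii), the eigenvalue bounds give $\norm{(U_{(N_0)}^\top U_{(N_0)})^{-1/2}}\lesssim\sqrt{N/N_0}$ and $\norm{(U_{(N_0)}^\top U_{(N_0)})^{1/2}}\lesssim\sqrt{N_0/N}$. This yields the bounds on $H_{\calW}$ and $H_{\calW}^{-1}$ because $G_{\calW}$ is orthogonal. Recall that $Q_{\calW}^{-1}=H_{\calW}O_{\calW}^\top$, where $O_{\calW}$ is the orthogonal rotation from Corollary \ref{cor:singular_vector}. Hence $Q_{\calW}=O_{\calW}H_{\calW}^{-1}$, and the bounds on $Q_{\calW}$ and $Q_{\calW}^{-1}$ transfer directly. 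The tall case is trivial, since $H_{\calT}$, $O_{\calT}$, $Q_{\calT}$ and their inverses are all orthogonal.

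Part (iv) is the only place that needs a perturbation argument, and I expect it to be the main obstacle. Take the first $N_0$ rows of $\widehat U_{\calT}Q_{\calT}-U$ to get $\widehat U_{\calT,(N_0)}Q_{\calT}=U_{(N_0)}+R_2$, with $\norm{R_2}\le\norm{\widehat U_{\calT}Q_{\calT}-U}_F=O_p(\calR_{\calT}/\lambda_{\min,\calT})$ by Corollary \ref{cor:singular_vector}. Weyl's inequality then gives
\[
\sigma_{\min}(\widehat U_{\calT,(N_0)})\ \ge\ \frac{\sigma_{\min}(U_{(N_0)})-\norm{R_2}}{\norm{Q_{\calT}}}\ \gtrsim\ \sqrt{N_0/N}-\norm{R_2}.
\]
It therefore suffices to show that $\norm{R_2}=o_p(\sqrt{N_0/N})$. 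I would invoke the relation $\lambda_{\min,\calT}\asymp\sqrt{T_0/T}\,\lambda_{\min}$ (Lemma F.1 of \cite{choi2024matrix}, also derivable from $M_{\calT}=UDV_{(T_0)}^\top$ and the bounds on $V_{(T_0)}$). This gives
\[
\norm{R_2}\lesssim \frac{\calR_{\calT}}{\lambda_{\min}\sqrt{\delta_T}}=\sqrt{\delta_N}\cdot\frac{\calR_{\calT}}{\lambda_{\min}\sqrt{\delta_N\delta_T}}=o_p(\sqrt{\delta_N}),
\]
by the rate condition of Theorem \ref{thm:convergence_rate_mnar}. Hence $\sigma_{\min}^2(\widehat U_{\calT,(N_0)})\gtrsim N_0/N$ with probability approaching one, which is exactly (iv).
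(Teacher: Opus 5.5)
Your proposal is correct and follows essentially the paper's route. In (i)--(ii) you identify the left singular vectors of each submatrix as a normalized, rotated block of $U$; you do this via uniqueness of orthonormal bases of $\mathrm{col}(U_{(N_0)})$, where the paper explicitly constructs $G_\calW$ as eigenvectors of $\Omega_\calW$. Part (iii) reads the bounds off Assumption \ref{asp:incoherence} exactly as the paper does. In (iv) you apply Weyl to the singular values of $U_{(N_0)}+R_2$ instead of to the Gram matrix. You also make explicit, via $\lambda_{\min,\calT}\asymp\sqrt{\delta_T}\,\lambda_{\min}$ and the rate condition of Theorem \ref{thm:convergence_rate_mnar}, the $o_p(\sqrt{N_0/N})$ step that the paper only asserts.
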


\noindent \textbf{Proof.} 
(i) Let $\Omega_{\calW} = (U_{(N_0)}^\top U_{(N_0)})^{1/2}  D^2  (U_{(N_0)}^\top U_{(N_0)})^{1/2}$ and $G_{\calW}$ be a $K \times K$ matrix whose columns are the eigenvectors of $\Omega_{\calW}$ such that $\Lambda_\calW = G_\calW^\top \Omega_\calW G_\calW$ is the descending order diagonal matrix of the eigenvalues of $\Omega_\calW$. Define $H_\calW = (U_{(N_0)}^\top U_{(N_0)})^{-1/2} G_{\calW}$. Then, we have
\begin{align*}
&(U_{(N_0)} D^2 U_{(N_0)}^\top)U_{(N_0)} H_\calW  \\
&= U_{(N_0)} \left( U_{(N_0)}^{\top}U_{(N_0)} \right)^{-1/2} \left( U_{(N_0)}^{\top} U_{(N_0)} \right)^{1/2} D^2 \left( U_{(N_0)}^{\top} U_{(N_0)} \right)^{1/2} \left( U_{(N_0)}^{\top} U_{(N_0)} \right)^{1/2} H_\calW \\
& = U_{(N_0)} \left( U_{(N_0)}^{\top} U_{(N_0)} \right)^{-1/2}  \left[ \left( U_{(N_0)}^{\top} U_{(N_0)} \right)^{1/2} D^2 \left( U_{(N_0)}^{\top} U_{(N_0)} \right)^{1/2} 
 G_\calW \right] \\
& = U_{(N_0)} \left( U_{(N_0)}^{\top} U_{(N_0)} \right)^{-1/2}  \Omega_\calW G_\calW
 = U_{(N_0)} \left( U_{(N_0)}^{\top} U_{(N_0)} \right)^{-1/2} G_\calW   \Lambda_\calW \\
 & = U_{(N_0)} H_\calW  \Lambda_\calW .
\end{align*}
In addition, note that $\left( U_{(N_0)} H_\calW \right)^\top \left( U_{(N_0)} H_\calW \right) = H_\calW^{\top} U_{(N_0)}^{ \top} U_{(N_0)} H_\calW = G_\calW^\top G_\calW = I_K$. Therefore, the columns of $ U_{(N_0)} H_\calW $ are the eigenvectors of $U_{(N_0)} D^2 U_{(N_0)}^{\top}$ and the left singular vectors of $U_{(N_0)} D V^\top$. \\
(ii) The proof is the same as that of the above.\\
(iii) Since $\norm{H_\calW} \leq \norm{\left(  U_{(N_0)}^{\top} U_{(N_0)}\right)^{-1/2}}\norm{G_\calW}$ and $G_\calW$ is an eigenvector matrix, we have $\norm{H_\calW} \lesssim \frac{\sqrt{N}}{\sqrt{N_0}} $ and $\norm{H_\calW^{-1}} \lesssim \frac{\sqrt{N_0}}{\sqrt{N}} $ with high probability by Assumption \ref{asp:incoherence}. Similarly, $\norm{H_\calT}$ and $\norm{H_\calT^{-1}}$ are bounded since $U^\top U = I_K$. In addition, because $Q_{\calW}^{-1} = H_{\calW}O_{\calW}^\top$, we have $\norm{Q_{\calW}} \lesssim \frac{\sqrt{N_0}}{\sqrt{N}}$ and $\norm{Q_{\calW}^{-1}} \lesssim \frac{\sqrt{N}}{\sqrt{N_0}}$ with high probability. Because $Q_{\calT}^{-1} = H_{\calT} O_{\calT}^\top$, we have $\norm{Q_{\calT}}$ and $\norm{Q_{\calT}^{-1}}$ are bounded.\\
(iv) First, note that
\begin{align*}
\norm{\widehat{U}_{\calT,(N_0)}^\top \widehat{U}_{\calT,(N_0)} - Q_{\calT}^{-\top} U_{N_0}^\top U_{N_0} Q_{\calT}^{-1} }
\lesssim \norm{U_{N_0} Q_{\calT}^{-1}} \norm{\widehat{U}_{(N_0),\calT}  - U_{(N_0)} Q_{\calT}^{-1} }_F &= O_p\left( \frac{\sqrt{N_0}}{\sqrt{N}}  \frac{\calR_{\calT}}{\lambda_{\min,\calT}} \right) \\
&= o_p\left( \frac{N_0}{N} \right)
\end{align*}
since $\norm{U_{N_0}} = O_p\left( \frac{\sqrt{N_0}}{\sqrt{N}}\right)$ and $\norm{\widehat{U}_{\calT}  - U Q_{\calT}^{-1} }_F = O_p\left( \frac{\calR_{\calT}}{\lambda_{\min,\calT}}\right)$. Then, because 
$$
\lambda_{\min} (Q_{\calT}^{-\top} U_{N_0}^\top U_{N_0} Q_{\calT}^{-1}) \geq \lambda^2_{\min}(Q_{\calT}^{-1}) \lambda_{\min} (U_{N_0}^\top U_{N_0}) \geq c \frac{N_0}{N},
$$
for some constant $c > 0$, we have with probability converging to 1 that
\begin{align*}
\lambda_{\min} \left( \widehat{U}_{\calT,(N_0)}^\top \widehat{U}_{\calT,(N_0)} \right) 
&\geq \lambda_{\min} (Q_{\calT}^{-\top} U_{N_0}^\top U_{N_0} Q_{\calT}^{-1}) - \norm{\widehat{U}_{\calT,(N_0)}^\top \widehat{U}_{\calT,(N_0)} - Q_{\calT}^{-\top} U_{N_0}^\top U_{N_0} Q_{\calT}^{-1} } \\
&\geq \frac{c}{2} \frac{N_0}{N}.
\end{align*}
Hence, $||\left(\widehat{U}_{\calT,(N_0)}^\top \widehat{U}_{\calT,(N_0)}\right)^{-1}|| = O_p\left(\frac{N}{N_0}\right)$.
\end{document}